\newcommand{\cG}{\mathcal{G}}
\newcommand{\E}{\mathbb{E}}
\newcommand{\xl}{{x_l}}
\newcommand{\xu}{{x_h}}
\newcommand{\R}{\mathbb{R}}
\newcommand{\res}{\,\rule[-4pt]{0.4pt}{11pt}\,{}}
\newcommand{\Gp}{G^{+}}
\newcommand{\Gm}{G^{-}}
\newcommand{\Gpm}{G^{\pm}}
\newcommand{\gm}{g^{-}}
\newcommand{\gpm}{g^{\pm}}
\renewcommand{\a}{x^0}
\renewcommand{\b}{x^1}
\DeclareMathOperator{\id}{id}
\DeclareMathOperator{\Tri}{Tri}
\newtheorem{theorem}{Theorem}[]
\newtheorem{assumption}{Assumption}[]
\newtheorem{lemma}[]{Lemma}
\newtheorem{proposition}[]{Proposition}
\newtheorem{cor}[]{Corollary}
\newtheorem{definition}[]{Definition}
\newtheorem{example}[]{Example}
\begin{document}

\title{The Texas Shoot-Out under Knightian Uncertainty}

  \author{Gerrit Bauch\thanks{Center for Mathematical Economics, Bielefeld University. Financial support through the German Research Foundation Grant Ri-1128-9-1, the German Academic Exchange Service and the Bielefeld Graduate School of Economics is gratefully acknowledged. For detailed comments and suggestions on the working paper version I would like to thank Roland Stauber. I am grateful for the hospitality of the Economics Department at the University of Arizona and helpful comments of their members as well as for fruitful discussions with participants of the Stony Brook conference on Game theory 2022 and the SAET 2022. Corresponding e-mail address \href{mailto:gerrit.bauch@uni-bielefeld.de}{gerrit.bauch@uni-bielefeld.de}} \and Frank Riedel\thanks{Center for Mathematical Economics, Bielefeld University and School of Economics, University of Johannesburg. Financial support through the German Research
Foundation Grant Ri-1128-9-1 is gratefully acknowledged.}}
\date{ \today}

\maketitle

\begin{abstract}\noindent
We investigate the allocation of a co-owned company to a single owner using the Texas Shoot-Out mechanism with private valuations. We identify Knightian Uncertainty about the peer's distribution as a reason for its deterrent effect of a premature dissolving. Modeling uncertainty by a distribution band around a reference distribution $F$, we derive the optimal price announcement for an ambiguity averse divider. The divider hedges against uncertainty for valuations close to the median of $F$, while extracting expected surplus for high and low valuations. The outcome of the mechanism is efficient for valuations around the median. A risk neutral co-owner prefers to be the chooser, even strictly so for any valuation under low levels of uncertainty and for extreme valuations under high levels of uncertainty. If valuations are believed to be close, less uncertainty is required for the mechanism to always be efficient and reduce premature dissolvements.
\end{abstract}

\noindent
{\footnotesize{\it Key words and phrases:} Knightian Uncertainty in Games, Texas Shoot-Out, Partnership Dissolution\\
{\it  JEL subject classification: C72, D74, D81, D82 } }

\section{Introduction}

Even the friendliest relationships can eventually turn sour. Co-owned companies are no exception, being at risk of feuding partners. In the event that co-owners are not able to cooperatively lead the company effectively anymore, it is reasonable to no longer sustain the partnership. There are two main kinds of solution concepts available to govern the dissolution of a partnership. On the one hand, \emph{external solutions} require a third party or stock market to serve as an outside option, e.g. a sale of the company to a prospective buyer or a liquidation in order to distribute asset shares can terminate a relationship and pay off the co-owners. However, those options are often not desirable since a liquidation goes along with a loss of jobs, pay-off of the business's debts and far reaching tax consequences while a new owner of the company may not be available on short notice. Furthermore, any external exit solution suffers from possible loss in market value of the company due to the publicly observed hostility within. Even more, private valuations of the co-owners are not taken into account who still might want to lead the company even without a partner. On the other hand, \emph{internal solutions} can mitigate the aforementioned drawbacks, while also taking into account the co-owners valuation for the company. Writing any such \emph{exit mechanism} into the buy-sell contract when founding a business allows an internal solution to be immediately available independent of whether or not an external outside option is at hand.

For a two party co-owned company, a frequently used exit mechanism is called the \emph{Texas Shoot-Out}. Many consulting platforms\footnote{See e.g., {ClaytonCapitalPartners} or {ExitPlanningSolutions} at \url{https://claytoncapitalpartners.com/navigator/issue66-biz_continuity_part3.html} and \url{https://www.exitplanning.com/blog/texas-shootout}.}
recommend it for its simplicity and effect as a deterrent to a premature dissolving. The Texas Shoot-Out works as follows: Any partner has the right to initiate the exit mechanism at any time she desires, becoming the so-called ''divider''. The divider commits to a price $p$ for the sole ownership of the company. Her partner, called the ''chooser'' now has exactly two options. Either, she buys the ownership and interest of the divider at price $p$ or sells hers at price $p$. In the end, only one partner remains as the sole owner of the business, having compensated the former co-owner. Although this mechanism is independent of actual shares of the co-owners, it is typically recommended for equally sized shareholders.

While simple, the Texas Shoot-Out is notorious for its deterrent effect on feuding partners: Neither partner knows in which role they will eventually find themselves, what price they might face in case the co-owner initiated the exit mechanism or what choice to expect of her if they trigger the mechanism themselves. The so created uncertainty leads to a very purposeful execution of the mechanism and thus encourages conciliation in times of minor conflicts while at the same time offering a promptly available tool in order to solve a dispute if the partnership reached a dead end.

The focus of this article is to precisely characterize the deterrent component of the Texas Shoot-Out. Since there is no objective information about a peer's valuation, it is natural to assume that co-owners face imprecise probabilistic information about the valuation of their peers. We model this \emph{Knightian Uncertainty} by means of a set of priors about the distribution function of the co-owner's valuation. We analyze such a prior set in the form of \emph{distribution bands} since they allow for more flexibility than uncertainty about the valuation alone, being able to incorporate bounds for main characteristics of the distribution, such as its expectation or variance. More precisely, a co-owner is willing to entertain the belief that the partner's valuation is close to a reference distribution $F$, but prefers a robust approach to the uncertainty about the exact distribution. The proposed framework thus models the Texas Shoot-Out in between the Bayesian (single prior) setting and the full uncertainty (all priors) case explaining what happens for intermediate levels of Knightian Uncertainty about the co-owner's distribution. Requiring mainly strict quasiconcavity of the induced payoff function, our main finding is the astonishing link that connects those two. The optimal price announcement of a divider is an increasing function in her own valuation with two kinks left and right from the median of $F$. In between, she announces half her valuation, thus hedging herself against uncertainty. For low (resp. high) valuations she plays as if facing the Bayesian setting induced by the distribution function being stochastically dominated by (resp. stochastically dominating) all other distributions in her prior set. Increasing the level of uncertainty about the co-owner's distribution also increases the interval around the median that corresponds to a full-hedging strategy. This completely pins down the continuous transformation from the Bayesian case to the one of full uncertainty: Hedging behavior starts at the median of $F$ and spreads to more and more extreme low and high valuations continuously with an increase in uncertainty. As long as uncertainty is not too high, dividers with very low and high valuations make strategic price announcements. E.g., for low valuations, the divider states a price exceeding half her valuation, expecting the chooser to still accept the offer and thus generating a revenue for herself.

Despite allowing for a robust and thus more flexible and realistic approach to partnership dissolvements, introducing Knightian Uncertainty to the Texas Shoot-Out elicits not only desirable properties but also explains its notorious deterrent effect for comparative statics analyses.\\
Efficiency of the mechanism, in the sense of giving the company to the co-owner with highest valuation, is increased by a boost to uncertainty: More valuation profiles will lead to an efficient allocation compared to the classical Bayesian setting. Especially, a precisely determined range of valuations of a divider close to the median of $F$ can be guaranteed to produce efficient allocations whereas the Bayesian setting could only assure this for a single valuation. This range is the larger the higher the level of uncertainty about the chooser's valuation.\\
Incentives for a profitable selfish exit are diminished the higher the uncertainty no matter the role a co-owner finds herself in, especially for the initiator (divider): Fixing a valuation for the company, the interim worst-case expected utility for a chooser is always higher than for a divider. This preference is always strict for low levels of uncertainty and stays strict only for very low and high valuations if uncertainty is increased until it is too high and both are indifferent. This finding explains why the Texas Shoot-Out is preventing a premature selfish end of the partnership: Only co-owners with an extreme low or high valuation expect a revenue exceeding half their valuation from initiating the exit mechanism. However, they thus become the divider and suffer from a lower interim expected payoff than the other co-owner. Hence, the more uncertainty, the fewer types will execute the Texas Shoot-Out without good reason to do so and expect to get a utility close to half their valuation. At the same time, the Texas Shoot-Out does offer a fair way out of a dead end since every co-owner expects to get no less than half her valuation.\\
Efficiency and deterrence of the Texas Shoot-Out do not require complete ignorance of the partner's valuation. It seems likely that over the course of the partnership, co-owners share a similar (objective) belief about the company's value. In other words, a co-owner suspects the peer's valuation to be close or correlated to her own. Surprisingly, this additional intelligence will decrease a co-owner's incentive for a selfish exit even for moderate amounts of uncertainty and would lead again to more efficiency and conservative price announcements if the Texas Shoot-Out was triggered.

Our contribution contrasts two standpoints on the Texas Shoot-Out in the literature by connecting them. In the well-known case of a single distribution $F$ (\cite{mcafee92}) the optimal price announcement describes a strictly increasing function above the line $\tfrac{x}{2}$ that touches it exactly once - at the median of $F$. Though the chooser can perfectly identify the divider's type in equilibrium, this strategic price announcement can lead to an inefficient outcome in which the co-owner with lower expectation obtains the company. In the maxmin equilibrium (\cite{van2020}), a co-owner prepares for the worst-case the co-owner can inflict on herself by always offering exactly half her valuation. The allocation will thus always be efficient at the cost of extreme behavioral assumptions.

The Texas Shoot-Out can be interpreted as variant of a \emph{cake-cutting mechanism} (see \cite{moulin2019fair} for a survey). Typically stated for divisible objects these mechanisms describe discrete or continuous procedures for proportional or envy-free allocations (\cite{brams1995envy}, \cite{brams1997moving}). Principally, any cake-cutting mechanism can be extended to settings with indivisible objects by introducing transfer payments, as done in the Texas Shoot-Out, or having a selling third party with sufficient information (\cite{glazer1989efficient}). The question which agent is cutting or choosing is significant with regard to an efficient allocation as pointed out by \cite{de2008efficient}. For the Texas Shoot-Out specifically, \cite{brooks2010trigger} analyze efficiencies and limitations under the assumption that the roles are randomly assigned or delegated to the more informed party (\cite{landeo2013shotgun}). For publicly known shares and private valuations of an indivisible good, \cite{cramton1987dissolving} derive the class of incentive compatible and individual rational mechanisms that allocate the sole ownership of an object to a shareholder which reconciles the negative result of \cite{myerson1983efficient} in which for a sole owner the object cannot be efficiently allocated.


The remainder of the article is organized as follows. Section \ref{Section: Bayes and maxmin} gives a formal introduction to the Texas Shoot-Out and briefly summarizes the results of the stochastic setting and the one of maxmin strategies, known from the literature. Section \ref{Section: Knight} adds Knightian Uncertainty to the setup. The main result - the optimal price announcement under uncertainty - is stated in Section \ref{Section: Optimal Price Announcement} and reveals the link between the for price announcements from the previous section. In addition to the efficiency of the allocation (Section \ref{Section: Efficiency}), interim expected utility is derived and compared between the co-owners (Section \ref{Section: interim utility}), explaining the deterrent effect of a premature selfish dissolution of the Texas Shoot-Out. Section \ref{Section: correlation} argues that only small amounts of uncertainty are necessary for the Texas Shoot-Out to prove its qualities if correlation of valuations is adopted. How our formal results are to be interpreted in and apply  more general settings is discussed in Section \ref{Section: Uncertainty only about valuations}. Finally, Section \ref{Section: Conclusion} wraps up our findings.

\section{Bayesian and Maxmin Equilibrium}\label{Section: Bayes and maxmin}

Consider two\footnote{We consider only the simplest two-player case. More general procedures for proportional or envy-free cake-cutting for $n$ players involve many more steps or 'trimming techniques', see \cite{brams1995envy}. We do not know of any of these $n$-player mechanisms finding actual use in company contracts for $n \geq 3$.} equal owners of a company who have come to the point where they want to dissolve their partnership. Both owners have a private valuation $x_D,x_C$ in a compact interval $X:= [\xl,\xu] \subset \R$ for being the sole owner of the company. In the \emph{Texas Shoot-Out}, the first player (''divider'') announces a price\footnote{We explicitly allow for negative price bids, of which we think as compensation, which might occur if a valuation for the company is negative. That way, the Texas Shoot-Out does not only apply to allocation decisions of desired objects, but also to undesirable objects, such as debt.} $p \in \R$ that she is willing to offer for obtaining or forsaking the sole-ownership. In the next step, the second player (''chooser'') either pays the divider $p$ and becomes the sole owner of the company or sells her shares at price $p$ for the divider to obtain the company. This procedure is the simplest cake-cutting mechanism with compensation payments: The Divider cuts, the chooser selects her piece.

The chooser has all her payoff-relevant information at hand and optimizes by simply 'picking the larger piece of the cake' from her perspective: Facing an offered price $p$ knowing her valuation $x_C$, she clearly takes the offer and sells her shares if the offered price exceeds her private value of the company minus the required payment, i.e. $p > x_C -p$, and buys the divider's shares of the company to become the sole owner if $p < x_C -p$. In the following, we stick to this behavior which must be part of any Bayesian equilibrium on-equilibrium-path\footnote{To rule out counter-intuitive off-equilibrium-path behavior and thus making this behavior unique, a refinement for extensive form games featuring sequential rationality can be applied, e.g., weak perfect Bayesian equilibrium or sequential equilibrium \cite{sequential}.} and is free of any assumptions on the distribution of valuations or the played strategy of the divider: The chooser surely maximizes her expected payoff by doing so for any of her valuations and any price $p$.
In the case of indifference, i.e. $p = x_C / 2$, we impose the tie-breaking action 'sell' as this is more in line with writing out the expected payoff of the divider by means of (right-continuous) distribution functions. Note that the so-defined strategy of is always a best reply for the chooser.

The divider's ideal action is more involved and will take up most of the upcoming analysis. She faces a trade-off between bidding a price either high enough to get the company while not going into debt or low enough to get payed while still getting something out of the deal. Let us assume that the divider has a belief over the chooser's valuation  given by the distribution function $F$ on $X$ with a strictly positive and differentiable density function $f$. Denote by $\mu_F$ its (unique) median. Furthermore, we allow the divider to entertain a utility function which is twice continuously differentiable, concave and strictly increasing with $u'>0$. Taking into account the aforementioned response of chooser\footnote{Recall that, by right-continuity of $F$, we implicitly assume that the chooser sells her shares if she is indifferent between accepting and rejecting the offer, i.e. if $2p=x_C$. This can be taken as a convention and is in fact only a null event if $F$ is atom-less. Allowing for point masses, this tie breaking rule determines whether or not the supremum of the payoff function can be attained, c.f. Section \ref{Section: Uncertainty only about valuations}}, the divider's (interim) expected payoff given her valuation $x_D$ is
\begin{equation*}
    \pi_F(p\mid x_D) = u(x_D-p) \cdot F(2p)+ u(p) \cdot (1- F(2p)).
\end{equation*}

\paragraph{Maxmin and Hedging}
Let us first consider the maxmin strategy that was recently discussed by \cite{van2020}. It supports the idea that the opponent will always choose the action hurting one the most - not taking into account their own losses. Note that the divider can remove any uncertainty about the payoff by simply bidding half of her own valuation, i.e. $\bar{p}=\frac{x_D}{2}$, guaranteeing a payoff of 
$$\pi_F( \bar{p} \mid x_D)= u \left(\frac{x_D}{2}\right) \cdot  F(x_D)+ u\left(\frac{x_D}{2}\right) \cdot (1-F(x_D))= u\left(\frac{x_D}{2}\right).$$
The choice $\frac{x_D}{2}$ thus hedges the divider's uncertainty completely and we call this strategy \emph{full-hedging}.

If the divider offered a price $p \neq \tfrac{x_D}{2}$ and thus (exactly) one of the two possible outcomes $u(x_D-p), u(p)$ is strictly smaller than $u(\tfrac{x_D}{2})$, the chooser would hurt the divider the most by choosing the action leading to the divider getting the lower outcome. Thus, the following statement holds.

\begin{proposition}[\cite{van2020}]
In maxmin equilibrium, the divider bids $\bar{p}=\frac{x_D}{2}$ and the chooser sells her shares if
$\bar{p} > x_C /2 $ and buys the divider' shares if $\bar{p} < x_C/2$.
\end{proposition}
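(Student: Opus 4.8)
The plan is to reduce the divider's maxmin problem to a one-dimensional optimization and then exploit the strict monotonicity of $u$. First I would fix the divider's valuation $x_D$ and record the only two payoffs a bid $p$ can realize: if the chooser buys, the divider forsakes the company and collects $p$, yielding $u(p)$; if the chooser sells, the divider acquires the company and pays $p$, yielding $u(x_D-p)$. Under the maxmin criterion the divider evaluates a bid by its worst case over the chooser's two actions, so her objective is the worst-case value $V(p):=\min\{u(p),\,u(x_D-p)\}$, which is to be contrasted with the expected payoff $\pi_F(p\mid x_D)$ of the stochastic setting.

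Next I would use $u'>0$ to pull the minimum inside the utility, writing $V(p)=u\bigl(\min\{p,\,x_D-p\}\bigr)$. The inner ``tent'' function $p\mapsto\min\{p,\,x_D-p\}$ is concave, equals $p$ for $p\le x_D/2$ and $x_D-p$ for $p\ge x_D/2$, and therefore attains its unique maximum $x_D/2$ exactly at $p=x_D/2$. Since $u$ is strictly increasing, both the location and the uniqueness of the maximizer transfer to $V$: for every $p\neq x_D/2$ one has $\min\{p,\,x_D-p\}<x_D/2$ and hence $V(p)<u(x_D/2)=V(\bar p)$. This is precisely the observation recorded before the statement, namely that any deviation from $x_D/2$ leaves exactly one of the two realized utilities strictly below $u(x_D/2)$, the action the adversarial chooser then selects. (Feasibility is immediate, as any real price is admissible and $x_D/2\in\R$.)

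Finally I would close the equilibrium. The full-hedging bid $\bar p=x_D/2$ is the divider's \emph{unique} maxmin-optimal action, and against it the chooser plays the best response established earlier in the section: she sells when $\bar p>x_C/2$ and buys when $\bar p<x_C/2$, with the stated convention of selling at indifference. Because this response is a best reply at \emph{every} announced price, pairing it with the maxmin bid yields the claimed equilibrium.

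I do not expect a genuine technical obstacle: once the worst-case payoff is identified the argument is a short optimization. The only point demanding care is conceptual rather than computational, namely making explicit that the maxmin divider treats the chooser as fully adversarial, so that the relevant value of a bid is the minimum over the chooser's two actions rather than the expectation $\pi_F(p\mid x_D)$; everything else follows from the strict monotonicity of $u$ and the shape of the tent function.
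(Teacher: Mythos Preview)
Your proposal is correct and follows essentially the same argument the paper gives in the paragraph immediately preceding the proposition: both observe that for any $p\neq x_D/2$ exactly one of $u(p),u(x_D-p)$ falls strictly below $u(x_D/2)$, so an adversarial chooser enforces that lower payoff, making $\bar p=x_D/2$ the unique maxmin bid. Your tent-function formulation $V(p)=u(\min\{p,x_D-p\})$ is just a compact rewriting of this same idea.
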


The outcome of the maxmin equilibrium is efficient in the sense that the player with the highest valuation obtains the company.

\paragraph{Bayesian Equilibrium}
The Texas Shoot-Out can be modeled as a Bayesian Game by introducing a (not necessarily common) joint CDF(s) on the two agent's valuations. For ease of exposition, we stick to an i.i.d. distribution of valuations\footnote{More generally, we allow for correlation in Section \ref{Section: correlation}.} given by a CDF $F$ on $X$ in the following. Assume that $F$ is atom-less and admits a strictly positive and differentiable density function $f$. Denote by $\mu_F$ its (unique) median.
In Bayes-Nash equilibrium, the divider seeks to maximize her expected payoff given the aforementioned strategy of the chooser by picking a suitable price $p$ for each of her valuations $x_D$. She faces a trade-off in doing so: The higher the price, the more likely the chooser will sell her shares and the divider ends up paying the exorbitant price herself. The lower the price, the more likely the chooser will take the company and pays her a knocked-down price. Thus, the optimal price announcement will be a moderate and deliberate decision. Our first result thus is an immediate boundary for the divider's price decision.

\begin{lemma}\label{Lemma: Priceinterval Bayes}
The divider will only announce prices fulfilling $2p \in [\xl,\xu]$.
\begin{proof}
All of the proofs are delegated to the appendix.
\end{proof}
\end{lemma}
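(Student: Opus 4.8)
The plan is to show that any price $p$ with $2p \notin [\xl,\xu]$ is strictly dominated by a price on the boundary of this interval, so that an optimizing divider never announces such a $p$. The key observation is the boundary behaviour of the reference CDF: since $F$ is a distribution on $X=[\xl,\xu]$ that is atom-less with a strictly positive density on $X$, it is flat outside its support, i.e. $F(t)=0$ for all $t\le \xl$ and $F(t)=1$ for all $t\ge \xu$. I would substitute these values into the payoff $\pi_F(p\mid x_D)=u(x_D-p)F(2p)+u(p)\bigl(1-F(2p)\bigr)$ and then invoke only the strict monotonicity $u'>0$.

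First I would treat the case $p>\xu/2$, i.e.\ $2p>\xu$. Here $F(2p)=1$, so the payoff collapses to $\pi_F(p\mid x_D)=u(x_D-p)$, which is strictly decreasing in $p$. Comparing with the boundary choice $p=\xu/2$, where $2p=\xu$ and $F(\xu)=1$ give $\pi_F(\xu/2\mid x_D)=u(x_D-\xu/2)$, strict monotonicity yields $\pi_F(p\mid x_D)<\pi_F(\xu/2\mid x_D)$ for every $p>\xu/2$. Hence no price above $\xu/2$ can be optimal.

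Symmetrically, for $p<\xl/2$, i.e.\ $2p<\xl$, we have $F(2p)=0$ and the payoff reduces to $\pi_F(p\mid x_D)=u(p)$, which is strictly increasing in $p$. Against the boundary choice $p=\xl/2$, where $F(\xl)=0$ gives $\pi_F(\xl/2\mid x_D)=u(\xl/2)$, monotonicity gives $\pi_F(p\mid x_D)=u(p)<u(\xl/2)=\pi_F(\xl/2\mid x_D)$ for every $p<\xl/2$. Thus no price below $\xl/2$ can be optimal either, and combining the two cases leaves $\xl/2\le p\le \xu/2$, i.e.\ $2p\in[\xl,\xu]$.

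Since the argument only invokes $u'>0$ together with the fact that $F$ equals $0$ (resp.\ $1$) outside the support $X$, I do not anticipate any real obstacle; the only points demanding care are the exact boundary values $F(\xl)=0$ and $F(\xu)=1$ and keeping track of the direction of monotonicity in each regime, so that the inequalities come out strict. I would also emphasize that this is a pure domination argument and therefore presupposes neither the existence nor the uniqueness of an optimal price.
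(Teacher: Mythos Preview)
Your argument is correct and rests on the same core observation as the paper: outside $[\xl/2,\xu/2]$ the payoff collapses to either $u(p)$ or $u(x_D-p)$, which is then strictly dominated. The only difference is the choice of dominating price. You push the candidate back to the nearest boundary point $\xl/2$ or $\xu/2$, whereas the paper compares directly to the full-hedging price $\bar p=x_D/2$, noting $u(p)<u(x_D/2)$ for $2p<\xl$ and $u(x_D-p)<u(x_D/2)$ for $2p>\xu$.

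The paper's variant is marginally more robust: it uses only that $F(t)=0$ for $t<\xl$ and $F(t)=1$ for $t>\xu$, so it still goes through verbatim for distributions with atoms at $\xl$ or $\xu$ (which matters later in the Knightian setting, where $G_0$ may jump at $\xu$ and $G_1$ at $\xl$). Your version additionally uses $F(\xl)=0$ and $F(\xu)=1$ to evaluate $\pi_F$ at the boundary, which is fine under the atom-less hypothesis of Section~\ref{Section: Bayes and maxmin} but would need the small patching you hint at (``the only points demanding care'') if atoms were allowed.
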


For the uniqueness of an optimal price announcement we impose strict quasiconcavity on the payoff function. For sake of completeness, we include its definition and relevant properties in Appendix \ref{Appendix: Quasiconcavity}.

\begin{assumption}\label{AssF}
For a considered valuation $x_D$, let the divider's expected payoff function $\pi_F(p \mid x_D)$ be strictly quasiconcave in $p$ for $2p \in [\xl,\xu]$.
\end{assumption}

In the literature, it is common to impose monotone hazard rate conditions on the prior. We show that these conditions are sufficient for strict quasiconcavity of the resulting payoff function, eliciting quasiconcavity as the underlying driver of well-known results.

\begin{lemma}\label{Lemma: Suff McAfee}
Assumption \ref{AssF} is satisfied for all valuations $x_D$ if the \emph{standard hazard rate conditions} (SHRCs, \cite{mcafee92}) hold, i.e.
\begin{align}
    \frac{\partial}{\partial x} \left( x + \frac{F(x)}{f(x)} \right) \geq 0 \quad \text{and} \quad \frac{\partial}{\partial x} \left( x - \frac{1-F(x)}{f(x)} \right) \geq 0.
\end{align}
\end{lemma}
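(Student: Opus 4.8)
The plan is to establish strict quasiconcavity via a second-order test at stationary points: I would show that at every interior $p$ (i.e.\ with $2p\in(\xl,\xu)$, so that $F(2p)\in(0,1)$ by strict positivity of $f$) satisfying $\partial_p\pi_F(p\mid x_D)=0$ one has $\partial_p^2\pi_F(p\mid x_D)<0$. This suffices: if $\partial_p\pi_F$ vanished on an interval those would be stationary points with zero second derivative, and if it changed sign from $-$ to $+$ at some interior point, that point would be a stationary local minimum with nonnegative second derivative; both contradict the test. Hence $\partial_p\pi_F$ is positive then negative with at most one sign change, so $\pi_F$ is single-peaked, i.e.\ strictly quasiconcave (the relevant characterization is collected in Appendix~\ref{Appendix: Quasiconcavity}).

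First I would differentiate. Abbreviating $F=F(2p)$, $f=f(2p)$, $f'=f'(2p)$, the first-order condition reads
\[
\partial_p\pi_F=-u'(x_D-p)F+u'(p)(1-F)+2f\big(u(x_D-p)-u(p)\big)=0,
\]
and a second differentiation gives
\[
\partial_p^2\pi_F=\underbrace{u''(x_D-p)F+u''(p)(1-F)}_{\le 0}-4f\big(u'(x_D-p)+u'(p)\big)+4f'\big(u(x_D-p)-u(p)\big).
\]
The first bracket is nonpositive by concavity of $u$, and the middle term is strictly negative since $f>0$ and $u'>0$; the only term of indefinite sign is the one carrying $f'$.

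The crux is to neutralize this term at a stationary point using the SHRCs. A direct computation turns the hazard-rate conditions into two-sided bounds on $f'$: since $\tfrac{d}{dx}\big(x+\tfrac{F}{f}\big)=2-\tfrac{Ff'}{f^2}\ge 0$ and $\tfrac{d}{dx}\big(x-\tfrac{1-F}{f}\big)=2+\tfrac{(1-F)f'}{f^2}\ge 0$, one gets $-\tfrac{2}{1-F}\le \tfrac{f'}{f^2}\le \tfrac{2}{F}$ at every interior point. Writing $N:=u'(x_D-p)F-u'(p)(1-F)$, the first-order condition lets me substitute $2f\big(u(x_D-p)-u(p)\big)=N$, so the rogue term becomes $\tfrac{2f'}{f}N=2f\cdot\tfrac{f'}{f^2}\cdot N$. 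If $N\ge 0$ I apply the upper bound and combine with the middle term to obtain $\partial_p^2\pi_F\le -\tfrac{4f}{F}\,u'(p)<0$; if $N<0$ the lower bound applies (the inequality flips) and yields $\partial_p^2\pi_F\le -\tfrac{4f}{1-F}\,u'(x_D-p)<0$. In both cases the concavity terms only reinforce the conclusion, establishing the second-order test and hence Assumption~\ref{AssF}.

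The main obstacle I anticipate is exactly the generality of $u$. In the risk-neutral case $u=\id$ the derivative factors as
\[
\partial_p\pi_F=2f(2p)\Big[x_D-\tfrac12\big((2p+\tfrac{F(2p)}{f(2p)})+(2p-\tfrac{1-F(2p)}{f(2p)})\big)\Big],
\]
whose bracket is $x_D$ minus the average of the two virtual valuations; monotonicity of these under the SHRCs gives single crossing immediately. A general concave $u$ destroys this clean factorization, so the real work lies in showing that the two one-sided SHRC bounds, selected according to the sign of $N$, still dominate the $f'$-term at stationarity — which is what the case split above accomplishes.
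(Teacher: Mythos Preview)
Your proof is correct and follows the same strategy as the paper's: both establish that $\partial_p^2\pi_F<0$ at every interior stationary point (using the FOC to eliminate $u(x_D-p)-u(p)$) and then deduce strict quasiconcavity from this second-order test. The only difference is cosmetic: rather than your case split on the sign of $N$, the paper regroups the second derivative at a critical point into the form $-f\cdot\partial_p\bigl(2p+F/f\bigr)\cdot u'(x_D-q)-f\cdot\partial_p\bigl(2p-(1-F)/f\bigr)\cdot u'(q)$ plus the concavity terms, so the two SHRCs appear directly as nonnegative factors and no case distinction is needed.
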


Put simply, a strictly quasiconcave function $f$ has a unique point $m_f$ such that $f$ is strictly increasing before and strictly decreasing after $m_f$. It thus should not come as a surprise that it is related to the price that maximizes expected payoff and can be identified as the solution to the resp. FOC under smoothness assumptions on $F$. However, in the interplay with $F$, the optimal price reveals intuitive qualitative properties discovered by \cite{mcafee92} for a CDF fulfilling the SHRC and now stated under the assumption of strict quasiconcavity. Especially, there will be strategic choice to announce a price above what the divider thinks her shares are worth when she has a low valuation with the goal of getting paid more in the expected case of the chooser buying ones shares (and vice versa for high valuations).

\begin{proposition}\label{Proposition: McAfee}
The optimal price announcement of the divider is uniquely determined, given by $m(x_D) := m_{\pi_F(.\mid x_D)}$. Furthermore, $m(x_D)$ is strictly increasing in $x_D$ and fulfills
\begin{equation*}
    \begin{cases}
    x_D &< 2m(x_D) < \mu_F\\
    x_D &= 2m(x_D) = \mu_F\\
    x_D &> 2m(x_D) > \mu_F
    \end{cases}
    \quad \text{ if and only if } \quad 
    \begin{cases}
    x_D &< \mu_F\\
    x_D &= \mu_F\\
    x_D &> \mu_F
    \end{cases},
\end{equation*}
where $\mu_F$ is the median of $F$. Together with the chooser's strategy, this constitutes a Bayesian Nash equilibrium.

Furthermore, for any valuation $x$, interim expected utility is strictly larger for the chooser.
\end{proposition}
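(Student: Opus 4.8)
The plan is to compare the two interim utilities by coupling both roles to a common realization of the opponent's valuation and proving strict domination pointwise (almost surely), which then integrates to the claim. Fix a valuation $x$. When our player is the divider she announces $m(x)$ and, writing $y\sim F$ for the chooser's valuation, obtains the realized utility
\[
O_D(y)=\begin{cases} u(x-m(x)) & y<2m(x),\\ u(m(x)) & y>2m(x),\end{cases}
\]
so that her interim utility is $V_D(x)=\E_y[O_D(y)]=\pi_F(m(x)\mid x)$. When our player is the chooser, the divider's valuation is $y\sim F$ and she announces $m(y)$; since $u$ is strictly increasing the optimal reaction yields realized utility $O_C(y)=u\big(\max\{x-m(y),\,m(y)\}\big)$ and interim utility $V_C(x)=\E_y[O_C(y)]$. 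Both expectations are taken over the same distribution $F$, so it suffices to show $O_C(y)>O_D(y)$ for almost every $y$.

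Two ingredients drive the comparison. First, $\max\{x-q,q\}\ge x/2$ for every $q$, so the chooser always secures $O_C(y)\ge u(x/2)$. Second, the preceding Proposition pins down the position of $2m(x)$: it lies strictly between $x$ and $\mu_F$, with equality only when $x=\mu_F$. This identifies, in each region of $y$, whether the divider collects her high or her low outcome, and crucially shows that her low outcome lies strictly below $u(x/2)$. Indeed, for $x<\mu_F$ one has $m(x)>x/2$, so $u(x-m(x))$ is the low value (collected on $y<2m(x)$), whereas for $x>\mu_F$ one has $m(x)<x/2$, so $u(m(x))$ is the low value (collected on $y>2m(x)$). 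On the respective low-outcome region the first ingredient already gives $O_C(y)\ge u(x/2)>O_D(y)$.

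It remains to dominate the divider on the region where she collects her high outcome, and this is the step that genuinely uses the monotonicity of $m$; I expect it to be the main obstacle, since there the constant floor $u(x/2)$ no longer suffices and one must instead compare the random equilibrium price $m(y)$ to $m(x)$. For $x<\mu_F$ the high outcome $u(m(x))$ is collected on $y>2m(x)$, and since $2m(x)>x$ this forces $y>x$, hence $m(y)>m(x)$ and $O_C(y)\ge u(m(y))>u(m(x))=O_D(y)$. The case $x>\mu_F$ is symmetric: the high outcome $u(x-m(x))$ is collected on $y<2m(x)<x$, so $y<x$ gives $m(y)<m(x)$, whence $O_C(y)\ge u(x-m(y))>u(x-m(x))$. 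Finally, for $x=\mu_F$ the divider full-hedges and $O_D\equiv u(x/2)$, while $O_C(y)\ge u(x/2)$ with equality only at the single point $y=x$, a null event by atomlessness of $F$. In every case $O_C(y)>O_D(y)$ almost surely, so integrating against $F$ yields $V_C(x)>V_D(x)$.
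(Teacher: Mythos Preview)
Your coupling argument for the interim-utility comparison is correct. On the divider's low-outcome region the floor $O_C(y)\ge u(x/2)$ already dominates; on her high-outcome region you use $x\lessgtr 2m(x)$ to trap $y$ on the correct side of $x$ and then invoke strict monotonicity of $m$, which is exactly the right move. One cosmetic point: what you call ``the preceding Proposition'' (the location of $2m(x)$ relative to $x$ and $\mu_F$) is part of the very proposition you are proving, so those claims must be established first; the paper does this separately via the first-order condition and the sign of $\partial^2\pi_F/\partial x_D\,\partial p$.

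For this clause the paper does not give its own argument---it cites McAfee's Theorem~9, and your pointwise coupling is essentially McAfee's original proof. The paper's own technique appears later, in the Knightian section: compute $\Phi_D'(x)$ and $\Phi_C'(x)$ via the envelope theorem and compare derivatives on each side of the median. Specialized to a single prior this reads $\Phi_D'(x)=F(2m(x))$ versus $\Phi_C'(x)=F(m^{-1}(x/2))$, and the same inequality $x\lessgtr 2m(x)$ that you exploit pointwise now orders these derivatives. Your coupling is more elementary and works for any strictly increasing concave $u$ (the paper's derivative lemma is stated only for $u=\id$), but it relies on both roles facing the \emph{same} distribution $F$; under Knightian uncertainty the divider's and chooser's worst-case priors differ ($G_0,G_1$ versus $G^*_{x_C}$), so the coupling no longer transfers directly, which is why the paper develops the derivative comparison instead.
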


We now illustrate the implications of Proposition \ref{Proposition: McAfee} using the uniform distribution on the unit interval, see Figure \ref{Example: no uncertainty}. This example will be extended to allow for Knightian Uncertainty in the upcoming section.

\begin{example}\label{Example: no uncertainty}
Let us consider the particularly transparent case of the uniform distribution $f(x)=1$ for $x \in [0,1]$ and $u = \id$. By Lemma \ref{Lemma: Suff McAfee}, Assumption \ref{AssF} is satisfied. For $0\leq p \leq 1/2$ we have
$$\pi_F(p \mid x_D)= (x_D-p) 2p+p(1-2p)= 2x_D \cdot p+p-4p^2.$$
The optimal bid is given by the first order condition
\begin{align*}
 0&=2 x_D +1 - 8 p\\
\iff p &= x_D / 4 + 1 / 8.
\end{align*}
\end{example}

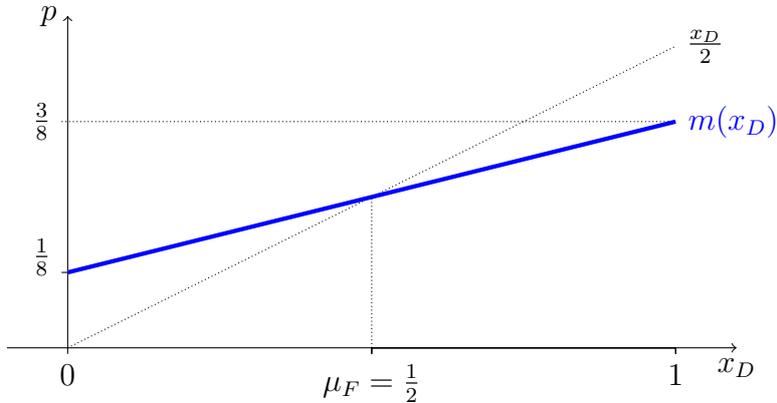
\begin{figure}[ht]
\begin{center}
\begin{tikzpicture}[scale=8]
\draw[<->] (0,1/2+0.05)node[left]{$p$}--(0,0)--(1,0)--(1,-0.01)node[below]{$1$}--(1,0)--(1/2,0)--(1/2,-0.01)node[below]{$\mu_F = \tfrac{1}{2}$}--(1/2,0)--(1+1/10,0)node[below]{$x_D$};
\draw (-1/10,0)--(0,0);
\draw (0,0)--(0,-1/100)node[below]{$0$};


\draw (0,1/8)--(-1/100,1/8)node[left,yshift=0.5em]{$\tfrac{1}{8}$};


\draw[densely dotted](0,0)--(1,1/2)node[right]{$\tfrac{x_D}{2}$};
\draw[densely dotted] (-0.01,3/8)node[left]{$\tfrac{3}{8}$}--(1,3/8);
\draw[densely dotted] (1/2,0)--(1/2,1/4);

\draw[scale=1, domain=0:1, smooth, variable=\x, blue,ultra thick]  plot ({\x}, {1/4 * \x + 1/8});
\draw (1,3/8)node[ultra thick, right,blue]{$m(x_D)$};
\end{tikzpicture}
\caption{Optimal price announcement
 for $F \sim \mathcal{U}([0,1]), u = \id$.}
\label{Figure: uniform price announcement stochastic}
\end{center}
\end{figure}

Note that the outcome of the Bayes-Nash equilibrium is not always efficient, in contrast to the maxmin equilibrium outcome. There is an incentive for dividers with low valuation to bid a relatively high price, because the chooser is going to accept with a certain probability. 
It can thus happen that the divider obtains the company although she has a lower valuation than the chooser in the case of low valuations and vice versa in case of both players having a high valuations. Section \ref{Section: Efficiency} will elaborate more on the issue of efficiency.

\section{Equilibrium under Knightian Uncertainty}\label{Section: Knight}

We now turn towards the case of \emph{Knightian Uncertainty}, i.e. imprecise probabilistic information, about the distribution of an agent's valuation, where a set of beliefs is deemed possible. More precisely, the divider considers a whole family of distributions $G \in \cG$ for the chooser's valuation, bounded by two distribution functions $G_0, G_1$ where $G_0$ stochastically dominates\footnote{The notion of stochastic dominance is as usual: $G$ stochastically dominates $G'$ if $G(x) \leq G'(x)$ for all $x$. Equivalently, $\int u(x) \, G(\mathrm{d}x) \geq \int u(x) \, G'(\mathrm{d}x)$ for all functions $u$ with $u' \geq 0$, see \cite{levy92} or \cite{rothschild1970}.} $G_1$:
\begin{equation*}
    \cG = \left\{\text{$G$ is a CDF on $X$ satisfying } G_0(x) \leq G(x) \leq G_1(x) \, \forall x \right\}.
\end{equation*}
Put differently, the probability that the chooser's valuation is below a value $x$ is assumed to be at least $G_0(x)$ but no more than $G_1(x)$. Such a set is usually referred to as a \emph{distribution band} (c.f. \cite{basu1995robust} and \cite{basu1994variations}) in robust Bayesian analysis. The distributions $G_0$ and $G_1$ are called \emph{lower} resp. \emph{upper bound} of $\cG$.

Uncertainty with distribution bands arise naturally when the divider is willing to entertain a belief $F$ about the chooser's valuation, yet prefers a robust approach to account for the uncertainty about the exact distribution. As a benchmark example, she might consider the family of distributions $G \in \cG$ that are \emph{$\epsilon$-shifts} of $F$ for any $\epsilon \geq 0$. Precisely, we define
\begin{equation*}
    \cG(F,\epsilon) := \left\{ G \text{ is a CDF on $X$ with } F(x-\epsilon) \leq G(x) \leq F(x+\epsilon) \, \forall x \right\}.
\end{equation*}

Varying $\epsilon$ allows for changes in the probability weights of each type as well as shifts of the entire distribution to higher and lower valuations. The size of $\epsilon$ can be small, even non-existent for $\epsilon = 0$, or large until $\cG(F,\epsilon)$ contains all distribution functions on $X$, representing \emph{full uncertainty}.

The defined distribution band $\cG(F,\epsilon)$ has nice topological properties with respect to the \emph{interval topology} ( c.f. \cite{nendel2020note}) in which it is closed and Dedekind super complete. Especially, $\cG(F,\epsilon)$ always admits a stochastically dominant resp. dominated representative $G_0$ resp. $G_1$. Furthermore and in contrast to the Prohorov metric\footnote{Also called Lévy-Prohorov metric, c.f. \cite{billingsley2013convergence}, defined by $d(F,G) := \inf\{ \eta>0 : F(x-\eta)-\eta \leq G(x) \leq F(x+\eta)+\eta \mbox{ for all $x \in [0,1]$}\}
$} the distribution band $\cG(F,\epsilon)$ is compatible with the SHRCs from Lemma \ref{Lemma: Suff McAfee} on $F$ as we will see in Section \ref{Section: Optimal Price Announcement}.
We illustrate $\cG(F,\epsilon)$ for the case of a uniform distribution in Figure \ref{Figure: epsilon shift distance}.

Concerning preferences in the presence of Knightian Uncertainty, we assume both agents to be uncertainty-averse in the sense of Gilboa and Schmeidler \cite{gilboa1989}. While the chooser can again decide only based on her own interim valuation and the announced price as before, the divider faces her interim worst-case expected utility\footnote{Considering the worst-case expected utility is a strong form of ambiguity aversion. One might be intrigued to apply weaker notions such as the smooth model \cite{smooth}. However, as distribution bands do not entail any information about the likelihood of one of its CDFs and their CDFs can be highly discontinuous, there's no natural parametrization or computationally accessible choice for a subjective probability over $\cG$. Maximizing the worst-case expected utility on the other hand can readily be defined without additional parameters. We thus find it reasonable to assume Gilboa-Schmeidler preferences, especially when considering $\cG(F,\epsilon)$, having an ambiguity averse divider in mind who chooses a robust approach to uncertainty by not making any assumptions on distributions functions close to a designated CDF $F$.} in the class $\cG$, given by
$$ \pi( p \mid x_D ) = \min_{G \in \cG} \pi_G(p\mid x_D).$$

By strict monotonicity of $u$, we have $u(x_D-p)>u(p)$ if and only if  $x_D > 2p$. Thus, if the divider announces a price $x_D > 2p$, her worst-case belief is $G(2p) = G_0(2p)$. Vice versa, her worst-case belief for $x_D < 2p$ is $G(2p) = G_1(2p)$. Consequently, only $G_0$ and $G_1$ are relevant for evaluating the worst-case for the divider, i.e.
\begin{align*}
\pi( p \mid x_D ) &= \min_{G \in \{ G_0,G_1 \}}  u(x_D-p)G(2p) + u(p) (1-G(2p))\\
    &= \begin{cases}
    u(x_D-p)G_0(2p) + u(p)(1-G_0(2p)) &, 2p<x_D,\\
     u(x_D-p)G_1(2p) + u(p)(1-G_1(2p)) &, x_D \leq 2p
     \end{cases}.
\end{align*}

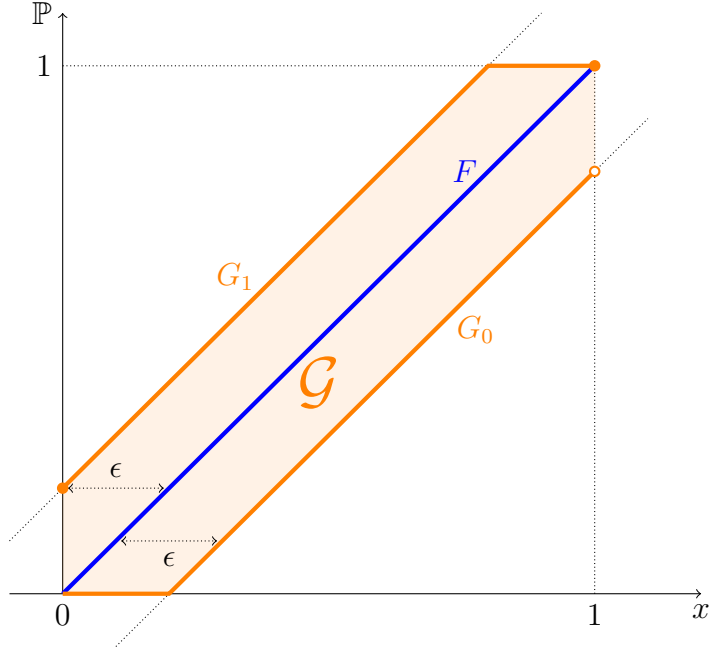
\begin{figure}[ht]
\begin{center}
\begin{tikzpicture}[scale=7]
\draw[->] (-0.1,0)--(1,0)node[below]{$1$}--(1.2,0)node[below]{$x$};
\draw[->] (0,0)node[below]{$0$}--(0,1)node[left]{$1$}--(0,1.1)node[left]{$\mathbb{P}$};
\draw[densely dotted] (0,1)--(1,1);
\draw[densely dotted] (1,0)--(1,1);

\draw[densely dotted](-0.1,0.1)--(0.9,1.1);
\draw[densely dotted](0.1,-0.1)--(1.1,0.9);

\filldraw[orange,opacity=0.1](0,0)--(0.2,0)--(1,0.8)--(1,1)--(0.8,1)--(0,0.2)--cycle;

\draw[densely dotted,<->](0.11,0.1)--(0.2,0.1)node[below]{$\epsilon$}--(0.29,0.1);
\draw[densely dotted,<->](0.19,0.2)--(0.1,0.2)node[above]{$\epsilon$}--(0.01,0.2);

\draw[ultra thick, blue] (0,0)--(1,1);
\draw[blue] (0.8,0.8)node[left]{$F$};
\draw[ultra thick, orange] (0,0)--(1/5,0)--(1,4/5);
\draw[ultra thick, orange] (0,1/5)--(4/5,1)--(1,1);
\draw [orange,thick,fill=orange] (0,1/5) circle(0.25pt);
\draw[orange,thick,fill=orange] (1,1) circle(0.25pt);
\draw[orange,thick,fill=white] (1,4/5) circle(0.25pt);
\draw[orange] (0.72,0.5)node[right]{$G_0$};
\draw[orange] (0.38,0.6)node[left]{$G_1$};
\draw[orange] (0.4,0.4)node[right,scale=2]{$\cG$};

\end{tikzpicture}
\end{center}
\caption{$\cG(F,\epsilon)$ for $F \sim \mathcal{U}([0,1])$ and $\epsilon = 1/5$.}
\label{Figure: epsilon shift distance}
\end{figure}

\phantom{a}\\
In the Texas Shoot-Out, the maxmin setting and the one of \emph{full uncertainty}, i.e. $\cG$ contains all CDFs on $X$, are linked and amount to the same behavior: For any announced price  $p$, there exists a valuation $x_C^*$ of the chooser that implements the worst-case possible action of the chooser for the divider. As in the case of full uncertainty, this valuation can always be assigned probability $1$ by the Dirac measure, the divider's worst-case belief will be to always face the deteriorating action of the chooser. For instance, if the divider announces a low (high) price $p$ with $x_D-p > p$ ($x_D-p<p$) then a worst-case distribution is to assign probability $1$ to any valuation $x_C^* > 2p$ ($x_C^*<2p$), e.g. $x_C^* = x_D$, resulting in the chooser buying the divider's shares (selling her shares), which coincides with the chooser's maxmin action. Hence, although different in their philosophy\footnote{While the maxmin setting considers a mischievous opponent (no matter their valuation), the full uncertainty setting deals with worst-case beliefs about the opponent's valuation.}, both settings amount to the behavior of the divider fully-hedging herself.

This observation lets us locate the maxmin and the Bayesian price announcement on two opposite ends of a setting of Knightian Uncertainty: One of full uncertainty about the peer's CDF (maxmin price) and one of no uncertainty, where only a single CDF is being faced (Bayesian price). The following section reveals the optimal price announcements for intermediate levels of uncertainty to be a combination of the former two.

\subsection{Optimal Price Announcement}\label{Section: Optimal Price Announcement}

In contrast to the chooser, the divider faces a delicate trade-off when announcing her price if Knightian Uncertainty in the form of a distribution band $\cG$ is dealt with. Nevertheless, the divider's maximization problem can be explicitly solved and is our main result, establishing and revealing a link between the Bayesian setting and the one of full uncertainty which are recovered as special cases. In fact, the optimal price announcement will be a combination of the optimal price announcements encountered in Section \ref{Section: Bayes and maxmin}: An uncertainty averse divider will announce full-hedging prices if and only if she considers her valuation to be between the smallest and largest median admissible in $\cG$. For a valuation $x_D$ outside this interval, she will announce the respective Bayesian price corresponding to the stochastically dominant (high $x_D$) or stochastically dominated (low $x_D$) representative of $\cG$.

We allow for some more generality in the regularity assumptions on the lower and upper bounds $G_0,G_1$ of the distribution band than in Section \ref{Section: Bayes and maxmin}. Let $G_0,G_1$ be distribution functions on $X$ that are continuous on $[\xl,\xu)$\footnote{As usual, distribution functions are assumed to be right-continuous, thus, continuity from the right in $\xl$ is immediate. However, there might still be jumps in $\xl$ (i.e., $G_i(\xl)>0$, putting a point-mass on $\xl$) and, of course, for $G_0$, in $\xu$.} and piecewise continuously differentiable with positive density and all left and right limits on $[\xl,\xu]$. More precisely, there are finitely many points in $X$ such that on each partition, $G_i$ is continuously differentiable and left and right limits of $G_i$ and their derivatives always exist. We further assume that $G_1$ is continuous in $\xu$ and thus everywhere.
These conditions are automatically fulfilled, if we start with an atom-less distribution function $F$ with positive density as in Section \ref{Section: Bayes and maxmin} and consider $\cG(F,\epsilon)$.\footnote{The same is true when considering a Prohorov-ball around $F$}

Note that $\pi$ is continuous, not only in $x_D$ but also in $p$ for $2p \in [\xl,\xu]$: The possible discontinuity of $\pi_{G_0}(.\mid x_D)$ in $2p=\xu$ only occurs if $x_D <\xu$. But in that case, $\pi(. \mid x_D)$ is equal to the continuous function $\pi_{G_1}( . \mid x_D)$ around $2p = \xu$.

Thus, the optimal price announcement (correspondence) can be defined
\begin{equation}
m(x_D) := \arg \max_p \pi (p \mid x_D).
\end{equation}

Under Assumption \ref{AssGeneral}, the main driver of our main theorem, it will turn out, that the maximizer is unique, so the optimal price announcement under uncertainty is indeed a function rather than a correspondence.

\begin{assumption}\label{AssGeneral}
For a considered valuation $x_D \in X$, let both the divider's payoff functions $\pi_{G_0}(p \mid x_D)$ and $\pi_{G_1}(p \mid x_D)$ be strictly quasiconcave in $p$ for $2p \in [\xl,\xu]$.
\end{assumption}

The notion of a distribution band $\cG(F,\epsilon)$ is compatible with Assumption \ref{AssGeneral}:

\begin{lemma}\label{Lemma: suff epsilon}
If $F$ fulfills the SHRC (Lemma \ref{Lemma: Suff McAfee}), $G_0$ and $G_1$ from $\cG(F,\epsilon)$ induce payoff functions fulfilling Assumption \ref{AssGeneral}.
\end{lemma}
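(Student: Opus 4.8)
The plan is to reduce the statement to Lemma~\ref{Lemma: Suff McAfee} by exploiting that the standard hazard rate conditions are invariant under horizontal shifts of $F$, and then to treat separately the flat pieces and the boundary atom that truncation to $X$ introduces into $G_0$ and $G_1$. The guiding idea is that each bound is a pure shift of $F$ on an ``active'' interval, is trivially monotone on a ``flat'' interval where its density vanishes, and carries at most one boundary atom.

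First I would make the two bounds of $\cG(F,\epsilon)$ explicit. The stochastically dominant (lower) representative is $G_0(x)=F(x-\epsilon)$ for $x<\xu$ together with an atom of size $1-F(\xu-\epsilon)$ at $\xu$ enforcing $G_0(\xu)=1$; the stochastically dominated (upper) representative is $G_1(x)=F(x+\epsilon)$, carrying an atom of size $F(\xl+\epsilon)$ at $\xl$ and equal to $1$ on $[\xu-\epsilon,\xu]$. Thus $G_0$ has positive density equal to $f(\cdot-\epsilon)$ on $[\xl+\epsilon,\xu)$, vanishing density on the flat piece $[\xl,\xl+\epsilon]$ where $G_0\equiv 0$, and the atom at $\xu$; symmetrically, $G_1$ has positive density $f(\cdot+\epsilon)$ on $[\xl,\xu-\epsilon)$ and a flat piece $[\xu-\epsilon,\xu]$ where $G_1\equiv 1$, and is continuous on the whole domain.

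The key computation is the shift-invariance of the SHRC. On the active interval of $G_0$ set $y=x-\epsilon$, so that $x+G_0(x)/g_0(x)=y+\epsilon+F(y)/f(y)$ and $x-(1-G_0(x))/g_0(x)=y+\epsilon-(1-F(y))/f(y)$; since $\mathrm{d}y/\mathrm{d}x=1$ and the additive constant $\epsilon$ disappears upon differentiation, both SHRC expressions for $G_0$ are nondecreasing precisely because those for $F$ are. The substitution $y=x+\epsilon$ handles $G_1$ identically. Because the derivative computation underlying Lemma~\ref{Lemma: Suff McAfee} is local, it reapplies on each active interval and delivers the single-crossing property of $\partial_p\pi_{G_i}$, namely positive before and negative after a unique crossing point. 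On the flat pieces the payoff collapses to $\pi_{G_0}(p\mid x_D)=u(p)$ (strictly increasing) and $\pi_{G_1}(p\mid x_D)=u(x_D-p)$ (strictly decreasing), and these join the active intervals continuously since $G_0$ meets $0$ at $\xl+\epsilon$ and $G_1$ meets $1$ at $\xu-\epsilon$.

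It then remains to patch the pieces into global strict quasiconcavity: for $G_0$, the strictly increasing flat piece, followed by the single-crossing active piece; for $G_1$, the single-crossing active piece followed by the strictly decreasing flat piece. Because $x_D\le\xu$ and $u$ is increasing, the atom of $G_0$ produces only a \emph{downward} jump of $\pi_{G_0}$ at the right endpoint $2p=\xu$, which cannot create an interior valley, while $\pi_{G_1}$ is continuous throughout. I expect this patching to be the main obstacle: one must verify that the kinks where flat and active intervals meet (the density of $G_0$ jumping from $0$ up to $f(\xl)$, that of $G_1$ dropping to $0$) do not introduce a second sign change of the derivative, and that the endpoint atom of $G_0$ is reconciled with the tie-breaking convention so that a possibly unattained supremum does not contradict strict quasiconcavity, exactly the subtlety flagged in the footnote pointing to Section~\ref{Section: Uncertainty only about valuations}.
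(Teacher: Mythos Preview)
Your approach is essentially the paper's own: write $G_0,G_1$ as truncated horizontal shifts of $F$, observe that the SHRC are shift-invariant so Lemma~\ref{Lemma: Suff McAfee} applies on the active (shifted) interval, note that on the flat pieces the payoff reduces to the strictly monotone $u(p)$ or $u(x_D-p)$, and then patch. The paper in fact proves the slightly stronger statement that $\pi_{G_i}$ is strictly quasiconcave on all of $\R$, which is why it also checks the upward jump of $\pi_{G_1}$ at $2p=\xl$; since you restrict to $2p\in[\xl,\xu]$ your observation that $\pi_{G_1}$ is continuous there is sufficient and correct.

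The patching you flag as the ``main obstacle'' is handled very quickly. The kinks where the density jumps but the CDF is continuous are harmless: the flat pieces sit on the correct side (strictly increasing \emph{before} the active piece for $G_0$, strictly decreasing \emph{after} it for $G_1$), so concatenating a monotone segment with a single-peaked one preserves single-peakedness regardless of the one-sided derivatives at the junction. For the atom of $G_0$ at $\xu$ the paper simply verifies $\lim_{2p\nearrow\xu}\pi_{G_0}(p\mid x_D)\geq \pi_{G_0}(\tfrac{\xu}{2}\mid x_D)$, i.e.\ the jump is downward as you say, which is compatible with strict quasiconcavity in the sense of Definition~\ref{Definition: Strict quasiconcavity}\,(iii'). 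So your plan is correct; you should just carry out these two short checks explicitly rather than leave them as concerns.
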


\begin{example} \label{Example: SHRC}
CDFs fulfilling Lemma \ref{Lemma: suff epsilon} include triangular distributions, the truncated standard normal distribution on $[0, 1]$ or the class of beta distributions.

However, the sufficient condition is far from being sharp: In the case of risk neutrality, i.e. $u = \id$, e.g., some beta distributions induce strictly quasiconcave utility functions although the sufficient condition is not satisfied, see Figure \ref{Fig: Beta example}.
\begin{figure}
    \centering
    \includegraphics[scale=0.4]{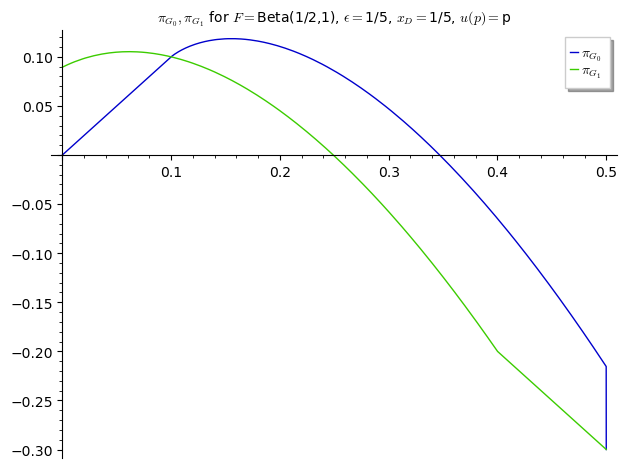} \qquad \includegraphics[scale=0.4]{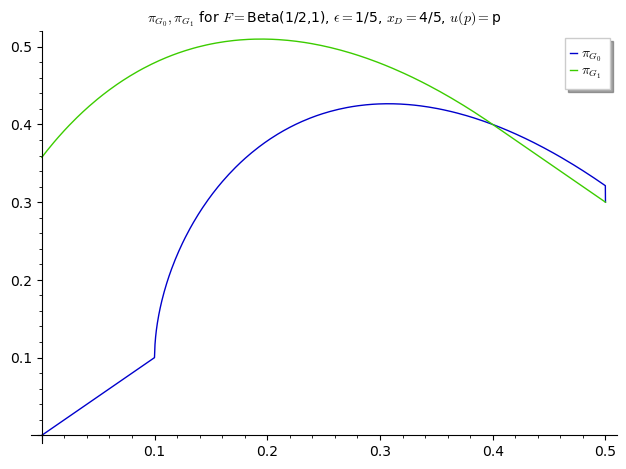}
    \caption{Strict quasiconcavity of the payoff functions under the beta distribution $B(\tfrac{1}{2},1)$ for different values of $x_D$ and $\epsilon$.}
    \label{Fig: Beta example}
\end{figure}
\end{example}


We illustrate the Texas Shoot-Out under Knightian Uncertainty by considering a version of Example \ref{Example: no uncertainty} that uses a distribution band. It turns out that, on the one hand, a divider with a rather average valuation will play cautiously and fully-hedge herself against any losses by playing half her valuation. On the other hand, a divider with very low (resp. high) valuation will still try to strategically extract revenue by stating a slightly higher (resp. lower price), thinking that the chooser will still take the offer (resp. refuse) it.

\begin{example}\label{Example: uniform distribution}
Let $F$ be the uniform distribution on the unit interval $[0,1]$ and let $u= \id$. For any $\epsilon \leq 1/2$ the distributions defining the distribution band $\cG(F,\epsilon)$ are given by
\begin{align*}
G_0(x) 
= \begin{cases}
0 &, 0\leq x \leq \epsilon,\\
x-\epsilon &, \epsilon < x < 1,\\
1 &, x=1.
\end{cases}
\quad \text{and} \quad 
G_1(x) 
= \begin{cases}
x+\epsilon &, 0 \leq x \leq 1- \epsilon\\
1 &, 1-\epsilon < x \leq 1.
\end{cases}.
\end{align*}
The  optimal price announcement can be explicitly calculated and is given by
\begin{align*}
m(x_D) = \begin{cases}
\frac{x_D}{4} - \frac{\epsilon}{4} + \frac{1}{8} &, 0\leq x_D < \frac{1}{2} - \epsilon,\\
\frac{x_D}{2} &, \frac{1}{2}-\epsilon \leq x_D \leq \frac{1}{2} + \epsilon,\\
\frac{x_D}{4} + \frac{\epsilon}{4} + \frac{1}{8} &, \frac{1}{2} + \epsilon < x_D \leq 1,
\end{cases}
\end{align*}
which is depicted in Figure \ref{Figure: uniform price announcement}.

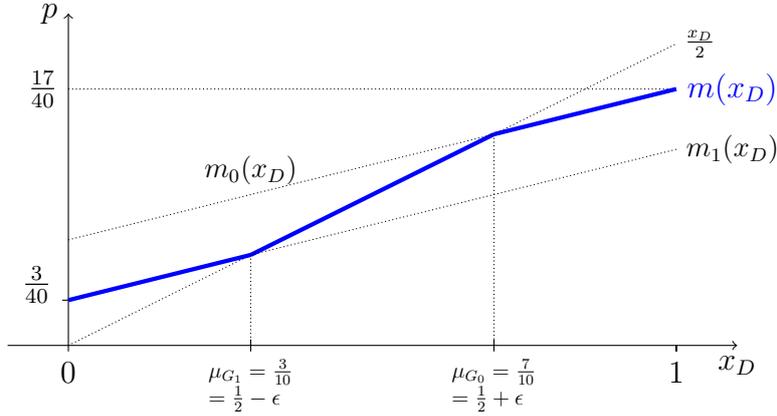
\begin{figure}[ht]
\begin{center}
\begin{tikzpicture}[scale=8]
\draw[<->] (0,0.55)node[left]{$p$}--(0,0)--(1,0)--(1,-0.01)node[below]{$1$}--(1,0)--(1+1/10,0)node[below]{$x_D$};
\draw (-1/10,0)--(0,0);
\draw (0,0)--(0,-1/100)node[below]{$0$};


\draw (0,3/40)--(-1/100,3/40)node[left,yshift=0.5em]{$\tfrac{3}{40}$};
\draw[densely dotted] (0,17/40)node[left]{$\tfrac{17}{40}$}--(1,17/40)node[right,blue]{$m(x_D)$};
\draw[densely dotted](0,0)--(1,1/2)node[right,scale=0.8]{$\tfrac{x_D}{2}$};

\draw[densely dotted](3/10,3/20)--(3/10,0);
\draw[densely dotted](7/10,7/20)--(7/10,0);
\draw (3/10,0.01)--(3/10,-0.01)node[below,scale=0.7,align=left]{$\mu_{G_1} =\tfrac{3}{10}$\\ $=\frac{1}{2} -  \epsilon$};
\draw (7/10,0.01)--(7/10,-0.01)node[below,scale=0.7,align=left]{$\mu_{G_0} = \tfrac{7}{10}$\\$=\frac{1}{2} +  \epsilon$};

\draw[scale=1, domain=3/10:7/10, smooth, ultra thick, variable=\x, blue]  plot ({\x}, {1/2 * \x});
\draw[scale=1, domain=0:3/10, smooth, ultra thick, variable=\x, blue]  plot ({\x}, {1/4 * \x + 1/8 - 1/20});
\draw[densely dotted] (3/10,1/4*3/10+1/8-1/20)--(1,1/4*1+1/8-1/20)node[right,scale=0.9]{$m_1(x_D)$};
\draw[scale=1, domain=7/10:1, smooth,ultra thick, variable=\x, blue]  plot ({\x}, {1/4 * \x + 1/8 + 1/20});
\draw[densely dotted] (0,0+1/8+1/20)--(3/10,1/4*3/10+1/8+1/20)node[above,scale=0.9]{$m_0(x_D)$}--(7/10,1/4*7/10+1/8+1/20);
\end{tikzpicture}
\caption{Optimal price announcement $m(x_D)$ for $\cG(F,\epsilon)$ where $F \sim \mathcal{U}([0,1]), u = \id$ and $\epsilon = \tfrac{1}{5}$. The functions $m_i(x_D)$ ($i=0,1$) indicate the optimal price announcement in the Bayesian setting if, resp., the CDF $G_i$ is faced.}
\label{Figure: uniform price announcement}
\end{center}
\end{figure}
For $\epsilon \geq 1/2$ the divider will play the full-hedging strategy $m(x_D) = \tfrac{x_D}{2}$ for all her valuations.

It is worth mentioning that full uncertainty, i.e. $\cG(F,\epsilon)$ containing all CDFs on $X$, is only faced for $\epsilon \geq 1$. Complementary to \cite{van2020}, this indicates that full uncertainty is sufficient, but not necessary for full-hedging to be optimal for every valuation.
\end{example}

In order to give a precise and complete game theoretic description of the Texas Shoot-Out under multiple priors, we also allow for the chooser to face uncertainty about the distribution of the divider's valuation. Since she has all her payoff-relevant information at hand when choosing an action, her belief about the divider's valuation will only play a role when talking about interim worst-case expected utility in Section \ref{Section: interim utility}.

For the following equilibrium concept, assume a symmetric setting in which the chooser faces the same Knightian Uncertainty about the divider's valuation given by the set $\cG$ of distribution functions. The game is then again modeled similar to a Bayesian one with Gilboa-Schmeidler preferences. We take the set $\cG \otimes \cG$ to contain independent draws\footnote{See \cite{muraviev} for a precise description. Also note that we can fit our game into their framework by adding nature as a player to play an ambiguous \emph{Ellsberg strategy} as in \cite{riedel2014ellsberg}.} of distribution functions $G_D,G_C \in \cG$, so that, after valuations are privately observed, both agents face the prior set $\cG$ about the peer's valuation distribution function. Ex-ante pure\footnote{Picking the favorable pure strategy of chooser that is always a best reply, our assumptions imply pure price announcements. Thus, there is no need for considering mixed strategies.} strategies are denoted by $s_D \colon X \to \R, s_C \colon X \times \R \to \{ \text{buy, sell} \}$. The expressions $\widetilde{\pi}_D, \widetilde{\pi}_C$ stand for the realized payoff of the divider/chooser at the end of the mechanism, given the agents' valuation profiles and their strategies. The associated ex-ante worst-case expected payoff functions can be written as
\begin{align*}
    U_D^{\text{ex}}(s_D,s_C) &= \min_{(G_D,G_C) \in \cG \otimes \cG } \E_{G_D \otimes G_C} \left[ \widetilde{\pi}_D(s_D,s_C; x_D,x_C) \right]\\
    U_C^{\text{ex}}(s_D,s_C) &= \min_{(G_D,G_C) \in \cG \otimes \cG } \E_{G_D \otimes G_C} \left[ \widetilde{\pi}_C(s_D,s_C; x_D,x_C) \right].
\end{align*}
Interim worst-case expected utility is given by
\begin{align*}
    U_D^{\text{int}}(p,s_C \mid x_D) &= \min_{G_C \in \cG } \E_{G_C} \left[ \widetilde{\pi}_D(s_D,s_C; x_D,x_C) \right]\\
    U_C^{\text{int}}(a,s_D \mid x_C) &= \min_{G_D \in \cG } \E_{G_D \mid s_D(x_D) = p} \left[ \widetilde{\pi}_C(s_D,s_C; x_D,x_C) \right]
\end{align*}
From that it is clear that the decision rule 'sell' if and only if $x_C - p \leq p$ is always a best reply ex-ante and interim for any announced and observed price $p = s_D(x_D)$ independent of how the divider has come to set this price, i.e. no matter what updated information the chooser has on $x_D$ or $G_D$. Especially, no \emph{dynamic inconsistency}\footnote{\emph{Dynamic inconsistent} behavior is characterized by agents having an incentive to deviate from their ex-ante plan and usually occurs from updating by Bayes rule in environments with Knightian Uncertainty, see \cite{epstein2003recursive}, \cite{aryal2014note}, \cite{pahlke2022dynamic}, \cite{hanany2020incomplete} for how this impacts extensive form games. An established way to avoid this is by introducing \emph{rectangular beliefs}.} arises. If the divider anticipates this strategy of the chooser, her interim optimal behavior is also ex-ante optimal, thus it suffices to derive the interim best response for the following equilibrium concept:

\begin{definition}\label{Definition: ex-ante and interim Knight equilibrium}
A strategy profile $(s_D,s_C)$ is an \emph{ex-ante Knight-Nash equilibrium} if it is a Nash equilibrium for the game with ex-ante worst-case expected payoff functions.\\
A strategy profile $(s_D,s_C)$ is an \emph{interim Knight-Nash equilibrium} if it is a Nash equilibrium for every game induced by $x_D$ and $(x_C,p)$ with interim worst-case expected payoff functions.
\end{definition}

By continuity of $\pi(. \mid x_D)$ we can always find an optimizer, thus we have

\begin{proposition}\label{Proposition: Knight-Nash existence}
In the Texas Shoot-Out, the following ex-ante and interim Knight-Nash equilibrium always exists. For any price $p$, a chooser with valuation $x_C$ will sell the company if and only if $x_C \leq 2p$. A divider with valuation $x_D$ will make an optimal price announcement $p$ belonging to $m(x_D)$.
\end{proposition}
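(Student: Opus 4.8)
The profile to be verified is $s_C$ = ``sell iff $x_C \le 2p$'' together with $s_D(x_D) \in m(x_D)$, and I must check it is a Nash equilibrium both for the interim and for the ex-ante worst-case payoff functions of Definition~\ref{Definition: ex-ante and interim Knight equilibrium}. I would split the argument into the chooser's optimality (immediate, via dominance) and the divider's optimality (first interim, then ex-ante), and treat existence of a maximizer separately.

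\textbf{Chooser.} The action ``sell iff $x_C - p \le p$'' maximizes the \emph{realized} payoff $\widetilde{\pi}_C$ pointwise for every pair $(x_C,p)$, with no reference to any belief about the divider. Hence it maximizes $\E_{G_D \otimes G_C}[\widetilde{\pi}_C]$ for every $(G_D,G_C) \in \cG \otimes \cG$ and therefore the minimum of these expectations over $\cG \otimes \cG$, at every interim information set as well as ex-ante. This is exactly the ``no dynamic inconsistency'' observation already recorded before the definition, and it makes $s_C$ a best reply in both senses regardless of $s_D$.

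\textbf{Divider, interim.} Against $s_C$, a divider of type $x_D$ announcing $p$ obtains $u(x_D-p)$ when the chooser sells (i.e.\ $x_C \le 2p$) and $u(p)$ otherwise, so her interim worst-case payoff equals $\min_{G_C \in \cG}\bigl[u(x_D-p)G_C(2p)+u(p)(1-G_C(2p))\bigr] = \pi(p \mid x_D)$, the object studied in this section (the min is attained at $G_0$ or $G_1$ by monotonicity of $u$). Excluding dominated prices exactly as in Lemma~\ref{Lemma: Priceinterval Bayes} restricts attention to the compact set $\{p : 2p \in [\xl,\xu]\}$, on which $\pi(\cdot \mid x_D)$ is continuous; hence $m(x_D)=\arg\max_p \pi(p\mid x_D)$ is nonempty and any selection is an interim best reply, which together with $s_C$ yields the interim Knight--Nash equilibrium.

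\textbf{Divider, ex-ante (the main obstacle).} It remains to show the same selection maximizes $U_D^{\text{ex}}(\cdot,s_C)$. The route I would take is to exhibit a saddle point $(s_D^{*},(G_D^{\circ},G_C^{\circ}))$ for $\Phi(s_D,G_D,G_C)=\E_{G_D}[\pi_{G_C}(s_D(x_D)\mid x_D)]$. The key construction is a single admissible worst-case belief $G_C^{\circ}\in\cG$ realizing the pointwise worst case at the equilibrium prices: set $G_C^{\circ}=G_1$ below $\mu_{G_1}$, $G_C^{\circ}=G_0$ above $\mu_{G_0}$, and $G_C^{\circ}\equiv \tfrac12$ on the hedging interval $[\mu_{G_1},\mu_{G_0}]$ (this is a valid CDF in the band, since $G_0 \le \tfrac12 \le G_1$ there, and it makes every hedging type indifferent so that $m(\cdot)$ is a global best reply to $G_C^{\circ}$). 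Taking $G_D^{\circ}$ to minimize $\E_{G_D}[\max_p \pi(p\mid x_D)]$, one then gets, for every deviation $s_D$, $U_D^{\text{ex}}(s_D,s_C)\le \E_{G_D^{\circ}}[\pi_{G_C^{\circ}}(s_D(\cdot))]\le \E_{G_D^{\circ}}[\max_p\pi_{G_C^{\circ}}(p\mid\cdot)]=U_D^{\text{ex}}(s_D^{*},s_C)$, so no profitable deviation exists. The hard part is precisely this ex-ante step: maxmin preferences are in general dynamically inconsistent and $\cG\otimes\cG$ is not rectangular, so interim optimality does not automatically transfer; the crux is verifying that the type-dependent worst cases $G_0,G_1$ can be stitched into one admissible $G_C^{\circ}$ and that $m(\cdot)$ stays a \emph{global} best reply to it (which is where quasiconcavity of $\pi_{G_C^{\circ}}$ must be controlled). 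The chooser's dominant-strategy behavior is what collapses the problem to this single-agent saddle argument and lets it close.
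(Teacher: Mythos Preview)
Your chooser and interim-divider arguments match the paper: its entire proof is the one-line observation that $\pi(\cdot\mid x_D)$ is continuous on the compact set $\{p:2p\in[\xl,\xu]\}$, so a maximizer exists, while the chooser's dominance and the ex-ante/interim reduction are simply asserted in the paragraph preceding Definition~\ref{Definition: ex-ante and interim Knight equilibrium}.

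Where you differ is the ex-ante step for the divider. The paper merely \emph{asserts} that interim optimality implies ex-ante optimality; you rightly note that with maxmin preferences over a non-rectangular $\cG\otimes\cG$ this is not automatic, and you close the gap with a saddle-point construction: the CDF $G_C^{\circ}$ (equal to $G_1$, then $\tfrac12$, then $G_0$) is a valid element of the band, it attains each type's interim worst case pointwise, and one can indeed verify---using the median inequalities $x_D<2m_1(x_D)<\mu_{G_1}$ resp.\ $\mu_{G_0}<2m_0(x_D)<x_D$ together with quasiconcavity of $\pi_{G_0},\pi_{G_1}$ and concavity of $p\mapsto \tfrac12(u(x_D-p)+u(p))$---that $m(x_D)$ remains a \emph{global} maximizer of $\pi_{G_C^{\circ}}(\cdot\mid x_D)$, so the saddle inequality goes through. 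This is correct and genuinely more than the paper provides. The one caveat is logical order: your $G_C^{\circ}$ and the verification that $m$ best-replies to it lean on the explicit three-piece form of $m$ from Theorem~\ref{Theorem:divider characterization}, hence on Assumption~\ref{AssGeneral}, whereas the Proposition is placed \emph{before} that theorem and phrased for $m$ as a correspondence. So your ex-ante argument buys rigor at the cost of importing structure the paper only establishes afterward; either read the Proposition under Assumption~\ref{AssGeneral}, or regard the paper's unproved assertion as the gap your argument fills in that case.
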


Strict quasiconcavity is driving our main result - a full characterization of the optimal price announcement of the divider. A link between the price announcements from \cite{mcafee92} and \cite{van2020} for intermediate levels of uncertainty is established: We explicitly describe the transformation of the optimal price announcement from the stochastic case (single prior or $\epsilon = 0$) to the setting of full uncertainty (full set of priors or $\epsilon \gg 0$) by two observations: Firstly, the divider will hedge herself for valuations between the medians. Secondly, for low resp. high valuations, the divider will make the optimal price announcement corresponding to facing the distribution $G_1$ resp. $G_0$. Note that for increasing levels of uncertainty (increasing $\epsilon$) the spread of the median increases.

\begin{theorem}\label{Theorem:divider characterization}
The optimal price announcement of a divider with valuation $x_D$ is given by
\begin{align*}
    m(x_D) =
    \begin{cases}
    m_1(x_D) &, x_D < \mu_{G_1}^-,\\
    \frac{x_D}{2} &, \mu_{G_1}^- \leq x_D \leq \mu_{G_0}^+,\\
    m_0(x_D) &, \mu_{G_0}^+ < x_D,
    \end{cases}
\end{align*}
where $m_i(x_D) := m_{\pi_{G_i}(.\mid x_D)}$ denotes the unique point of the strictly quasiconcave function $\pi_{G_i}(. \mid x_D)$ separating its strictly decreasing and increasing domains and $\mu_{G_1}^-$ the smallest and $\mu_{G_0}^+$ the largest median of $G_1$ resp. $G_0$.
\end{theorem}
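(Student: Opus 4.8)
The plan is to exploit the two-branch form of the worst-case payoff established immediately before the statement. Writing $p^\ast := x_D/2$ for the full-hedging price, we have $\pi(\cdot \mid x_D) = \pi_{G_0}(\cdot \mid x_D)$ on the lower branch $\{2p < x_D\}$ and $\pi(\cdot \mid x_D) = \pi_{G_1}(\cdot \mid x_D)$ on the upper branch $\{2p \ge x_D\}$, the two pieces coinciding at $p = p^\ast$ so that $\pi(\cdot \mid x_D)$ is continuous on $\{2p \in [\xl,\xu]\}$. By Assumption \ref{AssGeneral} each branch function is strictly quasiconcave, hence strictly increasing up to its unique maximizer $m_i(x_D)$ and strictly decreasing thereafter. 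Maximizing $\pi$ over the admissible price interval therefore reduces to locating $p^\ast$ relative to $m_0(x_D)$ and $m_1(x_D)$ and comparing the two branch maxima, for which it suffices to know whether each $\pi_{G_i}(\cdot \mid x_D)$ is increasing, flat, or decreasing at $p^\ast$.

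The computation that drives the result is the slope of $\pi_{G_i}(\cdot \mid x_D)$ at the junction. Differentiating $\pi_{G_i}(p \mid x_D) = u(x_D-p)G_i(2p) + u(p)(1 - G_i(2p))$, the term multiplying the density of $G_i$ carries the factor $u(x_D - p) - u(p)$, which vanishes at $p = p^\ast$; hence
\[
\left.\frac{\partial}{\partial p}\,\pi_{G_i}(p \mid x_D)\right|_{p = x_D/2} = u'\!\left(\tfrac{x_D}{2}\right)\bigl(1 - 2\,G_i(x_D)\bigr),
\]
independently of the (possibly one-sided) density of $G_i$ at $x_D$, so $\pi_{G_i}$ is in fact differentiable at $p^\ast$ whenever $G_i$ is continuous at $x_D$. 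Since $u' > 0$ and $\pi_{G_i}(\cdot\mid x_D)$ is strictly quasiconcave, the sign of this slope pins down the maximizer: $G_i(x_D) < \tfrac12$, $=\tfrac12$, $>\tfrac12$ correspond respectively to $m_i(x_D) > p^\ast$, $m_i(x_D) = p^\ast$, $m_i(x_D) < p^\ast$. By monotonicity and right-continuity, $G_1(x_D) \ge \tfrac12 \Leftrightarrow x_D \ge \mu_{G_1}^-$ and $G_0(x_D) \le \tfrac12 \Leftrightarrow x_D \le \mu_{G_0}^+$, which is exactly where the smallest/largest-median notation enters; stochastic dominance $G_0 \le G_1$ moreover gives $\mu_{G_1}^- \le \mu_{G_0}^+$, so the hedging interval is well defined and the three cases partition $X$.

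It then remains to run the case analysis, transferring information between the branches via $G_0(x_D) \le G_1(x_D)$. If $x_D < \mu_{G_1}^-$ then $G_0(x_D) \le G_1(x_D) < \tfrac12$, so both maximizers lie strictly above $p^\ast$: the lower branch is strictly increasing up to $p^\ast$ and the upper branch continues strictly increasing to $m_1(x_D)$, giving the unique global maximizer $m_1(x_D)$. Symmetrically, $x_D > \mu_{G_0}^+$ forces $m_1(x_D), m_0(x_D) < p^\ast$ and yields $m(x_D) = m_0(x_D)$; and for $\mu_{G_1}^- \le x_D \le \mu_{G_0}^+$ one has $m_1(x_D) \le p^\ast \le m_0(x_D)$, so each branch peaks at the junction and $m(x_D) = p^\ast = x_D/2$. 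The fiddly part I expect is the boundary behaviour absorbed into the $\pm$ medians: at the knife-edge valuations $x_D = \mu_{G_1}^-$ or $x_D = \mu_{G_0}^+$ the junction slope vanishes and one must invoke strict quasiconcavity itself (not merely a first-order sign) to place the maximizer exactly at $p^\ast$; and at the endpoints $x_D \in \{\xl,\xu\}$, where $G_0$ may carry an atom and $\pi_{G_0}$ can jump at $2p = \xu$, one checks directly that such valuations fall into the dominance cases, so the discontinuity creates no competing maximizer.
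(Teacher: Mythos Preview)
Your approach is essentially the paper's own: the paper packages your three-case analysis into an abstract statement about the maximizer of the minimum of two strictly quasiconcave functions that cross at a single point (Proposition~\ref{Proposition: stict quasi concave minimum mh}), and your junction-slope formula $u'(x_D/2)\bigl(1-2G_i(x_D)\bigr)$ is precisely the content of Lemma~\ref{Lemma: median and optimal price}.

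The one correction concerns your proposed fix for the knife-edge. When the junction slope vanishes (say $G_1(x_D)=\tfrac12$), strict quasiconcavity of $\pi_{G_i}$ by itself does \emph{not} force $m_i(x_D)=p^\ast$: a strictly quasiconcave $C^1$ function may well have a zero derivative away from its peak (think of $x\mapsto x^3$ on the increasing side), so a vanishing first-order sign is uninformative. The paper closes this gap with a second-order computation (Lemma~\ref{Lemma: strict concavity in 2p=xD}): at $p=p^\ast$ the density term in $\partial_p\pi_{G_i}$ drops out, and one obtains
\[
\left.\frac{\partial^2}{\partial p^2}\,\pi_{G_i}(p\mid x_D)\right|_{2p=x_D}=u''\!\left(\tfrac{x_D}{2}\right)-8\,g_i(x_D)\,u'\!\left(\tfrac{x_D}{2}\right)<0,
\]
so $p^\ast$ is a strict local maximum of $\pi_{G_i}$, and strict quasiconcavity then upgrades this to $m_i(x_D)=p^\ast$. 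With that single ingredient added, your argument goes through exactly as written.
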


Theorem \ref{Theorem:divider characterization} admits the following additional observations encoding a well-behaved price announcement.

Firstly, the optimal price announcement will be above half of the valuation while staying below the full-hedging action below the median of $\mu_{G_1}$ and vice versa above $\mu_{G_0}$, revealing that the divider expects to extract payoff from her belief that the chooser has a higher resp. lower valuation than herself. By \ref{Theorem:divider characterization} she will not do so for intermediate valuations, but rather stick to a safe offer.

\begin{cor}\label{Corollary: B xm and fix points}
We have $x_D < 2m(x_D) < \mu_{G_1}^-$ if $x_D < \mu_{G_1}^-$ and $\mu_{G_0}^+ < 2m(x_D) < x_D$ if $\mu_{G_0}^+ < x_D$.
\end{cor}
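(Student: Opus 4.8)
The plan is to read off from Theorem~\ref{Theorem:divider characterization} that on the two outer regions the optimal announcement coincides with a Bayesian best reply against a single bounding distribution, and then to locate that Bayesian maximizer by a first-order sign test justified by strict quasiconcavity. Concretely, for $x_D < \mu_{G_1}^-$ the theorem gives $m(x_D) = m_1(x_D)$, the unique maximizer of the strictly quasiconcave function $\pi_{G_1}(\cdot \mid x_D)$ (Assumption~\ref{AssGeneral}); the case $x_D > \mu_{G_0}^+$ is completely symmetric, so I would treat only the former in detail. The guiding principle is that, by strict quasiconcavity, a candidate point $q$ lies to the left of the maximizer (so $m_1(x_D) > q$) exactly when $\pi_{G_1}(\cdot \mid x_D)$ is increasing at $q$, and to its right when decreasing; hence it suffices to sign the derivative at the two reference points $q = x_D/2$ and $q = \mu_{G_1}^-/2$.

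For the lower bound $x_D < 2m_1(x_D)$ I would evaluate at the full-hedging point $q = x_D/2$. Differentiating $\pi_{G_1}(p\mid x_D) = u(x_D-p)G_1(2p) + u(p)(1-G_1(2p))$ and using $x_D - q = q = x_D/2$, the two density terms cancel and the derivative reduces to $u'(x_D/2)\,(1 - 2G_1(x_D))$. Since $x_D < \mu_{G_1}^-$ forces $G_1(x_D) < \tfrac12$ (for every point strictly below the smallest median the value of a right-continuous CDF is below $\tfrac12$), and $u' > 0$, this derivative is strictly positive, so by strict quasiconcavity $m_1(x_D) > x_D/2$, i.e. $2m_1(x_D) > x_D$. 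This step is clean because the cancellation of the density terms makes it insensitive to any kink of $G_1$ at $2q = x_D$.

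For the upper bound $2m_1(x_D) < \mu_{G_1}^-$ I would evaluate at $q = \mu_{G_1}^-/2$, so $2q = \mu_{G_1}^-$. Here the smallest-median property gives $G_1(2q) \geq \tfrac12$, while $x_D < \mu_{G_1}^-$ yields $x_D - q < q$; concavity of $u$ then gives $u'(x_D-q) \ge u'(q)$ and strict monotonicity gives $u(x_D-q) < u(q)$. Substituting, the marginal part $u'(q)(1-G_1(2q)) - u'(x_D-q)G_1(2q)$ is bounded above by $u'(q)(1 - 2G_1(2q)) \le 0$, while the density part $2g_1(2q)\,[u(x_D-q)-u(q)]$ is $\le 0$; together they make the derivative strictly negative, so $m_1(x_D) < \mu_{G_1}^-/2$.

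The main obstacle is securing the \emph{strict} inequality at this last point: when $\mu_{G_1}^-$ is the unique median of a continuous $G_1$ one has $G_1(2q) = \tfrac12$, so the marginal part may vanish and strictness must come from the density term, which is strictly negative precisely because the density $g_1(2q)$ is positive (as guaranteed on the regular pieces) and $u(x_D-q) < u(q)$. At the finitely many kink points, or should the median sit inside a flat stretch of $G_1$ at level $\tfrac12$, I would pass to one-sided derivatives and invoke strict quasiconcavity (which forbids a locally constant payoff) to preserve the strict separation. The case $x_D > \mu_{G_0}^+$ then follows with the inequalities reversed, using $G_0(x_D) > \tfrac12$ and $x_D - q > q$ at the corresponding reference points.
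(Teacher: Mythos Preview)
Your argument is correct and mirrors the paper's own route very closely. The paper packages the two inequalities into Lemma~\ref{Lemma: median and optimal price} (applied with $G=G_1$ resp.\ $G=G_0$) and simply cites that lemma for the corollary. Your lower bound $x_D<2m_1(x_D)$ is literally the paper's computation: evaluating the (one-sided) derivative of $\pi_{G_1}(\cdot\mid x_D)$ at $2p=x_D$ collapses to $u'(x_D/2)\,(1-2G_1(x_D))>0$, forcing the maximizer to the right by strict quasiconcavity.

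For the upper bound the two arguments are dual versions of the same estimate. The paper evaluates the left derivative \emph{at the maximizer} $p=m_1(x_D)$, uses $x_D<2m_1(x_D)$ together with $u'(x_D-p)\ge u'(p)$, $u(x_D-p)<u(p)$ and $g_1^->0$ to bound it strictly by $u'(m_1(x_D))\,(1-2G_1^-(2m_1(x_D)))$, and reads off $G_1^-(2m_1(x_D))<\tfrac12$, i.e.\ $2m_1(x_D)<\mu_{G_1}^-$. You instead evaluate \emph{at the median point} $q=\mu_{G_1}^-/2$, use $x_D-q<q$ and the same monotonicity/concavity facts to sign the derivative strictly negative there, and read off $m_1(x_D)<q$. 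Both directions rely on positivity of the (one-sided) density to obtain strictness when $G_1(2q)=\tfrac12$; under the paper's standing regularity (piecewise $C^1$ with positive density, $G_1$ continuous) your one-sided handling at possible kinks is exactly what is needed, and the ``flat stretch at level $\tfrac12$'' you guard against cannot actually occur, so the median is unique and $\mu_{G_1}^-=\mu_{G_1}$.
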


Secondly, she will announce higher prices for higher valuations.

\begin{cor} \label{Corollary: B cont and str increasing}
The optimal price announcement $m(x_D)$ is increasing and thus a.s. continuous and measurable. It is continuous and strictly increasing for $\mu_{G_1}^- \leq x_D \leq \mu_{G_0}^+$ and on regions where $G_0$ resp. $G_1$ are continuously differentiable with positive derivative in $m_0(x_D)$ resp. $m_1(x_D)$.\\
Furthermore, $m(x_D)$ is always continuous in $x_D \in \{ \mu_{G_1}^-, \mu_{G_0}^+ \}$.
\end{cor}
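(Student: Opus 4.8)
The plan is to read off the three-piece structure from Theorem~\ref{Theorem:divider characterization} and treat each piece, then glue them at the two median endpoints. The middle piece is immediate: on $[\mu_{G_1}^-,\mu_{G_0}^+]$ we have $m(x_D)=x_D/2$, which is continuous and strictly increasing, so nothing is left to check there and the continuity/strict-monotonicity claim on this interval is settled at once. The real work concentrates on the outer pieces $m_1$ (for $x_D<\mu_{G_1}^-$) and $m_0$ (for $x_D>\mu_{G_0}^+$) and on matching values at the junctions.

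For the monotonicity of $m_i$ I would avoid differentiating the first-order condition directly (whose second-order term may vanish under mere strict quasiconcavity) and instead invoke monotone comparative statics. A direct computation gives $\partial_{x_D}\partial_p \pi_{G_i}(p\mid x_D)=-u''(x_D-p)G_i(2p)+2g_i(2p)u'(x_D-p)$, which is $\geq 0$ by concavity of $u$ together with $u'>0$, and strictly positive wherever $g_i(2p)>0$. Hence $\pi_{G_i}$ has weakly increasing differences in $(p,x_D)$ everywhere and strictly increasing differences where the density is positive. Since Assumption~\ref{AssGeneral} makes the maximizer unique, Topkis's theorem yields that $m_i$ is nondecreasing on all of $X$, while its strict version (Edlin--Shannon) yields that $m_i$ is strictly increasing on any region where $G_i$ is $C^1$ with $g_i(2m_i(x_D))>0$. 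Continuity of $m_i$ (wherever the maximizer is unique, hence everywhere) follows from Berge's maximum theorem, using the joint continuity of $\pi_{G_i}$ in $(p,x_D)$ already recorded before the theorem.

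The crux is continuity at the two junction points $\mu_{G_1}^-$ and $\mu_{G_0}^+$, i.e.\ showing the outer pieces meet the line $x_D/2$. Here the key observation is that at $p=x_D/2$ one has $x_D-p=p$, so $u(x_D-p)-u(p)=0$ and the entire density-dependent term in $\partial_p\pi_{G_i}(p\mid x_D)=-u'(x_D-p)G_i(2p)+u'(p)(1-G_i(2p))+2g_i(2p)\big[u(x_D-p)-u(p)\big]$ drops out, even across a kink or a flat stretch of $G_i$. Evaluating the remaining terms at a median, where $G_i(2p)=G_i(\mu)=\tfrac12$, gives $\partial_p\pi_{G_i}(\mu/2\mid\mu)=-\tfrac12 u'(\mu/2)+\tfrac12 u'(\mu/2)=0$ for both the left and the right derivative. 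By strict quasiconcavity this critical point is the unique maximizer, so $m_1(\mu_{G_1}^-)=\mu_{G_1}^-/2$ and $m_0(\mu_{G_0}^+)=\mu_{G_0}^+/2$, which is precisely the value of the middle piece there; this yields continuity at the junctions and shows the pieces fit without a downward jump.

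Finally I would assemble the global statement: each piece is nondecreasing and the values agree at the two endpoints, so $m$ is nondecreasing on all of $X$, and an increasing function on an interval is Borel measurable with at most countably many discontinuities, giving ``a.s.\ continuous and measurable'' for free. I expect the main obstacle to be exactly the junction analysis — in particular verifying that the first-order condition continues to hold at $p=x_D/2$ when $G_i$ fails to be differentiable at the median (a flat part meeting an increasing part), which is resolved by the vanishing of the $g_i$-term noted above rather than by any smoothness of $G_i$.
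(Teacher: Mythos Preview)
Your monotonicity argument via the cross-partial $\partial_{x_D}\partial_p\pi_{G_i}\ge 0$ is exactly what the paper does in Lemma~\ref{Lemma: price announcement increasing}, just phrased through Topkis and Edlin--Shannon; nothing differs there. The genuine difference is at the junctions $\mu_{G_1}^-,\mu_{G_0}^+$. The paper never evaluates $m_i$ at the median nor invokes Berge; it simply uses the sandwich from Corollary~\ref{Corollary: B xm and fix points}: for $x_D<\mu_{G_1}^-$ one has $x_D<2m(x_D)<\mu_{G_1}^-$, so letting $x_D\nearrow\mu_{G_1}^-$ squeezes $2m(x_D)\to\mu_{G_1}^-$ with no regularity of $m_i$ needed (and symmetrically on the right). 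Your route --- show $m_i(\mu)=\mu/2$ directly from the FOC and then transport this via continuity of $m_i$ --- also works, but two points need tightening. First, ``by strict quasiconcavity this critical point is the unique maximizer'' is not valid on its own: a strictly quasiconcave function can have both one-sided derivatives vanish away from its peak (think $x\mapsto x^3$ on $[-1,1]$). The paper closes precisely this gap via Lemma~\ref{Lemma: strict concavity in 2p=xD} (strict concavity of $\pi_{G_i}$ at $2p=x_D$), which makes your critical point a strict local, hence by strict quasiconcavity the global, maximum; you should invoke it. Second, what the paper records before the theorem is continuity of the minimum $\pi$, not of $\pi_{G_0}$ individually --- $\pi_{G_0}(\cdot\mid x_D)$ may jump at $2p=x_h$ when $G_0$ does. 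This is harmless for your purposes (near $\mu_{G_0}^+$ and on the $C^1$ regions you are away from $x_h$), but your Berge appeal should be localized accordingly rather than cited as ``already recorded''.
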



\subsection{Efficiency}\label{Section: Efficiency}
Only one co-owner remains in the company after the Texas Shoot-Out mechanism has been completed. Desirably, the agent with the highest valuation becomes the sole owner. In line with \cite{mcafee92}, we then call outcome \emph{allocatively efficient}.

Our previous results determine exactly under which circumstances the Texas Shoot-Out under Knightian Uncertainty yields an allocatively efficient outcome: Recall that the chooser sells her shares if and only if $2p \geq x_D$. Now, on the one hand side, for $x_D \in [\mu_{G_1}^-, \mu_{G_0}^+]$ the divider announces $2p^* =2m(x_D)= x_D$ by Theorem \ref{Theorem:divider characterization} and consequently, the divider obtains the good if and only if $x_D \geq x_C$. On the other hand side, for $x_D < \mu_{G_1}^-$ the divider announces a price fulfilling $2p^* = 2m(x_D)>x_D$ by Corollary \ref{Corollary: B xm and fix points} leading to an inefficient allocation if and only if $x_C \in [x_D,2m(x_D)]$, where the divider obtains the company while having a lower valuation than the chooser. Similarly, for $x_D > \mu_{G_0}^+$ inefficiencies arise if and only if $x_C \in [2m(x_D),x_D]$. Figure \ref{Figure: inefficiencies uniform case} depicts the profiles $(x_D,x_C)$ for which the Texas Shoot-Out generates inefficient locations in the settings of Example \ref{Example: no uncertainty} and Example \ref{Example: uniform distribution}.

\begin{figure}[ht]
    \centering

    \begin{tikzpicture}[scale=4.5]
    \draw[<->](0,1.1)node[left]{$x_C$}--(0,1)--(-0.01,1)node[left]{1}--(0,1)--(0,1/2)--(-0.01,1/2)node[left]{$\tfrac{1}{2}$}--(0,1/2)--(0,0)node[below]{$0$}--(1/2,0)--(1/2,-0.01)node[below]{$\tfrac{1}{2}$}--(1/2,0)--(1,0)--(1,-0.01)node[below]{$1$}--(1,0)--(1.1,0)node[below]{$x_D$};
    \draw (0,1)--(1,1)--(1,0);
    \draw[densely dotted](0,0)--(1,1);
    \draw (0,1/4)--(-0.01,1/4)node[left]{$\tfrac{1}{4}$};
    \draw (1,3/4)--(1.01,3/4)node[right]{$\tfrac{3}{4}$};
    \filldraw[red,opacity=0.2](0,1/4)--(0,0)--(1,1)--(1,3/4)--cycle;
    \draw (1/2,-3/10)node[]{no uncertainty};

    \begin{scope}[xshift=1.5cm]
    \draw[<->](0,1.1)node[left]{$x_C$}--(0,1)--(-0.01,1)node[left]{1}--(0,1)--(0,0)node[below]{$0$}--(1,0)--(1,-0.01)node[below]{$1$}--(1,0)--(1.1,0)node[below]{$x_D$};
    \draw (0,1)--(1,1)--(1,0);
    \draw[densely dotted](0,0)--(1,1);
    \draw (0,3/20)--(-0.01,3/20)node[left]{$\tfrac{3}{20}$};
    \draw (1,17/20)--(1.01,17/20)node[right]{$\tfrac{17}{20}$};
    
    \draw[densely dotted] (3/10,3/10)--(3/10,-0.01)node[below]{$\tfrac{3}{10}$};
    \draw[densely dotted] (7/10,7/10)--(7/10,-0.01)node[below]{$\tfrac{7}{10}$};
    
    \filldraw[red,opacity=0.2](0,3/20)--(0,0)--(3/10,3/10)--cycle;
    \filldraw[red,opacity=0.2] (1,1)--(1,17/20)--(7/10,7/10)--cycle;

    \draw (1/2,-3/10)node[]{uncertainty};
    \end{scope}
    
    \end{tikzpicture}
    \caption{Valuation profiles for which the Texas Shoot-Out yields inefficient allocations in red for the uniform distribution $F$ in the settings of Example \ref{Example: no uncertainty} (no uncertainty) and Example \ref{Example: uniform distribution} (uncertainty for $\cG(F,\tfrac{1}{5})$).}
    \label{Figure: inefficiencies uniform case}
\end{figure}
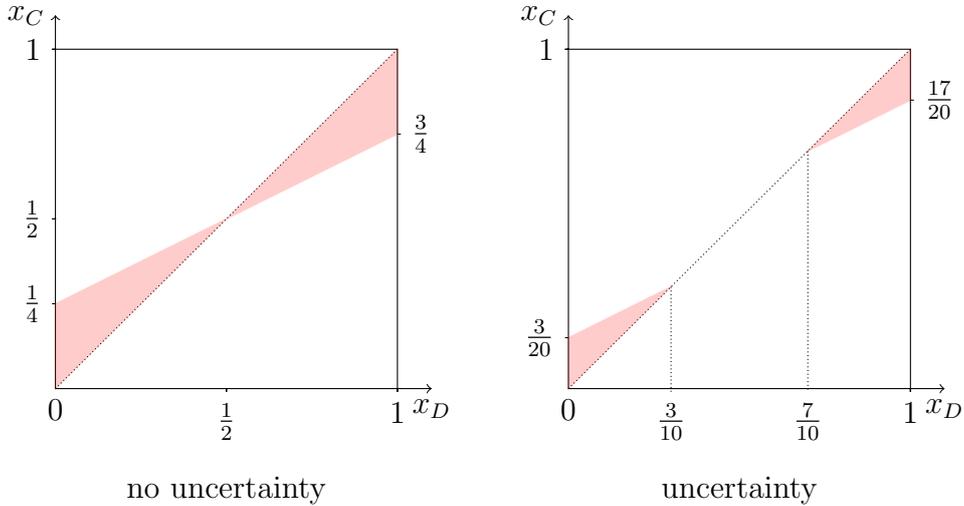

Introducing Knightian Uncertainty to the Texas Shoot-Out mechanism increases allocatively efficiency in comparison to the Bayesian case. More valuation profiles will lead to an efficient allocation, especially all those for which the divider's valuation is close to the believed (uncertain) median of the chooser's valuation. But also for extremely low or high valuations, an increase in Knightian Uncertainty will increase efficiency by weakening the divider's bargaining power, leading to her announcing prices closer to half her valuation.

\subsection{Interim Utility}\label{Section: interim utility}

Having a valuation $x$ and modeling uncertainty about the peer's valuation by a distribution band $\cG$, we can ask whether an agent prefers to be the divider (and initiate the Texas Shoot-Out) or the chooser. Focusing on a risk neutral agent, we will see that there always is a preference for being the chooser (rather waiting for the peer to initiate it). This preference is strict for all valuations if the uncertainty is small, while for high levels of uncertainty it is strict only for low/high valuations, unless full uncertainty is faced.

We assume some more regularity on the distribution functions $G_0,G_1$. Put simply, we require them to be twice continuously differentiable with positive density where they are not $0$ or $1$. In more detail, in addition to $G_1$ being continuous and $G_0$ being continuous on $[\xl,\xu)$, we assume that there exists values $\a,\b \in X$ such that $G_0$ is zero on $[\xl,\a]$, twice continuously differentiable with positive density on $[\a,\xu)$ and $G_1$ is twice continuously differentiable with positive density on $[\xl, \b]$ and equal to $1$ on $[\b,\xu]$. Especially, they will each have a unique median $\mu_{G_0}, \mu_{G_1}$. Note that this is automatically fulfilled if we consider $\cG(F,\epsilon)$ around an atom-less CDF $F$ that is twice continuously differentiable with positive density on $X$.\bigskip

By Theorem \ref{Theorem:divider characterization} interim worst-case EU of a divider with valuation $x_D$ can be immediately written down to be
\begin{equation*}
    \Phi_D (x_D) := \pi(m(x_D) \mid x_D) = 
    \begin{cases}
    \pi_{G_1}(m_{1}(x_D) \mid x_D) &,  x_D < \mu_{G_1},\\
    \frac{x_D}{2} &, \mu_{G_1} \leq x_D \leq \mu_{G_0},\\
   \pi_{G_0}(m_0(x_D) \mid x_D) &, \mu_{G_0} < x_D.
    \end{cases}
\end{equation*}

So far, we have not investigated the chooser's interim expected utility, but will do so now. To this end, assume a symmetric setting where the chooser faces the same uncertainty about the peer's valuation as the divider, i.e. she beliefs the divider's valuation to be drawn from the set $\cG$ while having own valuation $x_C$. Her interim worst-case expected payoff in equilibrium is given by
\begin{align*}
    \Phi_C(x_C) := \min_{G \in \cG} \mathbb{E}_{G} \left[ \max\{ u(x_C - m(z)), u(m(z)) \} \right],
\end{align*}
where the expectation is taken w.r.t. the CDF $G$ over the divider's valuations, denoted by the variable $z$, and $m(z)$ is the divider's optimal price announcement if her valuation is $z$.

Since $2m(z)$ is increasing with range $[2m(\xl),2m(\xu)]$ by Corollary \ref{Corollary: B cont and str increasing} we can draw the following conclusion about the worst-case beliefs of the chooser.

\begin{lemma}\label{Lemma: Worst-case distributions chooser}
Worst-case beliefs about the divider's valuation for a chooser with valuation $x_C$ are given by
\begin{equation*}
    \begin{cases}
    G_1 &, x_C<2m(\xl),\\
    G^*_{x_C} &, 2m(\xl) \leq x_C \leq 2m(\xu),\\
    G_0 &, 2m(\xu)<x_C,
    \end{cases}
\end{equation*}
where we define
\begin{equation*}
    G^*_{x_C}(z) = \begin{cases}
        G_0(z) &, \xl \leq z < z_*,\\
        G_1(z) &, z_* \leq z \leq \xu
        \end{cases}
\end{equation*}
for the unique $z_*$ fulfilling $2m(z_*) = x_C$.
\end{lemma}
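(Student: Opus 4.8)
The plan is to study the shape of the integrand $h(z) := \max\{u(x_C - m(z)), u(m(z))\}$ as a function of the divider's valuation $z$, and then to minimize $\E_G[h]$ over the band $\cG$ by a first-order stochastic dominance comparison. First I would record that, under the regularity assumed throughout this subsection, Corollary \ref{Corollary: B cont and str increasing} makes $z \mapsto m(z)$, and hence $z \mapsto 2m(z)$, continuous and (on the relevant range) strictly increasing on all of $X$, with range $[2m(\xl), 2m(\xu)]$. This pins down the equation $2m(z_*) = x_C$: it is uniquely solvable in $X$ exactly when $x_C \in [2m(\xl), 2m(\xu)]$, which is the middle case of the Lemma, and has no interior solution in the two boundary cases.

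Next I would determine the monotonicity of $h$. Since $m$ is increasing, $u(m(z))$ is nondecreasing while $u(x_C - m(z))$ is nonincreasing in $z$; and as $u$ is strictly increasing, $u(x_C - m(z)) \gtrless u(m(z))$ according as $x_C \gtrless 2m(z)$. Hence $h(z) = u(x_C - m(z))$ while $2m(z) < x_C$ and $h(z) = u(m(z))$ once $2m(z) \geq x_C$, so $h$ is a continuous ``valley'': strictly decreasing before $z_*$, strictly increasing after $z_*$, with crossover value $h(z_*) = u(x_C/2)$. In the boundary cases the valley degenerates: if $x_C < 2m(\xl)$ then $2m(z) > x_C$ for every $z$ and $h$ is increasing throughout, while if $x_C > 2m(\xu)$ then $h$ is decreasing throughout.

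Then I would carry out the minimization. The candidate $G^*_{x_C}$ lies in $\cG$: it equals $G_0$ below $z_*$ and $G_1$ above, so $G_0 \leq G^*_{x_C} \leq G_1$ everywhere, it is nondecreasing, and it is right-continuous with an upward jump of size $G_1(z_*) - G_0(z_*) \geq 0$ at $z_*$. For any $G \in \cG$, since $h$ is continuous and of bounded variation (a maximum of two monotone functions) and every CDF in $\cG$ satisfies $G(\xu) = 1$ with no mass left of $\xl$, Stieltjes integration by parts yields the clean identity
\[
\E_G[h] - \E_{G^*_{x_C}}[h] = \int_X \left( G^*_{x_C}(z) - G(z) \right) \, dh(z),
\]
the boundary terms $h(\xu)$ cancelling. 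On $[\xl, z_*)$ we have $G^*_{x_C} = G_0 \leq G$ and $dh \leq 0$, whereas on $(z_*, \xu]$ we have $G^*_{x_C} = G_1 \geq G$ and $dh \geq 0$; in either region the product $(G^*_{x_C} - G)\,dh \geq 0$, so the integral is nonnegative and $\E_G[h] \geq \E_{G^*_{x_C}}[h]$. This proves $G^*_{x_C}$ is the worst-case belief on the middle range. The identical computation with $z_*$ pushed to $\xl$ (resp. $\xu$) covers the degenerate cases, giving the minimizer $G_1$ when $x_C < 2m(\xl)$ and $G_0$ when $x_C > 2m(\xu)$; equivalently, minimizing an increasing (resp. decreasing) integrand over the band selects the dominated $G_1$ (resp. the dominant $G_0$) by first-order stochastic dominance.

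The main obstacle I anticipate is making the integration-by-parts step rigorous across the whole band despite the possible atoms of $G_0$ at $\xl$ and $\xu$. This is exactly where the continuity of $h$ (established in the second paragraph) is essential: since $dh$ carries no atoms and no discontinuity of $h$ coincides with a jump of $G$, the Stieltjes identity and the sign bookkeeping hold without correction terms. The only remaining care is to confirm, as above, that the glued function $G^*_{x_C}$ is a genuine right-continuous CDF sitting inside $\cG$.
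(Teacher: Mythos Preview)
Your argument is correct and rests on the same structural insight as the paper's proof: the integrand $h(z)=\max\{u(x_C-m(z)),u(m(z))\}$ is a continuous valley with trough at $z_*$, so the minimizing CDF in the band should track $G_0$ below $z_*$ and $G_1$ above (and degenerate to a single extremal in the boundary cases). Where you diverge is in the verification step for the middle case. The paper partitions $X$ at $z_*$ and argues directly by first-order stochastic dominance on each piece---the integrand is monotone on each subinterval, so the restricted expectation is minimized by the appropriate extremal, and the band structure guarantees that the glued $G^*_{x_C}$ realizes both minima simultaneously. You instead invoke the Stieltjes integration-by-parts identity $\E_G[h]-\E_{G^*_{x_C}}[h]=\int(G^*_{x_C}-G)\,dh$ and read off the sign termwise. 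Your route is slightly more self-contained: it makes explicit why the two pieces do not interact (no atom of $dh$ at $z_*$) and handles possible atoms of $G$ at the endpoints in one line, whereas the paper's phrasing ``the expectation is minimized by $G_0$'' on a subinterval tacitly relies on the pointwise freedom of the band. Conversely, the paper's argument is more elementary in that it avoids any integration-by-parts machinery. Both are valid and close in spirit.
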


As the chooser can always pick her favorite piece of the cake among $u(x_C-p)$ and $u(p)$, she is worse off, the closer $p$ is to $\tfrac{x_C}{2}$. Thus, if there is a valuation $z_*$ of a divider that induces the price $2m(z_*) = x_C$, the chooser's worst-case belief is given by $G^*_{x_C}$, i.e. the distribution in $\cG$ that puts most weight at and around $z_*$.We depicted an example of $G^*_{x_C}$ in Figure \ref{Figure: G star}.

\begin{figure}[ht]
\begin{center}
\begin{tikzpicture}[scale=6]
\draw[->] (-0.1,0)--(1,0)node[below]{$1$}--(1.2,0)node[below]{$x$};
\draw[->] (0,0)node[below]{$0$}--(0,1)node[left]{$1$}--(0,1.1)node[left]{$\mathbb{P}$};
\draw[densely dotted] (0,1)--(1,1);
\draw[densely dotted] (1,0)--(1,1);

\draw[dashed, thick, blue] (0,0)--(1/5,0)--(1,4/5);
\draw[dashed, thick, blue] (0,1/5)--(4/5,1)--(1,1);
\draw [blue,thick,fill=blue] (0,1/5) circle(0.25pt);
\draw[blue,thick,fill=white] (1,4/5) circle(0.25pt);
\draw[blue] (0.85,0.61)node[right]{$G_0$};
\draw[blue] (0.26,0.5)node[left]{$G_1$};
\draw[red] (0.4,0.25)node[left,scale=1.3]{$G^*_{x_C}$};

\draw[ultra thick, red] (0,0)--(1/5,0)node[below,black]{$x^0 = \tfrac{1}{5}$}--(1/2,3/10);
\draw[densely dotted](4/5,1)--(4/5,0)node[below]{$x^1 = \tfrac{4}{5}$};
\draw[red, ultra thick, densely dotted] (1/2,3/10)--(1/2,7/10);
\draw[densely dotted](1/2,3/10)--(1/2,0)node[below]{$z_*$};
\filldraw[red, fill=white] (1/2,3/10) circle(0.35pt);
\filldraw[red, fill=red] (1/2,7/10) circle(0.35pt);
\draw[ultra thick,red](1/2,7/10)--(8/10,1)--(1,1);
\draw[red,fill=red] (1,1) circle(0.35pt);

\end{tikzpicture}
\end{center}
\caption{Illustration of the worst-case distribution $G^*_{x_C}$ of the chooser in case of $2m(z_*) = x_C$ given $\cG(\mathcal{U}([0,1]),\tfrac{1}{5})$. $G^*_{x_C}$ puts most weight on the valuation $z_*$ of the divider, that leads to the worst offer for the chooser with valuation $x_C$. Here $x_C = \tfrac{1}{2}$ and thus $z_* = \tfrac{1}{2}$ by Example \ref{Example: uniform distribution}}
\label{Figure: G star}
\end{figure}
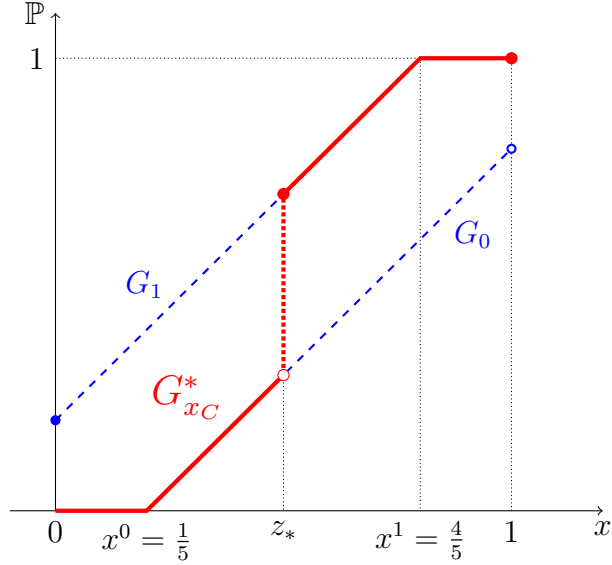

In the following, we will restrict to the case of risk neutral agents only, i.e. $u = \id$. A useful tool in preparation of the interim EU comparison and  for explicit calculations is the characterization of the derivatives of $\Phi_D$ and $\Phi_C$.

\begin{lemma}\label{Lemma: derivatives of interim expected utility}
Let $u = \id$. The functions $\Phi_D$ and $\Phi_C$ are continuous, increasing and piecewise differentiable with derivatives
\begin{align*}
    \Phi'_D(x) &= 
    \begin{cases}
    G_1(2m_1(x)) &,  x < \mu_{G_1},\\
    \frac{1}{2} &, \mu_{G_1} \leq x \leq \mu_{G_0},\\
    G_0(2m_0(x)) &, \mu_{G_0} < x.
    \end{cases}\\
    &\text{and}\\
    \Phi'_C(x) &= 
    \begin{cases}
    0&, \xl \leq x < 2m(\xl),\\
    \frac{1}{2} \cdot \left( G_1(m^{-1}(\tfrac{x}{2})) + G_0(m^{-1}(\tfrac{x}{2})) \right) &, 2m(\xl) \leq x \leq 2m(\xu)\\
    1 &, 2m(\xu) < x \leq \xu.
    \end{cases}
\end{align*}
\end{lemma}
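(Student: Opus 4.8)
The plan is to handle $\Phi_D$ and $\Phi_C$ separately, in each case exploiting that the function is the optimal value of an optimization problem whose optimizer has already been identified, so that an envelope (first-order-condition) argument isolates the derivative. Throughout I rely on the standing regularity of this subsection: $G_0,G_1$ are twice continuously differentiable with positive densities $g_0,g_1$ where they are not flat, they have unique medians $\mu_{G_0},\mu_{G_1}$, and $m$ is continuous and strictly increasing with the explicit three-piece form of Theorem \ref{Theorem:divider characterization} and Corollary \ref{Corollary: B cont and str increasing}. Since every derivative claimed below is nonnegative, monotonicity is immediate; continuity follows because both $\Phi_D$ (a value of a maximization) and $\Phi_C$ (a value of a minimization) are continuous in their argument and the explicit pieces agree at the finitely many junction points, which are exactly the points where the qualifier \emph{piecewise} is needed.

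For $\Phi_D$ I start from the explicit expression preceding the lemma. On the middle block $\Phi_D(x)=x/2$, so $\Phi_D'(x)=\tfrac12$. On the low block $x<\mu_{G_1}$ we have $\Phi_D(x)=\pi_{G_1}(m_1(x)\mid x)$, where by Corollary \ref{Corollary: B xm and fix points} the maximizer is interior ($x<2m_1(x)<\mu_{G_1}$); applying the implicit function theorem to the first-order condition $\partial_p\pi_{G_1}=0$ (using $g_1>0$ and strict quasiconcavity to guarantee a nonvanishing second-order term) shows $m_1$ is $C^1$ there. The chain rule then gives $\Phi_D'(x)=\partial_p\pi_{G_1}(m_1(x)\mid x)\,m_1'(x)+\partial_x\pi_{G_1}(m_1(x)\mid x)$; the first summand vanishes by the first-order condition, and since $\pi_{G_1}(p\mid x)=(x-p)G_1(2p)+p(1-G_1(2p))$ one reads off $\partial_x\pi_{G_1}=G_1(2p)$, so $\Phi_D'(x)=G_1(2m_1(x))$. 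The high block $x>\mu_{G_0}$ is symmetric, producing $G_0(2m_0(x))$.

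For $\Phi_C$ I invoke Lemma \ref{Lemma: Worst-case distributions chooser} to replace the inner minimum by the explicit worst-case belief. If $x<2m(\xl)$ the worst case is $G_1$ and the chooser sells against every divider type, so $\Phi_C(x)=\E_{G_1}[m(z)]$ is constant and $\Phi_C'(x)=0$; if $x>2m(\xu)$ the worst case is $G_0$ and she buys against every type, so $\Phi_C(x)=x-\E_{G_0}[m(z)]$ and $\Phi_C'(x)=1$. On the middle block $2m(\xl)\le x\le 2m(\xu)$ let $z_*=m^{-1}(x/2)$, the unique type with $2m(z_*)=x$. Under $G^*_{x}$ the chooser buys (payoff $x-m(z)$) for $z<z_*$ where the belief follows $G_0$, sells (payoff $m(z)$) for $z>z_*$ where it follows $G_1$, and at $z_*$ she is indifferent with payoff $x/2$ carried by the atom of mass $G_1(z_*)-G_0(z_*)$. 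Hence $\Phi_C(x)=\int_{\xl}^{z_*}(x-m(z))g_0(z)\,dz+(G_1(z_*)-G_0(z_*))\tfrac{x}{2}+\int_{z_*}^{\xu}m(z)g_1(z)\,dz$.

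Differentiating this with the Leibniz rule is the crux. Every term generated by the moving limit $z_*(x)$ carries the factor $\tfrac{x}{2}\,z_*'(x)$ — using $m(z_*)=x/2$ and $x-m(z_*)=x/2$ — and the three such contributions combine with coefficient $g_0(z_*)+g_1(z_*)-g_0(z_*)-g_1(z_*)=0$, so they cancel; this is precisely the envelope effect, and because the coefficient is identically zero it holds for both one-sided limits, giving genuine differentiability on the open middle interval even at the images of the kinks of $m$. What survives is $G_0(z_*)$ from the direct $x$-dependence of the buy integral plus $\tfrac12(G_1(z_*)-G_0(z_*))$ from the atom, i.e. $\Phi_C'(x)=\tfrac12\big(G_0(z_*)+G_1(z_*)\big)$ with $z_*=m^{-1}(x/2)$, as claimed. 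The main obstacle, and the source of the telltale factor $\tfrac12$, is exactly this atom: it sits on the chooser's indifference (kink) point, so a naive envelope argument with the minimizer frozen only bounds the one-sided derivatives by $G_0(z_*)$ and $G_1(z_*)$; one must let the atom slide with $x$ to obtain the sharp value.
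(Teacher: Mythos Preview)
Your proof is correct and follows essentially the same approach as the paper: an envelope/first-order-condition argument for $\Phi_D$ on the outer blocks, and a Leibniz-rule computation for $\Phi_C$ in which the boundary terms coming from the moving limit $z_*(x)$ cancel, leaving $G_0(z_*)+\tfrac12(G_1(z_*)-G_0(z_*))$. The only cosmetic difference is that the paper writes the integrals in measure form $G_i(\mathrm{d}z)$ rather than with densities $g_i$, and carries out the cancellation via $\tfrac{\partial}{\partial x}\mu^{G_i}((z_*,C))=-g_i(z_*)\,z_*'(x)$; your explicit coefficient $g_0+g_1-g_0-g_1=0$ makes the same point more transparently, and your observation that this identity being identically zero handles the kinks of $m$ is a nice addition.
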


Thus, the higher the valuation one has, the higher the interim worst-case expected utility. In fact, from the functional form of the derivatives, the utility strictly increases with increasing valuation except for a chooser facing very low valuations.\\ 

By means of the above lemma we can show that one is never worse off being the chooser and can characterize precisely when this preference is strict: One strictly prefers to be the chooser for all valuations for small levels of uncertainty. This observation stays true for very low/high valuations if uncertainty is increased until full uncertainty is faced.

\begin{theorem}\label{Theorem: interim EU comparison}
If $\a<\b$, we have $\Phi_D(x)<\Phi_C(x)$ for all $x \in X$. If $\b \leq \a$, $\Phi_D(x) = \Phi_C(x) = \tfrac{x}{2}$ for $x \in [\b, \a]$ and $\Phi_D(x) < \Phi_C(x)$ otherwise.
\end{theorem}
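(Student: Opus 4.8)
The plan is to analyze the difference $D(x) := \Phi_C(x) - \Phi_D(x)$, show that $D \ge 0$ throughout, and locate precisely where it vanishes. Everything rests on two elementary observations. First, since a chooser with valuation $x$ always takes the larger of the two pieces $x - m(z)$ and $m(z)$, each realization of her payoff is at least $\tfrac{x}{2}$, so $\Phi_C(x) = \min_{G \in \cG} \E_G[\max\{x - m(z), m(z)\}] \ge \tfrac{x}{2}$; moreover, because $m$ is strictly increasing (Corollary \ref{Corollary: B cont and str increasing} together with the added regularity), this minimum equals $\tfrac{x}{2}$ \emph{if and only if} the worst-case belief (available by Lemma \ref{Lemma: Worst-case distributions chooser}) concentrates on the unique $z_*$ with $2m(z_*) = x$, i.e. iff $\delta_{z_*} \in \cG$. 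Second, the displayed formula for $\Phi_D$ gives $\Phi_D(x) = \tfrac{x}{2}$ exactly on the hedging region $[\mu_{G_1}, \mu_{G_0}]$. Hence on that region $D(x) \ge 0$, with equality iff $\delta_{z_*} \in \cG$.

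First I would settle the hedging region. A Dirac mass $\delta_{z_*}$ lies in $\cG$ iff $G_0(z) \le \mathbf{1}_{\{z \ge z_*\}} \le G_1(z)$ for all $z$, which by the structure of $G_0,G_1$ (zero up to $\a$, equal to one from $\b$ onward) translates into $\b \le z_* \le \a$. On $[\mu_{G_1},\mu_{G_0}]$ one has $2m(z)=z$, hence $z_* = x$, so $D(x)=0$ iff $x \in [\b,\a]$. Since the medians sit strictly inside the supports, $\mu_{G_1} < \b \le \a < \mu_{G_0}$ whenever $\b \le \a$, so $[\b,\a] \subseteq [\mu_{G_1},\mu_{G_0}]$; this already produces the indifference set of the theorem and shows $D>0$ on the rest of the hedging region.

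For the two outer regions I would anchor at the medians, where $D(\mu_{G_1}), D(\mu_{G_0}) \ge 0$ by the above, and integrate the derivatives from Lemma \ref{Lemma: derivatives of interim expected utility}. The crux is the derivative comparison. For $x > \mu_{G_0}$ one has $\Phi_D'(x) = G_0(2m_0(x))$, while $\Phi_C'(x)$ equals $1$ for $x > 2m(\xu)$ and $\tfrac12\big(G_0(z_*)+G_1(z_*)\big)$ for $\mu_{G_0} < x \le 2m(\xu)$, where $2m(z_*)=x$. The key calculation is that $m_0(x) < \tfrac{x}{2} = m(z_*)$ forces the ordering $z_* > x > 2m_0(x)$; combined with $G_1 \ge G_0$ and monotonicity this yields $\Phi_C'(x) \ge G_0(z_*) > G_0(2m_0(x)) = \Phi_D'(x)$ on the middle piece and $\Phi_C'(x) = 1 \ge \Phi_D'(x)$ on the outer piece. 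Integrating from $\mu_{G_0}$ and noting the inequality is strict on a nonempty subinterval shows $D(x) > 0$ for every $x > \mu_{G_0}$. The region $x < \mu_{G_1}$ is entirely symmetric under the reflection $x \mapsto \xl + \xu - x$, which swaps $G_0 \leftrightarrow G_1$ and interchanges the two outer regions, giving $D(x) > 0$ there as well.

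Finally I would assemble the cases. If $\a < \b$ the indifference set $[\b,\a]$ is empty, so $\Phi_C(x) > \tfrac{x}{2}$ on all of the hedging region while $D > 0$ on both outer regions, yielding $\Phi_D < \Phi_C$ everywhere. If $\b \le \a$, then $D = 0$ exactly on $[\b,\a] \subseteq [\mu_{G_1},\mu_{G_0}]$, where $\Phi_D = \Phi_C = \tfrac{x}{2}$, and $D>0$ elsewhere. The main obstacle is the derivative inequality, since the arguments appearing in $\Phi_C'$ and $\Phi_D'$ are different objects (the valuation $z_*$ versus the price-point $2m_0(x)$); the comparison only goes through after establishing the ordering $z_* > x > 2m_0(x)$ and exploiting the band inequality $G_0 \le G_1$, which is where strict quasiconcavity, strict monotonicity of $m$, and the explicit derivative formulas all come together.
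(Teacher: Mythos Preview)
Your proposal is correct and follows essentially the same route as the paper: establish $\Phi_C(x)\ge \tfrac{x}{2}$ with equality iff a Dirac mass at $z_*$ belongs to $\cG$, then on the outer regions compare the derivatives from Lemma \ref{Lemma: derivatives of interim expected utility} via the ordering $2m_0(x)<x<z_*$ (your derivation of this is in fact slightly more direct than the paper's $2m(2m(x))<x$). One caveat: the reflection $x\mapsto \xl+\xu-x$ does not literally swap $G_0$ and $G_1$ in general, so ``symmetric'' should be read as ``analogous''; the paper carries out the mirror argument for $x<\mu_{G_1}$ explicitly, showing $\Phi_C'(x)<\Phi_D'(x)$ there and integrating downward from $\mu_{G_1}$.
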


Especially, under full uncertainty both, the divider and the chooser, have an interim worst-case expected utility equal to half their valuation.

We conclude this section by examining our uniform example for an interim utility comparison.
\begin{example}\label{Example: interim wc EU}
Consider again $F \sim \mathcal{U}([0,1])$ for risk-neutral agents, i.e. $u = \id$. Since we have already calculated $m(x)$, we can calculate $\pi(m(x) \mid x)$ for a divider with valuation $x$, obtaining her interim worst-case expected utility
\begin{align*}
    \Phi_D(x) = \begin{cases}
    \frac{1}{4}x^2 + (\frac{1}{4} + \tfrac{\epsilon}{2}) \cdot x + \frac{\epsilon^2}{4} - \frac{\epsilon}{4} + \frac{1}{16} &,0 \leq x < \frac{1}{2} - \epsilon\\
    \frac{x}{2} &, \frac{1}{2} - \epsilon \leq x \leq \frac{1}{2} + \epsilon,\\
    \frac{1}{4}x^2 + (\frac{1}{4} - \frac{\epsilon}{2}) \cdot x + \frac{\epsilon^2}{4} + \frac{\epsilon}{4} + \frac{1}{16} &, \frac{1}{2} + \epsilon < x \leq 1,
    \end{cases}
\end{align*}
where for $\epsilon > \tfrac{1}{2}$ the function should be read as being equal to the function $\tfrac{x}{2}$ everywhere.

In order to calculate $\Phi_C$ we need to distinguish three cases $0 \leq \epsilon \leq \tfrac{1}{4}$, $\tfrac{1}{4}< \epsilon \leq \tfrac{1}{2}$ and $\tfrac{1}{2} < \epsilon \leq 1$ since not only the pasting points for $m$ (or $m^{-1}$) play a role, but also the kinks of $G_0, G_1$.\footnote{If $\epsilon \leq \tfrac{1}{4}$, we have $\epsilon \leq \tfrac{1}{2}-\epsilon \leq \tfrac{1}{2}+\epsilon \leq 1-\epsilon$ while for $\tfrac{1}{4}<\epsilon \leq \tfrac{1}{2}$ we have $\tfrac{1}{2}-\epsilon < \epsilon \leq 1-\epsilon < \tfrac{1}{2}+\epsilon$ and $\tfrac{1}{2} < \epsilon$ implies  $\tfrac{1}{2}-\epsilon <  1-\epsilon < \epsilon < \tfrac{1}{2}+\epsilon$.} Also note that for $\epsilon \geq 1$ we have full uncertainty and the situation won't change anymore. All the explicit formulae are delegated to the appendix.

Figure \ref{Figure: interim wc EU comparison} graphically illustrates $\Phi_D$ and $\Phi_C$ for $\epsilon = 0.02,0.4,0.6$.

\begin{figure}[ht]
    \centering
    \begin{tikzpicture}[scale=4.2]
    \draw[<->](0,0.8)node[left]{EU}--(0,0.5)node[left]{$0.5$}--(-0.01,0.5)--(0,0.5)--(0,0)node[left]{$0$}node[below]{$0$}--(1,0)node[below]{$1$}--(1,-0.01)--(1,0)--(1.2,0)node[below]{$x$};
    \draw (0,0.24485)node[left,scale=0.75]{$\approx 0.25$};
    
    \draw (1.06,0.75)node[red]{$\Phi_C$};
    \draw (1.06,0.56)node[blue]{$\Phi_D$};
    
    
    \draw[scale=1, domain=0:12/25, smooth, ultra thick, variable=\x, blue] plot ({\x}, {1/4*\x*\x + 13/50*\x + 36/625});
    \draw[scale=1, domain=12/25:13/25, smooth, ultra thick, variable=\x, blue] plot ({\x}, {1/2*\x});
    \draw[scale=1, domain=13/25:1, smooth, ultra thick, variable=\x, blue] plot ({\x}, {1/4 * \x*\x +6/25*\x +169/2500});
    
    \draw[scale=1, domain=0:6/25, smooth, ultra thick, variable=\x, red] plot ({\x}, {0.24485});
    \draw[scale=1, domain=6/25:1/4, smooth, ultra thick, variable=\x, red] plot ({\x}, {1/2*\x*\x - 23/100*\x + 217/800});
    \draw[scale=1, domain=1/4:12/25, smooth, ultra thick, variable=\x, red] plot ({\x}, {\x*\x-12/25*\x + 121/400});
    \draw[scale=1, domain=12/25:13/25, smooth, ultra thick, variable=\x, red] plot ({\x}, {1/2*\x*\x + 1873/10000});
    \draw[scale=1, domain=13/25:3/4, smooth, ultra thick, variable=\x, red] plot ({\x}, {\x*\x -13/25*\x + 129/400});
    \draw[scale=1, domain=3/4:19/25, smooth, ultra thick, variable=\x, red] plot ({\x}, {1/2*\x*\x + 23/100*\x + 33/800});
    \draw[scale=1, domain=19/25:1, smooth, ultra thick, variable=\x, red] plot ({\x}, {\x-0.25515});
    
    \draw[thin, densely dotted] (0,0)--(1,0.5)node[below]{$\frac{x}{2}$};
    \draw (1/2,-0.2)node{$\epsilon = 0.02$};

    \begin{scope}[xshift=1.75cm]
    \draw[<->](0,0.8)node[left]{EU}--(0,0.5)node[left]{$0.5$}--(-0.01,0.5)--(0,0.5)--(0,0)node[left]{$0$}node[below]{$0$}--(1,0)node[below]{$1$}--(1,-0.01)--(1,0)--(1.2,0)node[below]{$x$};
    \draw(0,81/800)node[left,scale=0.75]{$\approx 0.1$};
    
    \draw (1.04,0.6)node[red,right]{$\Phi_C$};
    \draw (1.04,0.5)node[blue,right]{$\Phi_D$};
    
    \draw[scale=1, domain=0:1/10, smooth, ultra thick, variable=\x, blue] plot ({\x}, {1/4 * \x*\x + 9/20*\x + 1/400});
    \draw[scale=1, domain=1/10:9/10, smooth, ultra thick, variable=\x, blue] plot ({\x}, {1/2 * \x});
    \draw[scale=1, domain=9/10:1, smooth, ultra thick, variable=\x, blue] plot ({\x}, {1/4 * \x*\x + 1/20 * \x + 81/400});
    
    
    \draw[scale=1, domain=0:1/20, smooth, ultra thick, variable=\x, red] plot ({\x}, {81/800});
    \draw[scale=1, domain=1/20:1/10, smooth, ultra thick, variable=\x, red] plot ({\x}, {1/2 * \x*\x +3/20*\x + 37/400});
    \draw[scale=1, domain=1/10:2/5, smooth, ultra thick, variable=\x, red] plot ({\x}, {1/4 *\x*\x + 1/5*\x + 9/100});
    \draw[scale=1, domain=2/5:3/5, smooth, ultra thick, variable=\x, red] plot ({\x}, {1/2*\x*\x +13/100});
    \draw[scale=1, domain=3/5:9/10, smooth, ultra thick, variable=\x, red] plot ({\x}, {1/4*\x*\x + 3/10*\x + 1/25});
    \draw[scale=1, domain=9/10:19/20, smooth, ultra thick, variable=\x, red] plot ({\x}, {1/2 *\x*\x -3/20*\x + 97/400});
    \draw[scale=1, domain=19/20:1, smooth, ultra thick, variable=\x, red] plot ({\x}, {\x-319/800});
    \draw (1/2,-0.2)node{$\epsilon = 0.4$};
    \end{scope}
    
    \begin{scope}[yshift=-1cm]
    \draw[<->](0,0.8)node[left]{EU}--(0,0.5)node[left]{$0.5$}--(-0.01,0.5)--(0,0.5)--(0,0)node[left]{$0$}node[below]{$0$}--(1,0)node[below]{$1$}--(1,-0.01)--(1,0)--(1.2,0)node[below]{$x$};
    
    \draw[densely dotted] (0.4,0)node[below,scale=0.8]{$0.4$}--(0.4,0.2);
    \draw[densely dotted] (0.6,0)node[below,scale=0.8]{$0.6$}--(0.6,0.3);
    
    \draw (1.07,0.57)node[red]{$\Phi_C$};
    \draw (1.07,0.47)node[blue]{$\Phi_D$};
    
    
    \draw[scale=1, domain=0:1, smooth, ultra thick, variable=\x, blue] plot ({\x}, {1/2 * \x});
    
    
    \draw[scale=1, domain=0:2/5, smooth, ultra thick, variable=\x, red] plot ({\x}, {1/4 * \x * \x + 3/10 * \x + 1/25});
    \draw[scale=1, domain=2/5:3/5, smooth, ultra thick, variable=\x, red] plot ({\x}, {1/2 * \x});
    \draw[scale=1, domain=3/5:1, smooth, ultra thick, variable=\x, red] plot ({\x}, {1/4 * \x * \x + 1/5 * \x + 9/100});
    
    \draw (1/2,-0.2)node{$\epsilon = 0.6$};
    \end{scope}

    \end{tikzpicture}
    \caption{Interim worst-case EU for risk neutral dividers and choosers with valuation $x$, facing $\cG(\mathcal{U}([0,1]),\epsilon)$ for $\epsilon \in \{0.02,0.4,0.6$\}.}
    \label{Figure: interim wc EU comparison}
\end{figure}
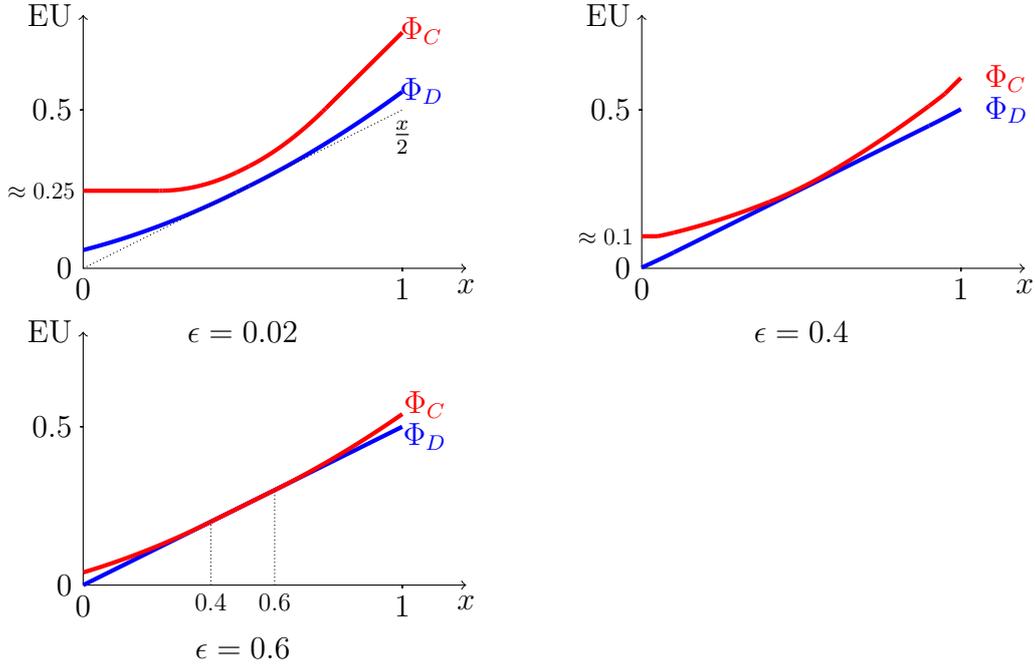

For $\epsilon < \tfrac{1}{4}$ we see that $\Phi_C(x) > \Phi_D(x)$ everywhere. In the case $\epsilon = 0.3>\tfrac{1}{4}$ we have $\Phi_C(x) = \Phi_D(x)$ precisely for $x \in [1-2\cdot 0.3, 2\cdot 0.3] = [0.4,0.6]$ as is clear from Theorem \ref{Theorem: interim EU comparison}. We also note that for increasing $\epsilon$ both functions are deforming into $x \mapsto \tfrac{x}{2}$, and the curvature of $\Phi_D$ is only visible for very small values of $\epsilon$.
\end{example}

The example combines and stresses two points of our argument for the Texas Shoot-Out under Knightian Uncertainty to being a deterrent exit mechanism. 
For small amounts of uncertainty, i.e. small $\epsilon$, the divider's interim worst-case expected payoff is close to the full-hedging payoff given by $\tfrac{x}{2}$. Thus, the Texas Shoot-Out lowers an agent's material incentive to initiate the mechanism and become a divider. We conclude that only co-owners with relatively high or low valuations might consider the exit profitable. Furthermore, the higher the uncertainty the more likely the mechanism is to achieve an efficient allocation. Secondly, even if a co-owner has a low/high valuation and is thus materially interested in ending the partnership under the terms of the Texas Shoot-Out, she would rather be in the position of the chooser and thus not triggering the mechanism in the first place.

\subsection{Correlation}\label{Section: correlation}
So far, we implicitly assumed that $\cG$ and $x_D$ are independent, implying that the divider's (uncertain) belief about the peer's valuation is the same for all her valuations. However, in the context of co-ownership one has good reason to assume that the assessment of the company's value is similar or more generally correlated among co-owners. Especially, the agents know more about the peer's valuation which could make us believe that the divider can make use of this additional information to extract profit. Interestingly however, we will see that suitable correlation will indeed achieve the exact opposite: The divider will not only play safe for more valuations but also do so for lower levels of uncertainty. It is thus not the mere lack of intelligence about the peer's valuation that fosters the efficiency of the Texas Shoot-Out in real world applications, but the interplay between uncertainty and the correlation of valuations.

To start the analysis, we can incorporate the idea of correlation (at an interim stage) by allowing for the lower and upper bounds $G_0, G_1$ to depend on $x_D$. In that case we write $G_0^{x_D}, G_1^{x_D}$. As always, the chooser is not affected by any uncertainty as she has all payoff relevant information after the price announcement. The divider takes into account for her price announcement while facing a seemingly more complicated problem. Fortunately, Theorem \ref{Theorem:divider characterization} is a point-wise statement and thus still remains valid, provided $G_i^{x_D}$ ($i=0,1$) fulfill the assumptions from Section \ref{Section: Optimal Price Announcement}. We only need to take into account the changes in $G_i^{x_D}$ and thereby $m_{\pi_{G_i^{x_D}}}$ with varying $x_D$.

For instance, a divider with valuation $x_D$ might expect the chooser's valuation to be drawn from a distribution $F^{x_D}$ with mode $x_D$ to emphasize that she considers $x_D$ to be the most likely valuation of the chooser. As before, she might want to account for a robust approach by means of a distribution band $\cG(F^{x_D},\epsilon)$. Note that the induced upper and lower bounds now depend on both $x_D$ and $\epsilon$.

The following example illustrates the impact of correlation by using triangular distributions.
\begin{example}\label{Example: correlation}
For $c \in [0,1]$ let $\Tri^c$ denote the triangular distribution with mode $c$ on $[0,1]$. Let the divider have valuation $x_D$. She might believe the chooser's valuation to be close to her own by considering a triangular distribution $\Tri^{x_D}$ with mode $c=x_D$. Accounting for uncertainty, she faces the distribution band $\cG(\Tri^{x_D},\epsilon)$. Now, applying Theorem \ref{Theorem:divider characterization} for $\epsilon = \tfrac{1}{5}$ we find that not matter $x_D$ the divider will always fully hedge herself, i.e. announce the price $\tfrac{x_D}{2}$. In comparison, Figure \ref{Figure: Triangular without correlation for c=1/2} illustrates the optimal price announcement for the case of a triangular distribution with a fixed and thus uncorrelated mode $c=\tfrac{1}{2}$.

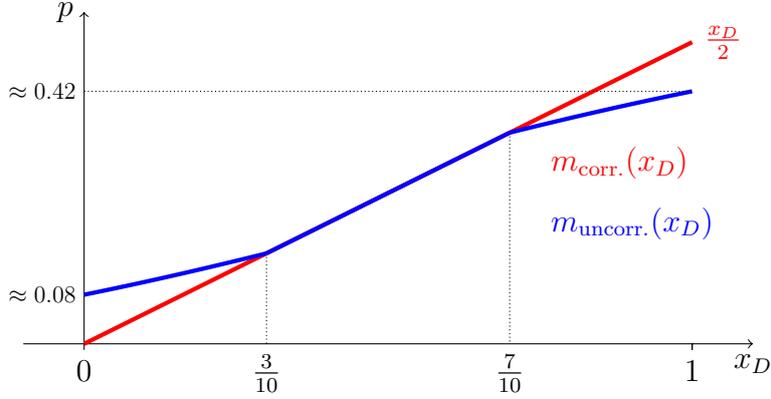
\begin{figure}
    \begin{center}
\begin{tikzpicture}[scale=8]
\draw[<->] (0,0.55)node[left]{$p$}--(0,0)--(1,0)--(1,-0.01)node[below]{$1$}--(1,0)--(1+1/10,0)node[below]{$x_D$};
\draw (-1/10,0)--(0,0);
\draw (0,0)--(0,-1/100)node[below]{$0$};

\draw[densely dotted](3/10,3/20)--(3/10,0)node[below]{$\tfrac{3}{10}$};
\draw[densely dotted](7/10,7/20)--(7/10,0)node[below]{$\tfrac{7}{10}$};

\draw[densely dotted](0,0)--(3/10,3/20);
\draw[densely dotted](7/10,7/20)--(1,1/2);
\draw[ultra thick,red](0,0)--(1,1/2)node[right]{$\tfrac{x_D}{2}$};
\draw[ultra thick, blue] (3/10,3/20)--(7/10,7/20);

\draw(0.75,0.3)node[right,red]{$m_{\text{corr.}}(x_D)$};
\draw(0.75,0.2)node[right,blue]{$m_{\text{uncorr.}}(x_D)$};

\draw[scale=1, domain=7/10:1, smooth, ultra thick, variable=\x, blue] plot ({\x}, {1/6*\x -1/60 * sqrt(100*\x*\x - 240*\x + 219)+2/5});
\draw[scale=1, domain=0:3/10, smooth, ultra thick, variable=\x, blue] plot ({\x}, {1/6*\x +1/60 * sqrt(100*\x*\x + 40*\x + 79)-1/15});

\draw(0,0.0814)node[left,scale=0.75]{$\approx 0.08$};
\draw[densely dotted](0,0.41853)node[left,scale=0.75]{$\approx 0.42$}--(1,0.41853);

\end{tikzpicture}
    \end{center}
    \caption{Optimal price announcement for $u = \id$ given the correlated (red) and uncorrelated (blue) distribution bands $\cG(\Tri^{x_D},\tfrac{1}{5})$ and $\cG(\Tri^{0.5}, \tfrac{1}{5})$ where $\Tri^c$ is the triangular distribution on $[0,1]$ with mode $c$.}
    \label{Figure: Triangular without correlation for c=1/2}
\end{figure}
\end{example}

Indeed, we can explicitly characterize the valuations $x_D$ for which a divider will play full-hedging in Example \ref{Example: correlation}.

\begin{proposition}\label{Proposition: correlation triangular}
Given the distribution band $\cG(\Tri^{x_D},\epsilon)$, a divider with valuation $x_D$ will play full-hedging if and only if
\begin{align*}
    \begin{cases}
    \sqrt{\frac{x_D}{2}}-\epsilon \leq x_D &, \text{if } x_D\leq \frac{1}{2}\\
    x_D \leq 1+\epsilon - \sqrt{\frac{1-x_D}{2}} &, \text{if } \frac{1}{2} \leq x_D
    \end{cases}.
\end{align*}
\end{proposition}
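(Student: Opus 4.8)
The plan is to turn the full-hedging question into the pointwise median criterion of Theorem \ref{Theorem:divider characterization} and then to compute the two relevant medians explicitly for the shifted triangular laws. Since Section \ref{Section: correlation} already argues that Theorem \ref{Theorem:divider characterization} survives into the correlated setting as a pointwise statement, the first step is to discharge its hypothesis: for every fixed $x_D$ the reference law $\Tri^{x_D}$ is a triangular distribution and hence satisfies the standard hazard rate conditions (Example \ref{Example: SHRC}), so by Lemma \ref{Lemma: suff epsilon} the boundary laws $G_0^{x_D},G_1^{x_D}$ of $\cG(\Tri^{x_D},\epsilon)$ induce strictly quasiconcave payoffs and Assumption \ref{AssGeneral} holds at this $x_D$. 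Theorem \ref{Theorem:divider characterization} then gives, pointwise, that the divider full-hedges if and only if
\[
    \mu_{G_1^{x_D}} \;\le\; x_D \;\le\; \mu_{G_0^{x_D}},
\]
and because these distributions have positive density their medians are unique, so the equivalence in the proposition reduces to rewriting this two-sided inequality. The \emph{if and only if} therefore needs no extra argument beyond the exactness of the theorem's characterization.

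The second step locates the medians of the boundary laws. On the interior of $X$ one has $G_0^{x_D}(x)=\Tri^{x_D}(x-\epsilon)$ and $G_1^{x_D}(x)=\Tri^{x_D}(x+\epsilon)$, so the medians are just the $\epsilon$-translates of the reference median $\mu:=\mu_{\Tri^{x_D}}$, namely $\mu_{G_0^{x_D}}=\mu+\epsilon$ and $\mu_{G_1^{x_D}}=\mu-\epsilon$, as long as these translates stay inside $[\xl,\xu]$. The value $\mu$ itself is read off by solving $\Tri^{x_D}(\mu)=\tfrac{1}{2}$ against the piecewise-quadratic CDF, which is explicit on each of its two branches. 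I would be careful here with the atoms that truncation places at $\xl$ and $\xu$: once $\mu-\epsilon$ would drop below $\xl$ (resp. $\mu+\epsilon$ rise above $\xu$), the corresponding endpoint carries at least half the mass, the median is pinned at the endpoint, and the associated inequality then holds automatically. This degenerate regime should be handled separately and matches the large-$\epsilon$ situation already flagged after Example \ref{Example: correlation}, where full-hedging is forced for every valuation.

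The final step decides which inequality is active and simplifies. For a triangular law the median lies strictly between the mode and $\tfrac{1}{2}$; with mode $x_D$ this means $x_D<\mu<\tfrac{1}{2}$ when $x_D<\tfrac{1}{2}$ and $\tfrac{1}{2}<\mu<x_D$ when $x_D>\tfrac{1}{2}$. Hence for $x_D\le\tfrac{1}{2}$ the upper bound $x_D\le\mu+\epsilon$ is slack and only the lower one $\mu-\epsilon\le x_D$ can bind, while for $x_D\ge\tfrac{1}{2}$ the roles reverse and only $x_D\le\mu+\epsilon$ binds; substituting the branch-appropriate closed form for $\mu$ into the single binding inequality and rearranging yields the two displayed conditions, the square-root terms being exactly the median expressions. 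I expect the main obstacle to be the boundary bookkeeping: confirming that the atom-induced pinning of the medians is consistent with the stated inequalities across all admissible $\epsilon$ (including the $\epsilon\ge\tfrac{1}{2}$ degenerate case), and verifying that the case split at $x_D=\tfrac{1}{2}$ is paired with the correct square-root branch of the median in each of the two final formulas.
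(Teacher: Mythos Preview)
Your proposal is correct and follows essentially the same route as the paper: invoke Theorem~\ref{Theorem:divider characterization} pointwise (after verifying Assumption~\ref{AssGeneral} via the SHRC for triangular laws and Lemma~\ref{Lemma: suff epsilon}), compute the medians of the boundary distributions as $\epsilon$-shifts of the triangular median, and then observe that one of the two inequalities $\mu_{G_1^{x_D}}\le x_D\le \mu_{G_0^{x_D}}$ is automatically satisfied on each side of $x_D=\tfrac{1}{2}$ so that only the other one remains. Your closing caveat about pairing the case split with the correct square-root branch is well placed and is exactly the bookkeeping the paper carries out explicitly; your extra discussion of the truncation atoms for large $\epsilon$ goes slightly beyond what the paper writes but is harmless.
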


Example \ref{Example: correlation} illustrates that less uncertainty is needed for the divider to play safely if the chooser's valuation is believed to be correlated and close to her own valuation with a high probability. In fact, full-hedging for all valuations will already be played by the divider for $\epsilon \geq \tfrac{1}{8}$ in the correlated case whereas $\epsilon$ needs to exceed $\tfrac{1}{2}$ in the uncorrelated setting to generate this behavior. Intuitively, this observation arises as the medians of the stochastic dominated and stochastic dominant distribution of the distribution band lie around $x_D$ for many more valuations in the correlated case than in an uncorrelated one. In a nutshell, for the Texas Shoot-Out to generate an efficient outcome or prevent a premature exit less (while still some) uncertainty is needed if correlation between valuation seems reasonable.

\subsection{Full Uncertainty on Intervals}\label{Section: Uncertainty only about valuations}

So far, we modeled uncertainty as imprecise information about the distribution from which valuations are drawn. It might seem more natural to instead directly assume imprecise probabilistic information about the valuations, e.g. by believing a valuation to lie within an interval $[a,b] \subset [\xl,\xu]$. Our framework is flexible enough to accommodate this setting. By assuming $G_1$ to be the (discontinuous) distribution function that assigns point-mass $1$ to $a$ and $G_0$ to do so at $b$, we have $G(x) = 0$ if $x<a$, $G(x) = 1$ if $x>b$ and $0 \leq G(x) \leq 1$ if $x \in [a,b]$ for all $G \in \cG$. Especially, $\mu_{G_1} = a, \mu_{G_0} = b$. Note that $\pi_{G_0}(. \mid x_D), \pi_{G_1}(. \mid x_D)$ are both strictly quasiconcave with $m_{\pi_{G_0}}(x_D) = b, m_{\pi_{G_1}}(x_D) = a$ while generically not being continuous in that respective point. Even more, $\pi_{G_0}(. \mid x_D)$ resp. $\pi_{G_{1}}(. \mid x_D)$ attain their maximum if and only if $b \leq x_D$ resp. $a \leq x_D$. However, Theorem \ref{Theorem: interim EU comparison} still reveals 'the' optimal price announcement:

The divider will announce a price $\tfrac{x_D}{2}$ if $x_D \in [a,b]$ and offer $\tfrac{b}{2}$ if $x_D>b$. For $x_D < a$, the supremum of the divider's payoff function is $\lim_{p \nearrow a/2} \pi(p \mid x_D)$, which cannot be attained. Practically however, she can announce a lower price as close to $\tfrac{a}{2}$ as possible, while still giving the divider an incentive to sell. Note that this issue is related to assuming the chooser's strategy to be 'sell' if and only if $x_C-p \leq p$. If we instead take the strict inequality, the problem shifts towards valuations above the median. 

Intuitively, the divider faces full uncertainty about $x_C$ belonging to $[a,b]$ and thus plays full-hedging if $2p \in [a,b]$. If the divider has a valuation $x_D \geq b$, she is certain to be more interested in buying the company and will ensure this outcome by offering the lowest price for which the chooser is selling her shares for sure, which is $b/2$. Vice versa, for a valuation $x_D < a$, she wants to keep the price as high as possible, while still ensuring the chooser to buy her shares. She can do so by announcing a price (infinitesimal below) $a/2$. Figure \ref{Figure: Optimal price announcement full uncertainty on interval} depicts this observation.

\begin{figure}
    \centering
    \begin{tikzpicture}[scale=7]
    \draw[<->] (0,0.45)--(0,0)node[below]{$\xl$}--(1,0)--(1,-0.01)node[below]{$\xu$}--(1,0)--(1.1,0)node[below]{$x_D$};
    \draw[densely dotted](0.2,0)node[below]{$a$}--(0.2,0.1);
    \draw[densely dotted](0.7,0)node[below]{$b$}--(0.7,0.35);
    \draw[densely dotted](0,0.35)node[left]{$\frac{b}{2}$}--(0.7,0.35);
    \draw[blue,dashed](0,0.1)node[left,black]{$\frac{a}{2}$}--(0.2,0.1);
    \draw[blue](0.2,0.1)--(0.7,0.35)--(1,0.35);
    \draw[blue](0.9,0.3)node[]{$m(x_D)$};
    \end{tikzpicture}
    \caption{Optimal price announcement for full uncertainty on $[a,b]$. Between $\xl$ and $a$ the supremum of the worst-case EU cannot be attained, but approximated for $p$ close to, but below $\tfrac{a}{2}$.}
    \label{Figure: Optimal price announcement full uncertainty on interval}
\end{figure}
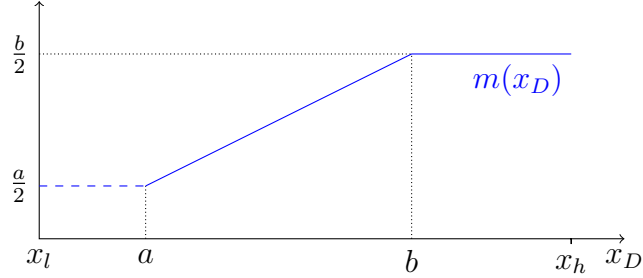

More generally, we can derive an 'approximate' optimal price announcement by invoking the interplay of results from the Appendix (Proposition \ref{Proposition: stict quasi concave minimum mh} and Lemma \ref{Lemma: median and optimal price}) with limited assumptions on the regularity of the distribution functions as long as they induce strictly quasiconcave utility functions. The main analytic issue is just whether the supremum can be attained. By announcing a price that achieves a worst-case expected payoff arbitrarily close to the supremum value, Theorem \ref{Theorem:divider characterization} keeps its simple structure and can be applied in real world scenarios.

\section{Conclusion}\label{Section: Conclusion}
When founding a co-owned company it is preventative and advantageous to  agree in advance on a dissolution mechanism in case things get sour. Any such dissolution mechanism should be readily available, simple,  and it should  discourage a selfish premature termination of the partnership. The so-called Texas Shoot-Out is a well-known example of such an exit mechanism for two co-owners. The co-owner initiating the mechanism announces a price for the sole ownership of the company while the other can choose to sell or buy the company at that price. While simple and independent of external outside options, it is notorious for its deterrent effect on a premature dissolving.

This article invokes Knightian Uncertainty as an explanatory source for this discouragement. Having in mind that co-owners have some idea about the distribution of their partner, we allow for any degree of confidence in a reference distribution $F$ by modeling the uncertainty as an distribution band around $F$ that can range anywhere in between the Bayesian single prior setting towards one of full uncertainty. If the induced worst-case expected utility functions are strictly quasiconcave, we derive the co-owners' optimal actions and interim expected payoffs. Our main result is the explicit characterization of the the divider's price announcement which is a surprising mixture of the optimal strategies under no and full uncertainty: She will play cautiously for valuations close to the median valuation of the reference distribution while still trying to generate a revenue that exceeds half her valuation for low or high valuations. Hence, only co-owners with low/high valuations have a material incentive to initiate a Texas Shoot-Out. However, these co-owners strictly prefer the other co-owner to trigger the exit mechanism as they otherwise find themselves in an unfavorable position (being forced to leave the company resp. take over the sole ownership). Welfare is improved as uncertainty increases efficiency, i.e., for more valuation profiles the co-owner with higher valuation becomes the sole owner. These desirable consequences are already visible for small levels of uncertainty and even enhanced if there is good reason to believe the valuations to be positively correlated. Knightian Uncertainty can thus explain why consultancies recommend to include the Texas Shoot-Out in buy-sell agreements.

\newpage
{\footnotesize
\bibliographystyle{econometrica}
\bibliography{literature}
}
\newpage

\appendix
\setcounter{cor}{0}
\renewcommand{\thecor}{\Alph{section}\arabic{cor}}
\setcounter{lemma}{0}
\renewcommand{\thelemma}{\Alph{section}\arabic{lemma}}
\setcounter{proposition}{0}
\renewcommand{\theproposition}{\Alph{section}\arabic{proposition}}
\setcounter{definition}{0}
\renewcommand{\thedefinition}{\Alph{section}\arabic{definition}}

\section{Quasiconcavity}\label{Appendix: Quasiconcavity}
In the following, we give formal definitions of the concept of quasiconcave functions and some of their properties relevant to our analysis of the Texas Shoot-Out.

\begin{definition}[Quasiconcavity]
Let $S \subseteq \R^L$ be a convex set. A function $f \colon S \to \R$ is called \emph{quasiconcave} if one of the following equivalent statements holds true:
\begin{enumerate}[(i)]
    \item For any $x,x' \in S$ and any $\lambda \in (0,1)$ we have $f(\lambda x + (1-\lambda) x') \geq \min \{ f(x),f(x') \}$.\label{def: qc min}
    \item Every super level set is convex, i.e. for any $\xi \in \R$ the set $\{ s \in S \, | \, f(s) \geq \xi \}$ is convex.\label{def: qc superlevel}
\end{enumerate}
If $L = 1$ and $S$ is an interval, there is another equivalent statement that describes the shape of a quasiconcave function. For ease of exposition we stick to a compact interval $S = [a,b]$ in the following. However, the given proof works as well for half-open or open ones by just replacing the respective parentheses accordingly.
\begin{enumerate}[(i)]
    \setcounter{enumi}{2}
    \item There exists a number $m \in [a,b]$ such that (at least) one of the following properties hold:
    \begin{enumerate}[(a)]
        \item $f \res_{[a,m]}$ is increasing and $f \res_{(m,b]}$ is decreasing.\label{def: mon left closed}
        \item $f \res_{[a,m)}$ is increasing and $f \res_{[m,b]}$ is decreasing.\label{def: mon right closed}
    \end{enumerate}\label{def: qc monotone}
\end{enumerate}

It is worth noting in the above cases that both, the restriction of $f$ to the empty set as well as to a single point are considered (strictly) increasing as well as (strictly) decreasing\footnote{E.g. a function $f \colon X \to Y$ is increasing by definition if $\forall x,y \in X (x \leq y \Rightarrow f(x) \leq f(y))$ which is true if $X$ is a singleton or empty. A consequence of this is that a non-increasing function has at least two elements in its domain.}. Especially, if $m=b$ in \eqref{def: mon left closed} holds, this simply means that $f$ is increasing. Similarly, $m=a$ in \eqref{def: mon right closed} corresponds to $f$ being decreasing.

\begin{proof}
We first prove the more general equivalence of \eqref{def: qc min} and \eqref{def: qc superlevel} and will then show \eqref{def: qc min}$\Leftrightarrow$\eqref{def: qc monotone}.
\begin{itemize}
    \item \eqref{def: qc min}$\Rightarrow$\eqref{def: qc superlevel}:\\
    Let $\xi \in \R$ and let $x,x' \in \{ s \in S \, | \, f(s) \geq \xi \}$. Consider any convex combination $x'' := \lambda x + (1-\lambda)x'$. By \eqref{def: qc min} $f(x'') \geq \min \{ f(x),f(x') \} \geq \xi$.
    
    \item \eqref{def: qc superlevel}$\Rightarrow$\eqref{def: qc min}:\\
    Define $\xi := \min\{ f(x), f(x') \}$. Then $\{ s \in S \, | \, f(s) \geq \xi \}$ is convex and contains $x,x'$. Thus, it also contains $\lambda x + (1-\lambda)x'$ for any $\lambda \in (0,1)$ and by definition $f(\lambda x + (1-\lambda)x') \geq \xi = \min \{ f(x), f(x') \}$.
    
    \item \eqref{def: qc monotone}$\Rightarrow$\eqref{def: qc min}:\\
    If $m \in \{a,b\}$, $f$ surely fulfills \eqref{def: qc min}. For $m \in (a,b)$ consider now an $f$ that has property \eqref{def: mon left closed} or \eqref{def: mon right closed}. Let $x,x' \in [a,b]$, wlog $x<x'$ and $\lambda \in (0,1)$. 
    If $\widetilde{x} := \lambda x + (1-\lambda) x' < m$, we clearly have $f(\widetilde{x}) \geq f(x) \geq \min\{ f(x),f(x') \}$ since $f$ is increasing on $[a,\widetilde{x}] \subseteq [a,m)$. If $m < \widetilde{x}$, we find $f(\widetilde{x}) \geq f(x') \geq \min \{ f(x),f(x') \}$ as $f$ is decreasing on $[\widetilde{x},b] \subseteq (m,b]$. Finally, if $\widetilde{x} = m$ we find $f(\widetilde{x}) \geq f(x) \geq \min \{f(x),f(x')\}$ if condition \eqref{def: mon left closed} holds and $f(\widetilde{x}) \geq f(x') \geq \min \{f(x),f(x')\}$ if condition \eqref{def: mon right closed} holds. Thus, \eqref{def: qc min} is true.
    \item  \eqref{def: qc min}$\Rightarrow$\eqref{def: qc monotone}:\\
    By means of contraposition: Assume the negation of \eqref{def: qc monotone} which implies that for each $m \in [a,b]$ at a time the restriction $f \res_{[a,m)}$ is not increasing or $f \res_{(m,b]}$ is not decreasing (or both).
    
    First suppose that there is an $m$ such that both $f \res_{[a,m)}$ is not increasing and $f \res_{(m,b]}$ is not decreasing. This implies the existence of $x<x' \leq m \leq x'' < x'''$ with $f(x)>f(x')$ and $f(x'')<f(x''')$. Especially, $\min\{ f(x'),f(x'') \} < \min \{ f(x),f(x''') \}$, showing, that \eqref{def: qc min} does not hold. A similar argument is valid for the negation of \eqref{def: mon right closed}.
    
    Second, consider the case where for each $m$ at a time either $f \res_{[a,m)}$ is not increasing or $f \res_{(m,b]}$ is not decreasing but \emph{never both}. Equivalently, for each $m$ at a time $f \res_{[a,m)}$ is increasing or $f \res_{(m,b]}$ is decreasing but never both.
    The set $M_i := \{ m \in [a,b] \, | \, f \res_{[a,m)} \text{ is increasing} \}$ is non-empty since formally $a \in M_i$. Thus, $m_i := \sup M_i$ is well-defined. Note that $a<m_i$ since $a = m_i$ implies that $f$ obeys \eqref{def: mon left closed} which was excluded. Likewise, $m_i<b$, since otherwise $f$ would fulfill \eqref{def: mon right closed}. Thus, for the infimum $m_d := \inf M_d$ for $M_d := \{ m \in [a,b] \, | \, f\res_{(m,b]} \text{ is decreasing} \}$ is well-defined. In the considered case and by their definitions we must have $m_i \leq m_d$ and even $m:= m_i = m_d \in (a,b)$. This means that $f\res_{[a,m)}$ is increasing and $f \res_{(m,b]}$ is decreasing. Since neither \eqref{def: mon left closed} nor \eqref{def: mon right closed} are true by assumption, $\sup_{x < m}f(x) > f(m)$ and $f(m) < \inf_{m < x}f(x)$. Thus there are $x < m < x'$ such that $f(m) < \min \{f(x),f(x')\}$, showing that \eqref{def: qc min} does not hold.
    \end{itemize}
\end{proof}
\end{definition}

\begin{definition}[Strict quasiconcavity] \label{Definition: Strict quasiconcavity}
Let $S \subseteq \R^L$ be a convex set. A function $f \colon S \to \R$ is called \emph{strictly quasiconcave} if
\begin{enumerate}[(i')]
    \item For any $x,x' \in S, x < x'$ and any $\lambda \in (0,1)$ we have $f(\lambda x + (1-\lambda) x') > \min \{ f(x),f(x') \}$.\label{def: strict qc min}
\end{enumerate}
If $L = 1$ and $S$ is an interval, the following assertion (stated for a compact interval $S=[a,b]$) is an additional characterization of strict quasiconcavity.
\begin{enumerate}[(i')]
    \setcounter{enumi}{2}
    \item There exists a unique number $m \in [a,b]$ such that (at least) one of the following properties hold:
    \begin{enumerate}[(a)]
        \item $f \res_{[a,m]}$ is strictly increasing and $f \res_{(m,b]}$ is strictly decreasing.\label{def: strict mon left closed}
        \item $f \res_{[a,m)}$ is strictly increasing and $f \res_{[m,b]}$ is strictly decreasing.\label{def: strict mon right closed}
    \end{enumerate}\label{def: strict qc monotone}
\end{enumerate}
\begin{proof}
The proof is an straightforward adjustment of the one for quasiconcavity. The uniqueness of $m$ follows as strictly quasiconcave functions can't have horizontal sections.
\end{proof}
\end{definition}

\begin{definition}
If $f$ is strictly quasiconcave and $S$ an interval, we denote the unique value that separates the increasing part of the function from the decreasing one by $m_f$.
\end{definition}
Note that if $f$ attains a global maximum (e.g. if $f$ is continuous), it must be $m_f$. If $f$ does not have a global maximum, it's supremum value is either $\lim_{x \nearrow m_f} f(x)$ or $\lim_{x \searrow m_f} f(x)$.

Obviously, strictly quasiconcave functions are quasiconcave. Also note that (strictly) concave functions are (strictly) quasiconcave, but quasiconcave functions need not be concave nor even continuous.

The following lemma shows some handy properties the minimum of two strictly quasiconcave functions.
\begin{lemma}\label{Lemma: h quasiconcave}
Let $f,g \colon S \to \R$ be strictly quasiconcave. Then the point-wise minimum $h := \min\{f,g\}$ is strictly quasiconcave as well.
Furthermore, if $S=[a,b]$ and, e.g., $m_f \leq m_g$, then $m_h = \inf \, \{ x | m_f \leq x < m_g, f(x) \leq g(x) \}\cup \{ m_g\}$.
\begin{proof}
Let $x<x'$ and $\lambda\in (0,1)$. Then
\begin{align*}
    h(\lambda x + (1-\lambda)x') &= \min \{ f(\lambda x + (1-\lambda)x'), g(\lambda x + (1-\lambda)x') \}\\
    &> \min \{ \min \{f(x),f(x')\}, \min \{ g(x), g(x') \} \}\\
    &= \min \{ \min \{ f(x),g(x) \}, \min \{ f(x'),g(x')  \}  \}\\
    &= \min \{ h(x),h(x') \}.
\end{align*}
We note that, replacing the strict inequality for a weak one, the minimum of two quasiconcave functions is also quasiconcave.

We now turn towards the points determining increasing and decreasing domains. By definition, $h$ is strictly increasing on $[a,m_f)$ and strictly decreasing on $(m_g,b]$, thus $m_h \in [m_f,m_g]$. If $m_f=m_g$, $m_h = m_f=m_g$. We thus assume $m_f < m_g$ in the following. Consider the set $M := \{ x \, | \, m_f \leq x < m_g, f(x) \leq g(x) \}$. If $M$ is empty, we have $h\res_{[m_f,m_g)} \equiv g\res_{[m_f,m_g)}$, which is increasing, and thus necessarily $m_h = m_g$. If $M$ is non-empty, let $x_0 := \inf M$ and note $x_0 < m_g$. We have $m_h \leq x_0$ as $h \res_{(x_0,m_g)} = f \res_{(x_0,m_g)}$ keeping in mind that $f$ is strictly decreasing on that domain. By definition, $m_f \leq x_0$. If $m_f = x_0$, we necessarily have $m_h = x_0 = m_f$. If $m_f < x_0$, the interval $[m_f,x_0)$ is non-empty and we find $h \res_{[m_f,x_0)} = g \res_{[m_f,x_0)}$, which is strictly increasing on the considered domain. Thus also $x_0 \leq m_h$ and hence $x_0 = m_h$.
\end{proof}
\end{lemma}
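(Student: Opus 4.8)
The plan is to treat the two claims in turn: first verify strict quasiconcavity of $h$ directly from the inequality characterization \eqref{def: strict qc min}, and then pin down the peak $m_h$ by a monotonicity analysis of $h$ on the three regions cut out by $m_f$ and $m_g$.

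For the first claim I would fix $x<x'$ in $S$ and $\lambda \in (0,1)$, set $x'' := \lambda x + (1-\lambda)x'$, and apply strict quasiconcavity of $f$ and of $g$ separately to obtain $f(x'') > \min\{f(x),f(x')\}$ and $g(x'') > \min\{g(x),g(x')\}$. Taking the minimum of the two strict inequalities and then regrouping the nested minima via $\min\{\min\{f(x),f(x')\},\min\{g(x),g(x')\}\} = \min\{h(x),h(x')\}$ gives $h(x'') > \min\{h(x),h(x')\}$, which is exactly \eqref{def: strict qc min}. This part is routine.

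For the second claim, assume $S=[a,b]$ and $m_f \leq m_g$. The starting observation is that on $[a,m_f)$ both $f$ and $g$ are strictly increasing, so $h$ is strictly increasing there, and on $(m_g,b]$ both are strictly decreasing, so $h$ is strictly decreasing there; hence $m_h \in [m_f,m_g]$. The decisive region is the overlap $[m_f,m_g)$, on which $f$ is strictly decreasing while $g$ is strictly increasing. I would introduce $d := f-g$, note that $d$ is strictly decreasing on $[m_f,m_g)$, and deduce that $M := \{x : m_f \leq x < m_g,\ f(x) \leq g(x)\} = \{x : d(x) \leq 0\}$ is an up-set, i.e.\ an interval abutting $m_g$. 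On the part of $[m_f,m_g)$ lying in $M$ one has $h=f$ (strictly decreasing), and on the complementary part $h=g$ (strictly increasing), so the threshold $x_0 := \inf M$ is the point where $h$ switches from increasing to decreasing.

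The main obstacle is turning these local monotonicity facts into the clean identity $m_h = \inf(M \cup \{m_g\})$ without invoking continuity, and here the uniqueness of the peak of a strictly quasiconcave function (already secured in the first part) is the right tool. If $M=\emptyset$, then $h=g$ is strictly increasing on all of $[m_f,m_g)$, so by uniqueness $m_h$ cannot lie in the interior and must equal $m_g$, matching $\inf(\emptyset \cup \{m_g\})$. If $M \neq \emptyset$, then $h$ is strictly decreasing on $(x_0,m_g)$, forcing $m_h \leq x_0$, while $h$ is strictly increasing on $[a,m_f)$ and on $[m_f,x_0)$, forcing $m_h \geq x_0$ by the same uniqueness argument; hence $m_h = x_0 = \inf M = \inf(M \cup \{m_g\})$, the last equality because $M \subseteq [m_f,m_g)$ makes $\inf M < m_g$. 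The degenerate case $m_f=m_g$ is subsumed, since then $[m_f,m_g)=\emptyset$, $M=\emptyset$, and the formula returns $m_g$.
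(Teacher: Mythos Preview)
Your proposal follows essentially the same route as the paper: the first claim is dispatched by the identical nested-minimum inequality, and for the second you and the paper both sandwich $m_h \in [m_f,m_g]$ via monotonicity and then case-split on whether $M$ is empty, using that $h=g$ to the left of the threshold and $h=f$ to its right. Your explicit introduction of $d=f-g$ to justify that $M$ is an up-set is a small clarification the paper leaves implicit when it asserts $h\res_{(x_0,m_g)}=f\res_{(x_0,m_g)}$, but the structure and level of rigor are otherwise the same.
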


Note that in a setting with continuous strictly quasiconcave functions, the proposition simplifies: The point $m_h$ is the minimum among $m_f,m_g$ and (if it exists) the unique intersection point between these two.

For our purposes, the following special case is of good use for our main theorem:
\begin{proposition}\label{Proposition: stict quasi concave minimum mh}
Let $f,g$ be strictly quasiconcave and assume that $h := \min \{ f,g\}$ has the following form for some $x_0$
\begin{align*}
    h(x) = \begin{cases}
    g(x) &, x<x_0,\\
    f(x) = g(x) &, x=x_0,\\
    f(x) &, x_0<x
    \end{cases}.
\end{align*}
Then,
\begin{align*}
    m_h = \begin{cases}
    m_f &, x_0 < m_f,\\
    x_0 &, m_f \leq x_0 \leq m_g,\\
    m_g &, m_g < x_0
    \end{cases}.
\end{align*}
\begin{proof}
We first note that by Lemma \ref{Lemma: h quasiconcave} $h$ is strictly quasiconcave.
Now observe that the inequality $m_g < x_0 < m_f$ is impossible: Otherwise, $h = g$ is strictly decreasing on $(m_g,x_0)$ while $h = f$ is strictly increasing on $(x_0,m_f)$ - a contradiction to $h$ being strictly quasiconcave.

Hence, it remains to investigate the cases where $x_0 \leq m_g$ or $m_f \leq x_0$ (or both) which defines three distinct cases:
\begin{enumerate}[(i)]
    \item If both are true, i.e. $m_f \leq x_0 \leq m_g$, we find that $h=f$ is strictly increasing on $(a,x_0)$\footnote{If $a=x_0$, the interval $(a,x_0)$ is empty, but the statement is still true. The same argument applies for $x_0=b$.} while $h=g$ is strictly decreasing on $(x_0,b)$. Thus, $m_h = x_0$.
    \item If $x_0 \leq m_g$ and $x_0 < m_f$, we find that $h=f$ is strictly increasing on $(x_0,m_f)$ while strictly decreasing on $(m_f,b)$. Thus, $m_h=m_f$.
    \item If $m_f \leq x_0$ and $m_g < x_0$, we find that $h=g$ is strictly increasing on $(a,m_g)$ while strictly decreasing on $(m_g,x_0)$. Thus, $m_h = m_g$.
\end{enumerate}
\end{proof}
\end{proposition}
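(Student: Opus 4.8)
The plan is to combine the piecewise description of $h$ with the monotonicity characterization of strictly quasiconcave functions on an interval (Definition \ref{Definition: Strict quasiconcavity}). First I would invoke Lemma \ref{Lemma: h quasiconcave} to conclude that $h=\min\{f,g\}$ is itself strictly quasiconcave, so a unique separating point $m_h$ exists and the entire task reduces to locating the single transition of $h$ from its strictly increasing part to its strictly decreasing part. The key structural fact I would use repeatedly is that $h$ coincides with $g$ to the left of $x_0$ and with $f$ to the right of $x_0$, so the monotonicity of $h$ on $(a,x_0)$ is dictated by the position of $m_g$ and on $(x_0,b)$ by the position of $m_f$.

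The crucial preliminary step is to rule out the configuration $m_g<x_0<m_f$. Here I would argue by contradiction: if $m_g<x_0$ then $h=g$ is strictly decreasing on $(m_g,x_0)$, while if $x_0<m_f$ then $h=f$ is strictly increasing on $(x_0,m_f)$; hence $h$ would first decrease and then increase as $x$ crosses $x_0$, which is incompatible with the increase-then-decrease shape of a strictly quasiconcave function. Therefore at least one of $x_0\le m_g$ or $m_f\le x_0$ must hold, and this dichotomy partitions the remaining possibilities into exactly the three cases of the statement.

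In each case I would exhibit a single separating point and verify strict monotonicity on either side of it. If $m_f\le x_0\le m_g$, then $h=g$ is strictly increasing on $(a,x_0)$ (because $x_0\le m_g$) while $h=f$ is strictly decreasing on $(x_0,b)$ (because $m_f\le x_0$), forcing $m_h=x_0$. If $x_0<m_f$, which by the excluded configuration already entails $x_0\le m_g$, then $h=g$ is strictly increasing up to $x_0$ and $h=f$ stays strictly increasing on $(x_0,m_f)$ before turning strictly decreasing on $(m_f,b)$, so $m_h=m_f$. The case $m_g<x_0$, which entails $m_f\le x_0$, is symmetric: $h=g$ is strictly increasing on $(a,m_g)$ and strictly decreasing on $(m_g,x_0)$, and $h=f$ remains strictly decreasing on $(x_0,b)$, giving $m_h=m_g$. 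Throughout I would keep in mind the convention that degenerate intervals such as $(a,x_0)$ with $a=x_0$ are empty and cause no issue.

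I expect the only genuinely substantive step to be the exclusion of $m_g<x_0<m_f$; once that impossible pattern is removed, the three-case conclusion follows almost mechanically. The remaining care is purely bookkeeping: tracking which of $f$ or $g$ governs each subinterval and checking that the individual monotone pieces glue together into one global increase-then-decrease pattern with the claimed peak.
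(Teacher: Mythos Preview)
Your proposal is correct and follows essentially the same route as the paper: invoke Lemma \ref{Lemma: h quasiconcave} for strict quasiconcavity of $h$, exclude the configuration $m_g<x_0<m_f$ by the decrease-then-increase contradiction, and then treat the three remaining cases by reading off the monotonicity of $g$ on $(a,x_0)$ and of $f$ on $(x_0,b)$. If anything, your version is slightly more explicit than the paper's, since you also record the monotonicity on the complementary half-interval in cases (ii) and (iii) and you keep the $f$/$g$ labels consistently aligned with the piecewise definition of $h$.
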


\section{Proofs}
\setcounter{cor}{0}
\renewcommand{\thecor}{\Alph{section}\arabic{cor}}
\setcounter{proposition}{0}
\setcounter{lemma}{0}
\renewcommand{\thelemma}{\Alph{section}\arabic{lemma}}
\setcounter{proposition}{0}
\renewcommand{\theproposition}{\Alph{section}\arabic{proposition}}
\setcounter{definition}{0}
\renewcommand{\thedefinition}{\Alph{section}\arabic{definition}}

\begin{proof}[Proof of Lemma \ref{Lemma: Suff McAfee}]
The proof essentially follows the lines of Lemma 7 in \cite{mcafee92}. It is worth mentioning that there's a typo at the second condition in the reference which is clear from their proof. Also note that the 'price' in \cite{mcafee92} refers to the claimed overall price for the indivisible object, while we refer to it as the value of one's or the other's share which is half of that.

We start by calculating
\begin{align*}
    &\frac{\partial}{\partial p} \pi_F(p \mid x_D)\\
    =& f(2p) \cdot \left\{ 2u(x_D-p) -2u(p)  + \frac{F(2p)}{f(2p)} \cdot \left( -u'(x_D-p) - u'(p) \right) + \frac{u'(p)}{f(2p)}\right\}.
\end{align*}
Now, the second derivative\footnote{In fact, for the proof it suffices to have second left and right derivatives of $F$ resp. fulfilling the SHRC as long as the first derivative is continuous.} of $\pi_F$ with respect to $p$ and evaluated in a point $q$ such that $2q \in [\xl,\xu]$ (if it exists) with $\tfrac{\partial}{\partial p}\pi_F(p \mid x_D) \mid_{p=q} = 0$ is
\begin{align*}
    \frac{\partial^2}{\partial p^2} \pi_F(p \mid x_D) \mid_{p=q} =& - f(2q) \frac{\partial}{\partial p} \left( 2p + \frac{F(2p)}{f(2p)} \right) \mid_{p=q} \cdot u'(x_D-q)\\
    &- f(2q) \frac{\partial}{\partial p} \left( 2p- \frac{1-F(2p)}{f(2p)}  \right) \mid_{p=q} \cdot u'(q)\\
    &+ u''(x_D-q) F(2q) + u''(q) (1-F(2q)) < 0.
\end{align*}
Thus, any $q$ with $\tfrac{\partial}{\partial p} \pi_F(p \mid x_D) \mid_{p=q} = 0$ is an isolated local maximum. There can't be more than one such isolated local maximum $q$, since if there were (at least) two, there must be infinitely many in between them: Let $q_1<q_2$ with $\tfrac{\partial}{\partial p} \pi_F (p \mid x_D) \res_{p=q_i} = 0$, then by the above calculation $\tfrac{\partial^2}{\partial p^2} \pi_F(p \mid x_D) \res_{p=q_i} <0$, thus there exists $q_1 <p_1 < p_2 < q_2$ with $\tfrac{\partial}{\partial p} \pi_F (p \mid x_D) \res_{p=q_1} <0, \tfrac{\partial}{\partial p} \pi_F (p \mid x_D) \res_{p=q_2} >0$. By continuity of $\tfrac{\partial}{\partial p} \pi_F (p \mid x_D)$ there must thus be a point $m_1 < q_3 < p_2$ with $\tfrac{\partial}{\partial p} \pi_F (p \mid x_D) \res_{p=q_3} = 0$. That way, we can construct infinitely many isolated maxima in $[q_1,q_2]$. An accumulation point of these must be a zero of the first derivative of $\pi_F$ itself be continuity of $\tfrac{\partial}{\partial p} \pi_F$. But this can't be an isolated maximum - a contradiction.

Now, note that if there's a unique such $q$, then $\tfrac{\partial}{\partial p} \pi_F(p \mid x_D)$ must be positive before $q$ and negative afterwards by continuity. Thus, it's strictly quasiconcave.

Finally, if there's no such $q$, $\tfrac{\partial}{\partial p} \pi_F$ must either be positive or negative on the whole domain and thus strict quasiconcavity of $\pi_F(p \mid x_D)$ holds in both cases.
\end{proof}

\begin{proof}[Proof of Lemma \ref{Lemma: Priceinterval Bayes}]
Keep in mind that the divider can always ensure a payoff of $u(\tfrac{x_D}{2})$ by the full-hedging strategy $\bar{p} = \tfrac{x_D}{2}$. An optimal price should not be worse than this. Now, firstly consider a price offer of $2p<\xl$. As $x_C-p >p$, the chooser will take the offer and the divider faces a payoff of $u(p) < u(\tfrac{x_D}{2})$. Secondly, if a price offer with $\xu < 2p$ is made, the chooser will sell her shares, yielding a payoff of $u(x_D - p) < u(\tfrac{x_D}{2})$ for the divider. Consequently, the divider only considers prices $2p \in [\xl,\xu]$.
\end{proof}
It is worth noting that the proof of Lemma \ref{Lemma: Priceinterval Bayes} only makes use of $F$ being a distribution function, i.e. being $0$ resp. $1$ for $x<\xl$ resp. $\xu < x$, leading to the functions $u(p)$ resp. $u(x_D - p)$.\\
While we're at it, the same thought reveals that $\pi_G$ being strictly quasi concave in $p$ for $2p \in X$ (Assumption \ref{AssF}) is equivalent to strict quasiconcavity everywhere, c.f. the end of the proof of Lemma \ref{Lemma: suff epsilon}.

\begin{proof}[Proof of Proposition \ref{Proposition: McAfee}]
The proof is adapted from \cite{mcafee92}'s Lemma 7, incorporating the more general set-up using strict quasi-concavity as the main driver instead of the SHRCs.

The chooser's best response has already been discussed, thus we focus on the divider's best reply.

Note that we can restrict to an interim perspective here as we simply maximize the integrand of the expectation for each $x_D$ at a time and there's no correlation between the two valuations. Since $\pi_F(p \mid x_D)$ is strictly quasiconcave and continuous in $p$, there is a unique price announcement $m(x_D)$ for each $x_D$ maximizing the function, see Appendix \ref{Appendix: Quasiconcavity}.

Firstly, note that $\tfrac{\partial}{\partial p} \pi_F(p\mid x_D) \res_{2p = \xl} = f(\xl) \cdot ( 2 \cdot (u(x_D - \tfrac{\xl}{2}) - u(\tfrac{\xl}{2}))) + u'(\tfrac{\xl}{2}) > 0$ since $F(\xl) = 0$, $x_D \geq \xl$ and $u$ is strictly increasing with $u' >0$. Similarly, we find $\tfrac{\partial}{\partial p} \pi_F(p\mid x_D) \res_{2p = \xu} < 0$. Together with continuously differentiability, this implies that $m(x_D)$ is an interior local maximum, i.e. $\xl < 2m(x_D) < \xu$ and $\tfrac{\partial}{\partial p} \pi_F(p \mid x_D) \res_{p = m(x_D)} = 0$.

Secondly, as a zero of the derivative function, it suffices to show that $\tfrac{\partial^2}{\partial x_D \partial p} \pi_F(p \mid x_D) > 0$ to prove that $m(x_D)$ is strictly increasing. This is immediate from
\begin{equation*}
    \frac{\partial^2}{\partial x_D \partial p} \pi_F(p \mid x_D) = 2 f(2p) \cdot u'(x_D-p)-F(2p)\cdot u''(x-p) > 0,
\end{equation*}
since $f>0$ and $u$ is strictly increasing and concave with $u'>0$.

Recall that $\mu_F$ is the median of $F$, i.e. $F(\mu_F) = \tfrac{1}{2}$. Calculating $\tfrac{\partial}{\partial p} \pi_F(p \mid x_D) \res_{2p = x_D} = u'(\tfrac{x_D}{2}) \cdot (1 - 2F(x_D))$ and checking its sign, we find that $x_D \lessgtr 2m(x_D)$ if $x_D \lessgtr \mu_F$. In addition, since $\tfrac{\partial^2}{\partial p^2} \pi_F(p \mid x_D) \res_{2p = x_D} = -8f(x_D) u'(\tfrac{x_D}{2}) + u''(\tfrac{x_D}{2}) < 0$ the function $\pi_F(p \mid x_D)$ is locally strictly concave in $2p = x_D$, thus, $x_D = \mu_F$ implies $2m(\mu_F) = \mu_F$.

Finally, we turn to the relation between $2m(x_D)$ and $\mu_F$. Consider, e.g., $x_D < \mu_F$. We already know that this implies $x_D < 2m(x_D)$. By the properties of $u$ we thus have $u(x_D - m(x_D)) < u(m(x_D))$ and $u'(x_D - m(x_D)) > u'(m(x_D))$. This implies
\begin{align*}
    0 &= \frac{\partial}{\partial p} \pi_F(p \mid x_D) \res_{p = m(x_D)}\\
    &= -u'(x_D - m(x_D)) F(2m(x_D)) + u'(m(x_D)) (1-F(2m(x_D))\\
    &\phantom{==}+ 2f(2m(x_D))(u(x-m(x_D)) - u(m(x_D)))\\
    &< u'(m(x_D)) \cdot (1 - 2F(2m(x_D))),
\end{align*}
revealing $2m(x_D)< \mu_F$. A similar argument shows that $x_D > \mu_F$ implies $2m(x_D)>\mu_F$.\\

Note that the chooser's decision rule, accepting the offer if and only if $x_C-p \leq p$ for all $p$ and $x_C$, is always a best reply. Since the divider's price announcement is a unique best reply for each $x_D$ they form an interim Bayesian Nash equilibrium and thus a Bayesian Nash equilibrium in the ex-ante stage as well.\\

The statement about interim expected utility is proved exactly as Theorem 9 in \cite{mcafee92}. We will investigate a generalization of this using Knightian Uncertainty in Section \ref{Section: interim utility}.
\end{proof}

\begin{proof}[Explicit calculations to Example \ref{Example: uniform distribution}]
It is straightforward to show that the induced respective utility functions are given by
\begin{align*}
\pi_{G_0}(x_D,p) &= \begin{cases}
p &, 0 \leq 2p \leq 2\epsilon,\\
-4p^2 + 4p\epsilon + x_D(2p-2\epsilon) + p &, 2\epsilon \leq x < 1,\\
x_D-p &, 2p = 1.
\end{cases}\\
\pi_{G_1}(x_D,p) &= \begin{cases}
-4p^2 - 4p\epsilon + x_D(2p+2\epsilon) + p &, 0\leq 2p \leq 1- 2\epsilon\\
x_D-p &, 1-2\epsilon < 2p \leq 1.
\end{cases}.
\end{align*}
Note that $\pi$ is the function that stays $\pi_{G_0}$ until $2p=x_D$ and is $\pi_{G_1}$ afterwards. Moreover, on $2p \in [\xl,\xu)$ both, $\pi_{G_0}$ and $\pi_{G_1}$ are strictly quasiconcave.

The unique local maxima of the parabolas are located at $2m_0(x_D) = \tfrac{x_D}{2} + \epsilon + \tfrac{1}{4}$ and $2m_1(x_D) = \tfrac{x_D}{2} - \epsilon + \tfrac{1}{4}$, respectively. Note that $m_1(x_D) < m_0(x_D)$ and moreover $2m_1(x_D) \leq x_D \iff \tfrac{1}{2} - 2\epsilon \leq x_D$ as well as $x_D \leq 2m_0(x_D) \iff x_D \leq \tfrac{1}{2} + 2\epsilon$. The following conclusions arise:

Firstly, for $x_D \in [\tfrac{1}{2}-2\epsilon, \tfrac{1}{2} + 2 \epsilon]$ the highest value of the function $\pi$ is attained at $2p=x_D$ since it is strictly increasing for values below it and strictly decreasing for higher values of $2p$.

Secondly, for $x_D < \tfrac{1}{2}-2\epsilon$ we find $x_D < 2m_1(x_D) \leq 1$ and the maximum of $\pi$ is attained at $p=m_1(x_D)$ ($\pi_{G_0}$ is increasing, switching into $\pi_{G_1}$ and still strictly increasing until $p=m_1(x_D)$ and strictly decreasing afterwards).

Finally, for $\tfrac{1}{2}+2\epsilon < x_D$ we find $0 \leq 2m_0(x_D) < x_D$ an the maximum of $\pi$ is attained at $p=m_0(x_D)$ ($\pi_{G_0}$ strictly increases, reaches its top and starts to strictly decrease and keeps doing to after switching to $\pi_{G_1}$).
\end{proof}

\begin{proof}[Proof of Lemma \ref{Lemma: suff epsilon}]
We will indeed show, that the corresponding payoff functions are strictly quasi-concave not only on $X$ but on $\R$. We first note that the stochastic dominant/dominated distributions in $\cG(F,\epsilon)$ are given by
\begin{align*}
    G_0(x) &= \begin{cases}
    0 &, F(x-\epsilon) <0,\\
    F(x-\epsilon), &0 \leq F(x-\epsilon), x < \xu,\\
    1 &, \xu \leq x
    \end{cases}
    \\
    G_1(x) &= \begin{cases}
    0 &, x<\xl,\\
    F(x+\epsilon) &, \xl \leq x, F(x+\epsilon)<1,\\
    1 &, 1 \leq F(x+\epsilon).
    \end{cases}
\end{align*}
We first look on the resp. domains where the functional form of the distribution is given by $F(x\pm \epsilon)$. Since $F$ fulfills the SHRC (Lemma \ref{Lemma: Suff McAfee}) so do $G_0,G_1$ as
\begin{align*}
    \frac{\partial}{\partial x} \left( x + \frac{F(x \pm \epsilon)}{f(x \pm \epsilon)} \right) = \frac{\partial}{\partial x} \left( x \pm \epsilon + \frac{F(x \pm \epsilon)}{f(x \pm \epsilon)} \right) \geq 0,\\
    \frac{\partial}{\partial x} \left( x - \frac{1-F(x \pm \epsilon)}{f(x \pm \epsilon)} \right) = \frac{\partial}{\partial x} \left( x \pm \epsilon - \frac{1-F(x \pm \epsilon)}{f(x \pm \epsilon)} \right) \geq 0,
\end{align*}
by our assumption on $F$. Note that the induced payoff functions on the domains where a CDF is equal to $0$ resp. $1$ are $u(p)$ resp. $u(x_D-p)$ which are strictly increasing resp. decreasing. Thus, to ensure global strictly quasiconcavity, it we thus only need to look at the (possible) discontinuities as they could produce jumps that are not in line with Definition \ref{Definition: Strict quasiconcavity}. Specifically, it suffices to show $\lim_{2p \nearrow \xu} \pi_{G_0}(p\mid x_D) \geq \pi_{G_0}(\tfrac{\xu}{2}\mid x_D)$ and $\lim_{2p \nearrow \xl} \pi_{G_1}(p \mid x_D) \leq \pi_{G_1}(\tfrac{\xl}{2} \mid x_D)$.

Firstly, we have $\lim_{2p \nearrow \xu} \pi_{G_0}(p \mid x_D) = u(x_D-\frac{\xu}{2}) F(\xu-\epsilon) + u(\frac{\xu}{2})(1-F(\xu-\epsilon)) \geq  u(x_D - \frac{\xu}{2})=\pi_{G_0}(\frac{\xu}{2})$. Thus, $\pi_{G_0}$ is strictly quasiconcave on $\R$.

Secondly, we have $\lim_{2p \nearrow \xl} \pi_{G_1}(p \mid x_D) = u(\frac{\xl}{2}) \leq u(x_D - \frac{x}{2}) F(\xl+\epsilon) + u(\frac{\xl}{2})(1 - F(\xl + \epsilon)) = \pi_{G_1}(\frac{\xl}{2} \mid x_D)$. Thus, $\pi_{G_1}$ is strictly quasiconcave on $\R$.
\end{proof}

\begin{proof}[Calculations for Example \ref{Example: SHRC}]
It suffices to check the SHRC on the distribution's support since all of them are continuous on $\R$ (compare this to the 'glueing' argument at the end of the proof of Lemma \ref{Lemma: suff epsilon}).
\phantom{a}\\
{\bfseries Triangular distribution:}\\
On an interval $[a,b]$, the triangular distribution with mode $c \in [a,b]$ is defined by
\begin{equation*}
    F(x) = \Tri^c(x) = \begin{cases}
    0 &, x<a,\\
    \frac{(x-a)^2}{(b-a)(c-a)} &, a \leq x <c,\\
    \frac{c-a}{b-a} & x=c,\\
    1 - \frac{(b-x)^2}{(b-a)(b-c)} &, c < x \leq b,\\
    1 &, b<x.
    \end{cases}
\end{equation*}
Note that $x=c$ can be seen as a limit case and that its density $f$ is continuous on $[a,b]$. The proof of Lemma \ref{Lemma: Suff McAfee} thus reveals that it suffices to check the SHRC on the intervals $[a,x)$ and $(x,b]$ with the appropriate left and right limits/derivatives up to $x=c$. We have
\begin{align*}
    \frac{\partial}{\partial x} \left( x + \frac{F(x)}{f(x)} \right) &= \begin{cases}
    \frac{3}{2} &, a\leq x <c,\\
    \frac{3}{2} + \frac{(b-a)(b-c)}{2(b-x)^2} &, c<x\leq b
    \end{cases}
    \quad \geq 0,\\
    \frac{\partial}{\partial x} \left( x - \frac{1-F(x)}{f(x)} \right) &= \begin{cases}
    \frac{3}{2} + \frac{(b-a)(c-a)}{2(x-a)^2} &, a\leq x <c,\\
    \frac{3}{2} &, c<x\leq b
    \end{cases}
    \quad \geq 0.
\end{align*}

\phantom{a}\\
{\bfseries Truncated standard normal distribution:}\\
We have $F(x) = \tfrac{\Phi(x)-\Phi(0)}{\Phi(1)-\Phi(0)}, f(x) = \tfrac{\phi(x)}{\Phi(1)-\Phi(0)}, f'(x) = -x f(x)$. We find
\begin{align*}
    \frac{\partial}{\partial x} \left( x + \frac{F(x)}{f(x)} \right) &= 2 +x \cdot  \frac{\Phi(x) - \Phi(0)}{\phi(x)} \geq 0,\\
    \frac{\partial}{\partial x} \left( x - \frac{1- f(x)}{F(x)} \right) &= 2 - x \cdot \frac{\Phi(1)-\Phi(x)}{\phi(x)}  \geq 2 - 1 \cdot \frac{\Phi(1)-\Phi(0)}{\phi(1)} > \frac{1}{2} > 0,
\end{align*}
for $x \in [0,1]$. Similar calculations can be made for other truncated normal distributions.

\phantom{a}\\
{\bfseries A class of beta distributions:}\\
We only consider the case $\beta = 1 \leq \alpha$ as the case $\alpha=1 \leq \beta$ follows similarly. Note that $F(x) = x^{\alpha}, f(x) = \alpha x^{\alpha - 1}, f'(x) = \alpha(\alpha-1) x^{\alpha - 2}$, $x \in [0,1]$. We have
\begin{align*}
    \frac{\partial}{\partial x} \left( x + \frac{F(x)}{f(x)} \right) &= 1+ \frac{1}{\alpha} \geq 0,\\
    \frac{\partial}{\partial x} \left( x - \frac{1-F(x)}{f(x)} \right) &= 2 + \underbrace{\tfrac{\alpha - 1}{\alpha}}_{\leq 1}  \underbrace{\left(x^{-\alpha} - 1\right)}_{\geq -1}  \geq 1 > 0
\end{align*}
for all $x \in (0,1]$.

\end{proof}

The proof  of Theorem \ref{Theorem:divider characterization} is accompanied by some upfront Lemmata for a piecewise twice continuously differentiable\footnote{Not necessarily continuous itself.} distribution function $G$ with all left and right limits and strictly quasiconcave function $\pi_G(p \mid x_D) = u(x_D - p)G(2p) + u(p)(1-G(2p))$ with $m_G(x_D) := m_{\pi_G(.\mid x_D)}$.

\begin{lemma}\label{Lemma: strict concavity in 2p=xD}
$\pi_{G}(p \mid x_D)$ is strictly concave in $2p=x_D$.
\begin{proof}
Let $G=\Gp, \Gm$ denote the right resp. left limit function and $g^{\pm}$,$g'^{\pm}$ be the left resp. right first resp. second derivative of $G$. Note that the first left/right derivative is
\begin{align}
    \frac{\partial^{\pm}}{\partial p} \pi_G(p \mid x_D) =& -u'(x_D-p) G^{\pm}(2p) + u'(p)(1-G^{\pm}(2p)) \nonumber\\
    & + 2 g^{\pm}(2p)(u(x_D-p)-u(p)).\label{eq: first derivative general}
\end{align}
The second left/right derivative is
\begin{align*}
    \frac{\partial^{2 \pm}}{\partial p^2} \pi_G(p \mid x_D) = &u''(x_D-p)G^{\pm}(2p) + u''(p)(1-G^{\pm}(2p))\\
    &- 4g^{\pm}(2p) \left( u'(x_D-p) + u'(p) \right)\\
    &+ 4 g'^{\pm}(2p)(u(x_D-p)-u(p)).
\end{align*}
Evaluating in $2p=x_D$ from left and right yields
\begin{equation*}
    \frac{\partial^{2 \pm}}{\partial p^2} \pi_G(p \mid x_D) \res_{2p=x_D} = u''(\tfrac{x_D}{2}) - 8 g^{\pm}(x_D) u'(\tfrac{x_D}{2}) < 0,
\end{equation*}
proving strict concavity there.
\end{proof}
\end{lemma}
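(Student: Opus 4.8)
The plan is to compute the second derivative of the map $p \mapsto \pi_G(p \mid x_D)$ and evaluate it at $p = \tfrac{x_D}{2}$ (that is, $2p = x_D$), showing that it is strictly negative there. Because $G$ is only piecewise twice continuously differentiable and may jump, I would carry out the whole computation with the one-sided limit functions $G^{\pm}$ and the one-sided derivatives $g^{\pm}, g'^{\pm}$, so that the argument remains valid even at a point where $G$ fails to be continuous or differentiable.

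First I would differentiate $\pi_G(p \mid x_D) = u(x_D-p)G(2p) + u(p)(1-G(2p))$ once, applying the product and chain rules and tracking the factor $2$ from $\tfrac{d}{dp}G(2p) = 2g(2p)$; this reproduces the first one-sided derivative, a combination of $-u'(x_D-p)G^{\pm}(2p) + u'(p)(1-G^{\pm}(2p))$ together with the term $2g^{\pm}(2p)\bigl(u(x_D-p)-u(p)\bigr)$. Differentiating a second time yields three groups of terms: the curvature terms $u''(x_D-p)G^{\pm}(2p) + u''(p)(1-G^{\pm}(2p))$, a density term $-4g^{\pm}(2p)\bigl(u'(x_D-p)+u'(p)\bigr)$, and a term $4g'^{\pm}(2p)\bigl(u(x_D-p)-u(p)\bigr)$ carrying the (one-sided) second derivative of $G$.

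The crucial step is the evaluation at $2p = x_D$. At $p = \tfrac{x_D}{2}$ the two arguments coincide, $x_D - p = p = \tfrac{x_D}{2}$, so that $u(x_D-p) - u(p) = 0$; this annihilates the entire $g'^{\pm}$ term, removing the potentially ill-behaved second derivative of $G$. Moreover $u''(x_D-p) = u''(p) = u''(\tfrac{x_D}{2})$, while the coefficients $G^{\pm}(2p)$ and $1-G^{\pm}(2p)$ sum to $1$ irrespective of the value — and hence of any jump — of $G$ at $x_D$, and $u'(x_D-p) = u'(p) = u'(\tfrac{x_D}{2})$. What survives is simply
\begin{equation*}
\frac{\partial^{2\pm}}{\partial p^2}\pi_G(p \mid x_D)\Big|_{2p = x_D} = u''\!\left(\tfrac{x_D}{2}\right) - 8\,g^{\pm}(x_D)\,u'\!\left(\tfrac{x_D}{2}\right),
\end{equation*}
which is strictly negative since $u$ is concave ($u'' \le 0$) and strictly increasing ($u' > 0$) and $G$ has strictly positive density ($g^{\pm} > 0$).

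The point requiring care is not the differentiation itself but the bookkeeping around discontinuities: one must justify passing to one-sided derivatives and then observe that $2p = x_D$ is precisely the location at which the choice $p = \tfrac{x_D}{2}$ eliminates the two awkward contributions, namely the jump of $G$ and its one-sided second derivative. Once this cancellation is in place, strict negativity of both one-sided second derivatives establishes the claimed strict concavity at $2p = x_D$, without any regularity assumption on $G$ at that point beyond positivity of the one-sided densities.
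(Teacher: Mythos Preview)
Your proposal is correct and follows essentially the same approach as the paper: compute the one-sided first and second derivatives of $\pi_G(\cdot\mid x_D)$, evaluate at $2p=x_D$, observe that the $g'^{\pm}$ term and any jump of $G$ drop out because $u(x_D-p)-u(p)=0$ and $G^{\pm}+ (1-G^{\pm})=1$ there, and conclude strict negativity from $u''\le 0$, $u'>0$, and $g^{\pm}>0$. Your write-up is slightly more explicit about why the cancellations matter, but the argument is the same.
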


\begin{lemma}\label{Lemma: median and optimal price}
We have
\begin{equation*}
    \begin{cases}
    x_D < 2m_G(x_D) &, G(x_D) < \tfrac{1}{2},\\
    x_D = 2m_G(x_D) &, G^{-}(x_D) \leq \tfrac{1}{2} \leq G(x_D),\\
    2m_G(x_D) < x_D &, \tfrac{1}{2} < G^-(x_D),
    \end{cases}
\end{equation*}
where $\Gp = G$ resp. $\Gm$ denote the right resp. left limit function of $G$. Furthermore, if $G(x_D) < \tfrac{1}{2}$, then $\Gm(2m_G(x_D)) < \tfrac{1}{2}$ and $\tfrac{1}{2} < G^-(x_D)$ implies $\tfrac{1}{2} < \Gm(2m_G(x_D))$.
\end{lemma}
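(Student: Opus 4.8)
The plan is to read everything off from the one-sided derivatives of $\pi_G$ at the two distinguished points $p=\tfrac{x_D}{2}$ and $p=m_G(x_D)$, combining their signs with the monotonicity characterisation of strict quasiconcavity. First I would evaluate \eqref{eq: first derivative general} at $2p=x_D$. The point to notice is that $\pi_G$ is \emph{continuous} at $p=\tfrac{x_D}{2}$ even if $G$ has an atom at $x_D$: there $u(x_D-p)=u(p)=u(\tfrac{x_D}{2})$, so the two branches of $\pi_G$ coincide and the jump term $2g^{\pm}(2p)(u(x_D-p)-u(p))$ vanishes. Hence the genuine one-sided derivatives at $\tfrac{x_D}{2}$ are
$$\frac{\partial^{+}}{\partial p}\pi_G\Big|_{2p=x_D}=u'(\tfrac{x_D}{2})\bigl(1-2G(x_D)\bigr),\qquad \frac{\partial^{-}}{\partial p}\pi_G\Big|_{2p=x_D}=u'(\tfrac{x_D}{2})\bigl(1-2\Gm(x_D)\bigr),$$
whose signs, since $u'>0$, are those of $\tfrac12-G(x_D)$ and $\tfrac12-\Gm(x_D)$.

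Next I would invoke strict quasiconcavity in the form: for the separating point $m_G$, a strictly positive right derivative at a point $q$ forces $q<m_G$, a strictly negative left derivative forces $q>m_G$, and a nonnegative left together with a nonpositive right derivative forces $q=m_G$ (each follows because $\pi_G$ is strictly increasing before $m_G$ and strictly decreasing after, and $\pi_G$ is continuous at $q=\tfrac{x_D}{2}$). Applying this with $q=\tfrac{x_D}{2}$: $G(x_D)<\tfrac12$ gives a positive right derivative, hence $x_D<2m_G$; $\Gm(x_D)>\tfrac12$ gives a negative left derivative, hence $2m_G<x_D$; and $\Gm(x_D)\le\tfrac12\le G(x_D)$ gives the sign pattern forcing $2m_G=x_D$. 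This is exactly the three displayed cases.

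For the additional assertions I would instead examine the one-sided limit of $\pi_G'$ as $p\to m_G$. If $G(x_D)<\tfrac12$ then $x_D<2m_G$, so $\pi_G$ is strictly increasing up to $m_G$ and the left limit of $\pi_G'$ at $m_G$ is $\ge 0$; substituting the left values $\Gm(2m_G),\,g^{-}(2m_G)$ into \eqref{eq: first derivative general} and noting that the jump term $2g^{-}(2m_G)\bigl(u(x_D-m_G)-u(m_G)\bigr)$ is strictly negative (because $x_D-m_G<m_G$ and the density is strictly positive) yields $-u'(x_D-m_G)\Gm(2m_G)+u'(m_G)(1-\Gm(2m_G))>0$. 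Rearranging and using that $u'$ is non-increasing, so $u'(x_D-m_G)\ge u'(m_G)$, gives $\Gm(2m_G)<\tfrac12$. The case $\Gm(x_D)>\tfrac12$ is symmetric: now $x_D>2m_G$, so $\pi_G$ is strictly decreasing after $m_G$, the right limit of $\pi_G'$ at $m_G$ is $\le 0$, the jump term is strictly positive, and the analogous rearrangement shows that the relevant value of $G$ at $2m_G$ exceeds $\tfrac12$.

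The main obstacle is the bookkeeping around possible atoms of $G$. At $2p=x_D$ this is harmless because $\pi_G$ is continuous there, but at $2p=2m_G$ a jump of $G$ induces a jump of $\pi_G$ whose direction (downward when $x_D<2m_G$, upward when $x_D>2m_G$) must be tracked; one must therefore approach $m_G$ from the increasing side in the first sub-case and from the decreasing side in the second, and the sharp conclusion concerns $\Gm(2m_G)$ in the first and the right value $G(2m_G)=\Gp(2m_G)$ in the second. Whenever $G$ is continuous at $2m_G$ — in particular for $G_0,G_1$ arising from $\cG(F,\epsilon)$, whose only atoms sit at the endpoints $\xl,\xu$ — one has $\Gm(2m_G)=G(2m_G)$ and the distinction is immaterial; a small example with $u=\id$ and $G$ jumping across $\tfrac12$ precisely at $2m_G$ shows that in the decreasing sub-case it is genuinely the right value that must be used, so attention to the correct one-sided value is essential.
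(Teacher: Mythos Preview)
Your approach is the same as the paper's: evaluate the one-sided derivatives \eqref{eq: first derivative general} at $2p=x_D$ and at $2p=2m_G(x_D)$ and read off the conclusions from the signs. The two outer cases and the ``furthermore'' clauses go through exactly as you describe, and your remark that in the decreasing sub-case it is the right value $G(2m_G)$ rather than $\Gm(2m_G)$ that the argument actually controls is sharper than the paper's own wording.

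There is, however, a genuine gap in your treatment of the middle case $\Gm(x_D)\le\tfrac12\le G(x_D)$. The assertion that ``a nonnegative left together with a nonpositive right derivative forces $q=m_G$'' does \emph{not} follow from strict quasiconcavity and continuity of $\pi_G$ at $q$ alone. A strictly quasiconcave function can have both one-sided derivatives equal to zero at a point strictly to the left of its peak: take $f(x)=x^3$ on $[-1,1]$, which is strictly increasing with $m_f=1$, yet at $q=0$ both one-sided derivatives vanish. So when $G(x_D)=\tfrac12$ (right derivative $=0$) or $\Gm(x_D)=\tfrac12$ (left derivative $=0$), your sign pattern does not by itself exclude $q\neq m_G$. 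The paper closes precisely this gap via Lemma~\ref{Lemma: strict concavity in 2p=xD}, which shows that the one-sided \emph{second} derivatives of $\pi_G$ at $2p=x_D$ are strictly negative; hence whenever a first one-sided derivative vanishes there, $\tfrac{x_D}{2}$ is an isolated local maximum, and strict quasiconcavity then identifies it as $m_G$. You need to invoke that lemma (or reprove the strict local concavity) to complete the middle case.
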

\begin{proof}
See Lemma \ref{Lemma: strict concavity in 2p=xD} for the notation and recall
\begin{align*}
    \frac{\partial^{\pm}}{\partial p} \pi_G(p \mid x_D) =& -u'(x_D-p) G^{\pm}(2p) + u'(p)(1-G^{\pm}(2p))\\
    & + 2 g^{\pm}(2p)(u(x_D-p)-u(p)).
\end{align*}
Evaluating this in $2p=x_D$ (from left and right) yields
\begin{equation}\label{eq: derivatives in 2p=xD}
\frac{\partial^{\pm}}{\partial p} \pi_G \res_{2p = x_D} = u'(\frac{x_D}{2}) \cdot (1-2\Gpm(x_D)).
\end{equation}
Now if $\Gm(x_D) \leq G(x_D)<\tfrac{1}{2}$ than \eqref{eq: derivatives in 2p=xD} is positive and thus so is the first derivative in an open neighborhood of $2p=x_D$ by continuity of the resp. functions. Thus $x_D < 2m_G(x_D)$.
The additional statement follows as an adjustment of the end of the proof of Proposition \ref{Proposition: McAfee}: By strict quasiconcavity, we must have
\begin{align*}
    0 &\leq \frac{\partial^-}{\partial p} \pi_G(p \mid x_D) \res_{p = m_G(x_D)}\\
    &= -u'(x_D - m_G(x_D)) \Gm(2m_G(x_D)) + u'(m_G(x_D)) (1-\Gm(2m_G(x_D))\\
    &\phantom{==}+ 2\gm(2m_G(x_D))(u(x-m_G(x_D)) - u(m_G(x_D)))\\
    &< u'(m_G(x_D)) \cdot (1 - 2 \Gm(2m_G(x_D))),
\end{align*}
revealing $\Gm(2m_G(x_D)) < \tfrac{1}{2}$.

If $\tfrac{1}{2}< \Gm(x_D) \leq G(x_D)$, a similar argument reveals $2m_G(x_D) < x_D$ and $\Gm(2m_G(x_D)) < \tfrac{1}{2}$.

Finally, consider $\Gm (x_D) \leq \tfrac{1}{2} \leq G(x_D)$. If both inequalities are strict, we necessarily have $2m(x_D)=x_D$ by strict quasiconcavity. If $\Gm (x_D) = \tfrac{1}{2} = G(x_D)$, $2p=x_D$ must be an isolated local maximum by Lemma \ref{Lemma: strict concavity in 2p=xD} and thus $x_D = 2m_G(x_D)$ by strict quasiconcavity. Assume now exactly one inequality is strict (the other binding), e.g., $\Gm(x_D)=\tfrac{1}{2} < G(x_D)$ (the other case follows analogously). Then, $\tfrac{\partial^+}{\partial p}\pi_G \res_{2p=x_D}<0$ by equation \eqref{eq: derivatives in 2p=xD}. Furthermore, $\tfrac{\partial^-}{\partial p} \pi_G$ is positive on a neighborhood $2p \in (x_D-\delta, x_D)$ for some $\delta >0$ by Lemma \ref{Lemma: strict concavity in 2p=xD}. Thus, strict quasiconcavity reveals $2m_G(x_D) = x_D$.
\end{proof}

\begin{lemma} \label{Lemma: price announcement increasing}
The optimal price announcement $m_G(x_D)$ is increasing. If in addition $G$ is differentiable in $2m_G(x_D)$, $m_G(x_D)$ is strictly increasing in a neighborhood of $x_D$.
\begin{proof}
Recall from the proof of Lemma \ref{Lemma: strict concavity in 2p=xD} that
\begin{align*}
    \frac{\partial^{\pm}}{\partial p} \pi_G(p \mid x_D) =& -u'(x_D-p) \Gpm(2p) + u'(p)(1-\Gpm(2p))\\
    & + 2 \gpm(2p)(u(x_D-p)-u(p)).
\end{align*}
Consequently,
\begin{align}
    \frac{\partial}{\partial x_D} \frac{\partial^{\pm}}{\partial p} \pi_G(p \mid x_D) &= -u''(x_D-p) \Gpm(2p) + 4 \gpm(2p) u'(x_D-p) \geq 0.\label{eq: derivative p and x}
\end{align}
By strict quasiconcavity we know that
\begin{align*}
    \frac{\partial^{+}}{\partial p} \pi_G(p \mid x_D) \res_{p = m_G(x_D)} &\leq0\\
    \frac{\partial^{-}}{\partial p} \pi_G(p \mid x_D) \res_{p = m_G(x_D)} &\geq0 
\end{align*}
with a strict inequality each in a small neighborhood for $m_G(x_D)<p$ resp. $p<m_G(x_D)$. Thus, the inequality \eqref{eq: derivative p and x} reveals that $m(x_D)$ is increasing, though not necessarily strictly increasing. But it is strictly increasing if $\gm>0$ (making \eqref{eq: derivative p and x} strict) and $\tfrac{\partial^{-}}{\partial p} \pi_G(p \mid x_D) \res_{p = m_G(x_D)} = 0$ which are implied by, e.g., $G(2p)$ being differentiable in $p = m_G(x_D)$ with positive derivative. If $G$ is even continuously differentiable with positive derivative, applying the implicit function theorem reveals continuity of $m_G$.
\end{proof}
\end{lemma}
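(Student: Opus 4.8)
The plan is to show that $\pi_G$ has (weakly) increasing differences in the pair $(p,x_D)$ and then to run a monotone-comparative-statics argument adapted to the one-sided-derivative description of the maximizer under strict quasiconcavity. First I would recall the one-sided $p$-derivative of the payoff already computed in Lemma~\ref{Lemma: strict concavity in 2p=xD},
\begin{equation*}
\frac{\partial^{\pm}}{\partial p}\pi_G(p\mid x_D) = -u'(x_D-p)\Gpm(2p) + u'(p)\bigl(1-\Gpm(2p)\bigr) + 2\gpm(2p)\bigl(u(x_D-p)-u(p)\bigr),
\end{equation*}
and differentiate it once more, now with respect to $x_D$, to obtain the sum of a term $-u''(x_D-p)\Gpm(2p)$ and a positive multiple of $\gpm(2p)\,u'(x_D-p)$. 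Every factor here has a sign I can pin down from the standing hypotheses: $-u''\ge 0$ by concavity of $u$, $\Gpm\ge 0$ since $G$ is a distribution function, $\gpm\ge 0$ since $G$ has nonnegative density, and $u'>0$ by strict monotonicity. Hence both one-sided $p$-derivatives of $\pi_G$ are nondecreasing in $x_D$, which is exactly the increasing-differences property I need.

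Second, I would translate increasing differences into monotonicity of the argmax. By strict quasiconcavity, $m_G(x_D)$ is characterized by $\frac{\partial^-}{\partial p}\pi_G\res_{p=m_G(x_D)}\ge 0$ and $\frac{\partial^+}{\partial p}\pi_G\res_{p=m_G(x_D)}\le 0$, with strict signs just to the left and right. Fixing $x_D'>x_D$, I would evaluate the left one-sided derivative of $\pi_G(\cdot\mid x_D')$ at the old maximizer $p=m_G(x_D)$ and use increasing differences to get $\frac{\partial^-}{\partial p}\pi_G(p\mid x_D')\res_{p=m_G(x_D)}\ge \frac{\partial^-}{\partial p}\pi_G(p\mid x_D)\res_{p=m_G(x_D)}\ge 0$. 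Since a strictly quasiconcave function is strictly decreasing past its peak, a nonnegative left derivative at a point forces that point to lie at or before the peak; hence $m_G(x_D)\le m_G(x_D')$, establishing that $m_G$ is increasing.

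For the strict statement I would suppose $G$ is differentiable at $2m_G(x_D)$. Then $\Gp=\Gm$ there, the two one-sided derivatives coincide, and the maximizer satisfies the genuine first-order condition $\frac{\partial}{\partial p}\pi_G\res_{p=m_G(x_D)}=0$. Because $G$ carries a strictly positive density, $\gm(2m_G(x_D))>0$, so the cross-partial above is strictly positive at the maximizer, the increasing differences become strict, and repeating the previous step with strict inequalities yields $m_G(x_D)<m_G(x_D')$ for $x_D'$ in a one-sided neighborhood, i.e. strict monotonicity near $x_D$; if moreover $G$ is continuously differentiable with positive derivative around $2m_G(x_D)$, the implicit function theorem solves the first-order condition locally and additionally delivers continuity of $m_G$. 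The main delicacy I anticipate is that the maximizer is pinned down only through the separating-point description of strict quasiconcavity rather than a clean stationarity condition, so the comparative-statics step must be carried out with one-sided derivatives and remain valid at kinks of $G$; once increasing differences are secured for \emph{both} one-sided derivatives, however, the argument is robust to such kinks, and the sign of the cross-partial follows immediately from concavity of $u$ and nonnegativity of the density without any further computation.
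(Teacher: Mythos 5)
Your proposal is correct and takes essentially the same route as the paper: the same one-sided derivative formula from Lemma \ref{Lemma: strict concavity in 2p=xD}, the same cross-partial sign computation yielding increasing differences in $(p,x_D)$, the same translation into monotonicity of $m_G$ via the one-sided derivative characterization of the maximizer under strict quasiconcavity, and the same treatment of strictness (differentiability of $G$ plus positive density giving the genuine first-order condition and a strict cross-partial) and of continuity via the implicit function theorem. The only differences are presentational: you spell out the Topkis-style comparison at the old maximizer explicitly where the paper is terse, and your ``positive multiple'' phrasing sidesteps the paper's harmless coefficient slip ($4\gpm$ instead of $2\gpm$) in the cross-partial.
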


\begin{proof}[Proof of Theorem \ref{Theorem:divider characterization}]
Note that $g \hat{=} \pi_{G_0}$ and $f \hat{=}\pi_{G_1}$ fulfill Proposition \ref{Proposition: stict quasi concave minimum mh} from Appendix \ref{Appendix: Quasiconcavity}. Now, applying Lemma \ref{Lemma: median and optimal price} and realizing $G_i = \Gm_i$ by continuity\footnote{There is no conflict with the possible discontinuity of $G_0$ in $\xu$ as $2m_G(x_D) = \xu$ is only possible if $x_D = \xu$ and $\Gm(\xu) <1$, where the middle case is active.} yields the functional form of the optimal price announcement. Finally, Lemma \ref{Lemma: price announcement increasing} concludes the proof.
\end{proof}

\begin{proof}[Proof of Corollaries \ref{Corollary: B xm and fix points} and \ref{Corollary: B cont and str increasing}]
Most statements follow immediately from the proofs of Lemmata \ref{Lemma: median and optimal price} and \ref{Lemma: price announcement increasing} and since $x_D \mapsto \tfrac{x_D}{2}$ is continuous and strictly increasing. For the continuity of $m(x_D)$ in the 'glueing points' $x_D \in \{ \mu_{G_1}^-, \mu_{G_0}^+ \}$ for Corollary \ref{Corollary: B cont and str increasing}, we can apply a sandwich argument using Corollary \ref{Corollary: B xm and fix points} for the limits coming from outside $[\mu_{G_1}^-, \mu_{G_0}^+]$.
\end{proof}

\begin{proof}[Proof of Lemma \ref{Lemma: Worst-case distributions chooser}]
$\Phi_C(x_C) = \min_{G \in \cG} \mathbb{E}_{G} \left[ \max\{ u(x_C - m(z)), u(m(z)) \} \right]$.

If $x_C < 2m(\xl)$, the max-function will always choose $u(m(z))$. Thus, as $z \mapsto u(m(z))$ is increasing, the worst-case is attained by $G = G_1$ which is stochastically dominated by every other distribution in $\cG$ and puts most weight on the lowest price announcements. Interim worst-case EU is thus $\mathbb{E}_{G_1}[u(m(z))]$.
        
If $x_C > 2m(\xu)$, the max-function will always choose $u(x_C - m(z))$. Thus, as $z \mapsto u(x_C - m(z))$ is decreasing, the worst-case is attained by $G = G_0$, i.e. the distribution stochastically dominating all other distributions in $\cG$ and puts most weight on the highest price announcements. Interim worst-case EU is thus $\mathbb{E}_{G_0}[u(x_C - m(z))]$.

If $x_C \in [2m(\xl),2m(\xu)]$, the worst-case is putting as much weight at (and around) the valuation $z_*$ that induces a price announcement $2m(z_*) = x_C$ as possible. This value is unique since $m$ is strictly increasing by our assumptions and Corollary \ref{Corollary: B cont and str increasing}. More precisely, consider the partition $X = [\xl, z_*) \cup [z_*, \xu]$. Note that for any $x \in [\xl, z_*)$, the max-function will choose $u(x_C - m(z))$, thus the expectation is minimized by $G_0$. For $x \in [z_*, \xu]$ the max-function selects $u(m(z))$, thus the expectation is minimized by $G_1$. Piecing both cases together into a distribution function gets us $G^*_{x_C} \in \cG$ (it must be right-continuous, thus $G^*_{x_C}(z_*) = G_1(z_*)$). Note that interim worst-case EU in this case can be written as
    \begin{align*}
        &\mathbb{E}_{G^*_{x_C}}[\max\{ u(x_C - m(z)), u(m(z)) \} ]\\
        = &\int_{[\xl,z_*)} u(x_C - m(z)) \ G_0(\mathrm{d} z) + (G_1(z_*) - G_0(z_*)) \cdot u(x_C/2)\\
        & \ + \int_{(z_*,\xu]} u(m(z)) \ G_1(\mathrm{d}z).
    \end{align*}
\end{proof}

\begin{proof}[Proof of Lemma \ref{Lemma: derivatives of interim expected utility}]
Before we get into the calculations, note that by our smoothness assumptions, $m_0, m_1$ are strictly increasing in $x_D$ and continuously differentiable on the considered resp. intervals: Outside their medians, they are zeros of a continuously differentiable function $\tfrac{\partial}{\partial p} \pi_{G_i}(p \mid x_D)$ and thus continuously differentiable themselves by the implicit function theorem (see Lemma \ref{Lemma: median and optimal price}). This implies that also $m$ is continuous, strictly increasing and piecewise continuously differentiable (Lemma \ref{Lemma: median and optimal price} and Theorem \ref{Theorem:divider characterization}).

Thus, both $\Phi_D$ and $\Phi_C$ are continuous (indeed it will turn out that they are even differentiable except for $\Phi_C$ in $x = 2m(\xl)$), so it suffices to show differentiability on the corresponding open intervals. The derivatives of $\Phi_D$ for the center interval and of $\Phi_C$ for the outer ones are obvious. We now turn towards the remaining ones.

Firstly, for $G \in \{ G_0, G_1\}$ consider $\Phi_D(x) = \pi_G(m(x) \mid x)$ on the respective (open) interval. A calculation with $u = \id$ (or using the envelope theorem) reveals
\begin{align*}
    \frac{\partial}{\partial x} \Phi_D(x) = &\frac{\partial}{\partial x} \left( (x-m(x)) \cdot G(2m(x) + m(x) \cdot (1-G(2m(x))) \right)\\
    =& G(2m(x)) + \frac{\partial}{\partial x} m'(x) \cdot \underbrace{ \left(\frac{\partial}{\partial p} \pi_G(p\mid x)\right) \res_{p = m(x)}}_{= 0}\\
    =& G(2m(x)).
\end{align*}

Secondly, consider $\Phi_C$ for $x \in (2m(\xl), 2m(\xu))$ and recall $z_* = z_*(x) = m^{-1}(x/2)$. We now apply a measure theoretic version of the differentiation of parameter integrals for the following calculation. Therefore, $C := \xu + 1$ serves as a constant bounding $z_*$ from above, $\mu^G$ describes the probability measure associated to the distribution function $G \in \{G_0,G_1 \}$ with densities $g_0,g_1$, and we again use the integration variable $z$ to avoid confusion.
\begin{align*}
    &\frac{\partial}{\partial x} \Phi_C(x)\\
    =& \frac{\partial}{\partial x} \left( \int_{[\xl,z_*)} x - m(z) \ G_0(\mathrm{d} z) + (G_1(z_*) - G_0(z_*)) \cdot \frac{x}{2} + \int_{(z_*,\xu]} u(m(z)) \ G_1(\mathrm{d}z) \right)\\
    =& \ \phantom{+} \int_{[\xl,z_*)} 1 \, G_0(\mathrm{d}z) - (x - m(z_*)) \cdot \frac{\partial}{\partial x} \left( \mu^{G_0}((z_*, C))  \right)\\
    & \ + \left( g_1(z_*) \cdot \left( \frac{\partial}{\partial x} (z_*) \right) - g_0(z_*) \cdot \left( \frac{\partial}{\partial x} (z_*) \right) \right) \cdot \frac{x}{2} + (G_1(z_*) - G_0(z_*)) \cdot \frac{1}{2}\\
    & \ + \int_{(z_*,\xu]} 0 \, G_1(\mathrm{d}z) + m(z_*) \cdot \frac{\partial}{\partial x} \left( \mu^{G_1}([z_*,C]) \right).
\end{align*}
Realizing $\mu^{G_0}((z_*,C)) = 1-G_0(z_*)$ we find $\tfrac{\partial}{\partial x} \mu^{G_0}((z_*,C]) = - g_0(z_*) \cdot \frac{\partial}{\partial x} (z_*)$ and similarly for $G_1$. Since $m(z_*) = x/2$ we thus obtain
\begin{align*}
    &\frac{\partial}{\partial x} \Phi_C(x)\\
    =& G_0(z_*) -  \frac{x}{2} \cdot (-g_0(z_*) \frac{\partial}{\partial x}(z_*)) +  \left( g_1(z_*) \cdot \left( \frac{\partial}{\partial x} (z_*) \right) - g_0(z_*) \cdot \left( \frac{\partial}{\partial x} (z_*) \right) \right) \cdot \frac{x}{2}\\
    & \ + \frac{1}{2} \cdot (G_1(z_*) - G_0(z_*)) + \frac{x}{2} \cdot (-g_1(z_*) \frac{\partial}{\partial x}(z_*))\\
    =& \frac{1}{2} \cdot (G_1(z_*) + G_0(z_*)).
\end{align*}
Finally, since the derivatives are non-negative, $\Phi_D$ and $\Phi_C$ are increasing.
\end{proof}

\begin{proof}[Proof of Theorem \ref{Theorem: interim EU comparison}]\phantom{a}

{\bfseries Case $\a < \b$:}\\
Under this assumption, we have $G_1(z) - G_0(z) < 1$ for all $z \in X$. In other words, for all $z \in X$ there is no $G \in \cG$ that can assign point mass $1$ to $z$. Since $m$ is measurable, as it's strictly increasing, we find $\mu^G (\{z \, | \,m(z) = \tfrac{x}{2}\}) < 1$ for all $G \in \cG$.

We now prove that $\Phi_D(x) < \Phi_C(x)$ for all $x \in [\mu_{G_1}, \mu_{G_0}]$. On the one hand for any such $x$  we have $m(x) = \tfrac{x}{2}$ and thus $\Phi_D(x) = \tfrac{x}{2}$. On the other hand
\begin{align*}
    \Phi_C(x) = \min_{G \in \cG} \mathbb{E}_G \left[ \underbrace{\max \{ x - m(z), m(z)}_{\geq \tfrac{x}{2}} \}  \right] > \tfrac{x}{2} = \Phi_D(x),
\end{align*}
where the strict inequality results from the argument above as the max can only be $\tfrac{x}{2}$ if $2m(z) = x$.\smallskip

In the following, we compare the derivatives on the outer intervals. We start with $x \in [\xl, \mu_{G_1})$. We have the following chain of arguments
\begin{align*}
    x &<\mu_{G_1}\\
    \implies x &< 2m(2m(x)) \qquad (\text{Corollary } \ref{Corollary: B xm and fix points})\\
    \implies m^{-1}(\tfrac{x}{2}) & < 2m(x)\\
    \implies G_1(m^{-1}(\tfrac{x}{2})) &< G_1(2m(x)) \qquad (\star)\\
    \implies \frac{1}{2} \left( G_1(m^{-1}(\tfrac{x}{2})) + G_0(m^{-1}(\tfrac{x}{2}))  \right) &< G_1(2m(x)) \qquad (G_0 \leq G_1)\\
    \implies \Phi'_C(x) &< \Phi'_D(x),
\end{align*}
where we note that $\Phi'_C(x)$ might be zero, but cannot be $1$ since $x < 2m_1(x)=2m(x) < 2m(\xu)$ as $2m$ is strictly increasing. The inequality in $(\star)$ remains a strict one, since (using Corollary \ref{Corollary: B xm and fix points}) $m^{-1}(\tfrac{x}{2}) < x < \mu_{G_1} \leq \b$ and thus, by its functional form, $G_1$ is strictly increasing in a neighborhood of $m^{-1}(\tfrac{x}{2})$.\smallskip

We now turn towards the case $x \in (\mu_{G_0}, \xu]$. In analogy to the above, the following chain of arguments applies:
\begin{align*}
    \mu_{G_0} &< x\\
    \implies 2m(2m(x)) &< x\\
    \implies 2m(x) &< m^{-1}(\tfrac{x}{2})\\
    \implies G_0(2m(x)) &< G_0(m^{-1}(\tfrac{x}{2}))\\
    \implies G_0(2m(x)) &< \frac{1}{2} \left(G_0(m^{-1}(\tfrac{x}{2}) + G_1(m^{-1}(\tfrac{x}{2}) \right)\\
    \implies \Phi'_D(x) &< \Phi'_C(x),
\end{align*}
where we note that indeed $\Phi'_C(x)$ could be equal to $1$ and that applying $G_0$ preserves the strict inequality since $\a \leq \mu_{G_0} < x < m^{-1}(\tfrac{x}{2})$.\smallskip

Together, the above steps imply $\Phi_D(x) < \Phi_C(x)$ for all $x \in X$.\bigskip

{\bfseries Case $\b \leq \a$:}\\
For $x \in [\b,\a]$ we have by Theorem \ref{Theorem:divider characterization} $z_* := m^{-1}(\tfrac{x}{2}) = x$ and there is $G \in \cG$, namely $G = G^*_{x}$, with $\mu^G(\{ z_* | m(z_*) = \tfrac{x}{2} \}) = 1$. Hence, the argument of the first part of the previous case reveals that $\Phi_D(x)=\Phi_C(x) = \frac{x}{2}$ if $x = z_* \in [\b,\a]$ and $\Phi_D(x) < \Phi_C(x)$ otherwise.
\end{proof}

\begin{proof}[Explicit formulae for $\Phi_C$ in Example \ref{Example: interim wc EU}]
For $\tfrac{1}{2} < \epsilon \leq 1$ we find
\begin{align*}
    \Phi_C(x) = \begin{cases}
    \frac{1}{4} x^2 + \frac{1}{2}\epsilon x + \frac{1}{4} (1-\epsilon)^2 &, 0\leq x< 1-\epsilon,\\
    \frac{x}{2} &, 1-\epsilon \leq x \leq  \epsilon,\\
    \frac{1}{4} x^2 + \frac{1}{2}(1-\epsilon) x + \frac{1}{4}\epsilon^2 &, \epsilon < x \leq 1,
    \end{cases}
\end{align*}
while for $\tfrac{1}{4} < \epsilon \leq \tfrac{1}{2}$ it is given by
\begin{align*}
    \Phi_C(x) = \begin{cases}
     \frac{1}{8}\epsilon^2 - \frac{1}{2}\epsilon + \frac{9}{32}&, 0\leq x <2m_1(0) = \frac{1}{4}-\frac{1}{2}\epsilon,\\
     \frac{1}{2}x^2 + (\epsilon - \frac{1}{4})x + \frac{1}{2}\epsilon^2 - \frac{3}{4}\epsilon + \frac{5}{16} &, \frac{1}{4}-\frac{1}{2}\epsilon \leq x < \frac{1}{2} - \epsilon\\
    \frac{1}{4}x^2 + \frac{1}{2}\epsilon x + \frac{1}{4}\epsilon^2 - \frac{1}{2}\epsilon + \frac{1}{4} &, \frac{1}{2}-\epsilon \leq x < \epsilon,\\
    \frac{1}{2}x^2 + \frac{1}{2}\epsilon^2 - \frac{1}{2}\epsilon + \frac{1}{4} &, \epsilon \leq x \leq 1-\epsilon,\\
    \frac{1}{4}x^2 +\frac{1}{2}(1- \epsilon) x + \frac{1}{4}\epsilon^2  &, 1-\epsilon < x \leq \frac{1}{2}+\epsilon,\\
    \frac{1}{2}x^2 + (\frac{1}{4}-\epsilon)x + \frac{1}{2}\epsilon^2 + \frac{1}{4}\epsilon + \frac{1}{16} &, \frac{1}{2}+\epsilon < x \leq \frac{3}{4}+\frac{1}{2}\epsilon,\\
    x + \frac{1}{8}\epsilon^2 - \frac{1}{2}\epsilon - \frac{7}{32} &, \frac{3}{4}+\frac{1}{2}\epsilon = 2m_0(1) < x \leq 1,
    \end{cases}
\end{align*}
and for $0 \leq \epsilon \leq \tfrac{1}{4}$ we finally have
\begin{align*}
    \Phi_C(x) = \begin{cases}
    - \frac{3}{8} \epsilon^2 - \frac{1}{4}\epsilon + \frac{1}{4} &, 0 \leq x < 2m_1(0) = \frac{1}{4} - \frac{1}{2}\epsilon,\\
    \frac{1}{2} x^2 + (\epsilon - \frac{1}{4})x - \frac{1}{2}\epsilon + \frac{9}{32} &, \frac{1}{4}-\frac{1}{2}\epsilon \leq x < \frac{1}{4},\\
    x^2 + (\epsilon - \frac{1}{2})x - \frac{1}{2}\epsilon + \frac{5}{16} &, \frac{1}{4} \leq x < \frac{1}{2} - \epsilon,\\
    \frac{1}{2}x^2 - \frac{1}{2}\epsilon^2 + \frac{3}{16} &, \frac{1}{2}-\epsilon \leq x \leq \frac{1}{2} + \epsilon,\\
    x^2 - (\epsilon + \frac{1}{2})x + \frac{1}{2}\epsilon + \frac{5}{16} &, \frac{1}{2} + \epsilon < x \leq \frac{3}{4}\\
    \frac{1}{2}x^2 + (\frac{1}{4} - \epsilon)x + \frac{1}{2}\epsilon + \frac{1}{32} &, \frac{3}{4} < x \leq \frac{3}{4} + \frac{1}{2}\epsilon,\\
    x - \frac{3}{8}\epsilon^2 - \frac{1}{4} \epsilon - \frac{1}{4} &, \frac{3}{4} + \frac{1}{2}\epsilon = 2m_0(1) < x \leq 1.
    \end{cases}
\end{align*}
\end{proof}

\begin{proof}[Proof of Proposition \ref{Proposition: correlation triangular} and calculations for Example \ref{Example: correlation}]
Note that a triangular distribution $\Tri^c$ on $[0,1]$ with mode $c$ has the CDF
\begin{equation*}
    \Tri^c(x) = \begin{cases}
    \frac{x^2}{c} &, x < c,\\
    c &, x=c\\
    1-\frac{(1-x)^2}{(1-c)} & c < x.
    \end{cases}
\end{equation*}
In $\cG(\Tri^c,\epsilon)$ the medians of $G_0^c, G_1^c$ can be derived to be
\begin{align*}
    \mu_0^c &= \begin{cases}
    \sqrt{\frac{c}{2}} + \epsilon &, \frac{1}{2}\leq c,\\
    1+\epsilon - \sqrt{\frac{1-c}{2}} &, c \leq \frac{1}{2}
    \end{cases}\\
     \mu_1^c &= \begin{cases}
    \sqrt{\frac{c}{2}} - \epsilon &, \frac{1}{2}\leq c,\\
    1-\epsilon - \sqrt{\frac{1-c}{2}} &, c \leq \frac{1}{2}
    \end{cases}
\end{align*}

In the case of correlation we thus find by Theorem \ref{Theorem:divider characterization}
\begin{align*}
    \begin{cases}
    \sqrt{\frac{x_D}{2}}-\epsilon \leq x_D \leq \sqrt{\frac{x_D}{2}}+\epsilon &, \text{if } x_D=c \leq \frac{1}{2}\\
    1-\epsilon - \sqrt{\frac{1-x_D}{2}} \leq x_D \leq 1+\epsilon - \sqrt{\frac{1-x_D}{2}} &, \text{if } \frac{1}{2} \leq x_D=c
    \end{cases}.
\end{align*}
Note that for $x_D \leq \tfrac{1}{2}$ we always have $x_D \leq \sqrt{\tfrac{x_D}{2}}+\epsilon$ and for $\tfrac{1}{2} \leq x_D$ the statement $1-\epsilon - \sqrt{\tfrac{1-x_D}{2}} \leq x_D$ is always true.\\
In the case $c = \tfrac{1}{2}$ and $\epsilon = \tfrac{1}{5}$ we have $\mu_0 = \tfrac{7}{10}, \mu_1 = \tfrac{3}{10}$. The optimal price announcement outside of $[0.3,0.7]$ can again be derived by the corresponding FOCs and SOCs. Explicitly, they are given by
\begin{equation*}
    \begin{cases}
    \frac{x}{6} +\frac{1}{60} \cdot \sqrt{100x^2 + 40x + 79}-\frac{1}{15} &, x_D < \tfrac{3}{10},\\
     \frac{x}{6} - \frac{1}{60} \cdot  \sqrt{100x^2 - 240x + 219}+\frac{2}{5}&, \tfrac{7}{10} < x_D.
    \end{cases}
\end{equation*}

\end{proof}

\end{document}